\newcommand{\eq}{\leftrightarrow}
\newcommand{\imp}{\rightarrow}
\newcommand{\Imp}{\Rightarrow}
\newcommand{\Pmi}{\Leftarrow}
\newcommand{\et}{\wedge}
\newcommand{\vel}{\vee}
\renewcommand{\phi}{\varphi}
\newcommand{\union}{\cup}
\newcommand{\Union}{\bigcup}
\newcommand{\inter}{\cap}
\newcommand{\Inter}{\bigcap}
\newcommand{\M}{\widehat{K}}
\newcommand{\np}{\overline{p}}
\newtheorem{theorem}{Theorem}
\newtheorem{example}[theorem]{Example}
\newtheorem{definition}[theorem]{Definition}
\newtheorem{proposition}[theorem]{Proposition}
\newtheorem{corollary}[theorem]{Corollary}
\newtheorem{lemma}[theorem]{Lemma}
\newcommand{\lang}{\mathcal L}
\newcommand{\model}{\mathcal M}
\newcommand{\C}{\mathcal C}
\newcommand{\VV}{\mathcal V}
\newcommand{\FF}{\mathcal F}
\newcommand{\weg}[1]{}
\newcommand{\powerset}{\mathcal{P}}
\newcommand{\lbr}{[\![}
\newcommand{\rbr}{]\!]}
\newcommand{\I}[1]{\lbr #1 \rbr}
\newcommand{\sstar}{\mathsf{star}}
\newcommand{\pure}{\mathsf{pure}}
\begin{document}

\title{Wanted Dead or Alive \\ Epistemic logic for impure simplicial complexes}

\author{Hans van Ditmarsch\thanks{University of Toulouse, CNRS, IRIT, {\tt hans.van-ditmarsch@irit.fr}} \and Roman Kuznets\thanks{TU Wien, Austria, {\tt roman.kuznets@tuwien.ac.at}}}
\date{}
\maketitle

\begin{abstract} 
We propose a logic of knowledge for impure simplicial complexes. Impure simplicial complexes represent synchronous distributed systems under uncertainty over which processes are still active (are alive) and which processes have failed or crashed (are dead). Our work generalizes the logic of knowledge for pure simplicial complexes, where all processes are alive, by Goubault et al. In our semantics, given a designated face in a complex, a formula can only be true or false there if it is defined. The following are undefined: dead processes cannot know or be ignorant of any proposition, and live processes cannot know or be ignorant of factual propositions involving processes they know to be dead. The semantics are therefore three-valued, with undefined as the third value. We propose an axiomatization that is a version of the modal logic {\bf S5}. We also show that impure simplicial complexes correspond to certain Kripke models where agents' accessibility relations are equivalence relations on a subset of the domain only. This work extends conference publication \cite{Ditmarsch21}.
\end{abstract}

\section{Introduction} \label{section.introduction}

Epistemic logic investigates knowledge and belief, and change of knowledge and belief, in multi-agent systems. A foundational study is \cite{hintikka:1962}. Knowledge change was extensively modelled in temporal epistemic logics~\cite{Alur2002,halpernmoses:1990,Pnueli77,dixonetal.handbook:2015} and more recently in dynamic epistemic logics~\cite{baltagetal:1998,hvdetal.del:2007,moss.handbook:2015} including semantics based on histories of epistemic actions, both synchronously \cite{jfaketal.JPL:2009} and asynchronously \cite{degremontetal:2011}.

Combinatorial topology has been used in distributed computing to model concurrency and asynchrony since \cite{BiranMZ90,FischerLP85,luoietal:1987}, with higher-dimensional topological properties entering the scene in \cite{HS99,herlihyetal:2013}. The basic structure in combinatorial topology is the \emph{simplicial complex}, a collection of subsets called \emph{simplices} of a set of \emph{vertices}, closed under containment. Geometric manipulations such as subdivision have natural combinatorial counterparts. 

An epistemic logic interpreted on simplicial complexes has been proposed in~\cite{goubaultetal:2018,ledent:2019,goubaultetal_postdali:2021}, including exact correspondence between simplicial complexes and certain multi-agent Kripke models where all relations are equivalence relations. Also, in those works and in e.g.\ \cite{diego:2021} the action models of~\cite{baltagetal:1998} are used to model distributed computing tasks and algorithms, with asynchronous histories treated as in \cite{degremontetal:2011}. Action models also reappear in their combinatorial topological incarnations as simplicial complexes \cite{goubaultetal:2018}.

\begin{example} \label{figure.ssss}
Below are some simplicial complexes and corresponding Kripke models. These simplicial complexes are for three agents. The vertices of a simplex are required to be labelled with different agents. A simplex that is maximal under the containment relation is called a \emph{facet}. All facets below consist of $3$ vertices. 
These are the triangles in the figure. For $2$ agents we get lines/edges, for $4$ agents we get tetrahedra, etc. Such a triangle corresponds to a state in a Kripke model. A label like $0_a$ on a vertex represents that it is a vertex for agent $a$ and that agent $a$'s local state has value $0$, etcetera. We can see this as the Boolean value of a local proposition, where $0$ means false and $1$ means true. Together these labels determine the valuation in a corresponding Kripke model, for example in states labelled $0_a1_b1_c$ agent $a$'s value is $0$, $b$'s is $1$, and $c$'s is $1$. The single triangle corresponds to the singleton $\mathcal S5$ model below it. (We assume reflexivity and symmetry of accessibility relations.) With two triangles, if they only intersect in $a$ it means that agent $a$ cannot distinguish these states, so that $a$ is uncertain about the value of $b$; whereas if they intersect in $a$ and $c$ both $a$ and $c$ are uncertain about the value of $b$. 

\begin{figure}[ht] \center
\scalebox{1}{
\begin{tabular}{ccc}
\begin{tikzpicture}[round/.style={circle,fill=white,inner sep=1}]
\fill[fill=gray!25!white] (2,0) -- (4,0) -- (3,1.71) -- cycle;
\fill[fill=gray!25!white] (0,0) -- (2,0) -- (1,1.71) -- cycle;
\node[round] (b1) at (0,0) {$1_b$};
\node[round] (b0) at (4,0) {$0_b$};
\node[round] (c1) at (3,1.71) {$1_c$};
\node[round] (lc1) at (1,1.71) {$1_c$};
\node[round] (a0) at (2,0) {$0_a$};
%
\draw[-] (b1) -- (a0);
\draw[-] (b1) -- (lc1);
\draw[-] (a0) -- (lc1);
\draw[-] (a0) -- (b0);
\draw[-] (b0) -- (c1);
\draw[-] (a0) -- (c1);
\end{tikzpicture}
&
\begin{tikzpicture}[round/.style={circle,fill=white,inner sep=1}]
\fill[fill=gray!25!white] (2,0) -- (2,2) -- (3.71,1) -- cycle;
\fill[fill=gray!25!white] (0.29,1) -- (2,0) -- (2,2) -- cycle;
\node[round] (b1) at (.29,1) {$1_b$};
\node[round] (b0) at (3.71,1) {$0_b$};
\node[round] (c1) at (2,2) {$1_c$};
\node[round] (a0) at (2,0) {$0_a$};
%
\draw[-] (b1) -- (a0);
\draw[-] (b1) -- (c1);
\draw[-] (a0) -- (b0);
\draw[-] (b0) -- (c1);
\draw[-] (a0) -- (c1);
\end{tikzpicture}
& 
\begin{tikzpicture}[round/.style={circle,fill=white,inner sep=1}]
\fill[fill=gray!25!white] (2,0) -- (4,0) -- (3,1.71) -- cycle;
\node[round] (b0) at (4,0) {$1_b$};
\node[round] (c1) at (3,1.71) {$1_c$};
\node[round] (a0) at (2,0) {$0_a$};
%
\draw[-] (a0) -- (b0);
\draw[-] (b0) -- (c1);
\draw[-] (a0) -- (c1);
\end{tikzpicture} \\ && \\
\begin{tikzpicture}
\node (010) at (.5,0) {$0_a1_b1_c$};
\node (001) at (3.5,0) {$0_a0_b1_c$};
\draw[-] (010) -- node[above] {$a$} (001);
\end{tikzpicture}
&
\begin{tikzpicture}
\node (010) at (.5,0) {$0_a1_b1_c$};
\node (001) at (3.5,0) {$0_a0_b1_c$};
\draw[-] (010) -- node[above] {$ac$} (001);
\end{tikzpicture}
&
\begin{tikzpicture}
\node (010) at (.5,0) {$0_a1_b1_c$};
\end{tikzpicture}
\end{tabular}
}
\end{figure}

The current state of the distributed system is represented by a distinguished facet of the simplicial complex, just as we need a distinguished or actual state in a Kripke model in order to evaluate propositions. For example, in the leftmost triangle, as well as in the leftmost state/world, $a$ is uncertain whether the value of $b$ is $0$ or $1$, whereas $b$ knows that its value is $1$, and all three agents know that the value of $c$ is $1$.
\end{example}

The so-called {\em impure simplicial complexes} were beyond the scope of the epistemic logic of \cite{goubaultetal:2018}. Impure simplicial complexes encode uncertainty over which processes are still active/alive. They model information states in synchronous message passing with crash failures (processes `dying').

\begin{example} \label{figure.sergio}
The impure complex below is found in \cite[Section 13.5.2]{herlihyetal:2013}, here decorated with local values in order to illustrate epistemic  features. Some vertices have been named.

\begin{figure}[ht] \center
\scalebox{1}{
\begin{tikzpicture}[round/.style={circle,fill=white,inner sep=1}]
\fill[fill=gray!25!white] (2,0) -- (4,0) -- (3,1.71) -- cycle;
\node[round] (b1) at (2,-2) {$0_b$};
\node (u) at (1.6,-2.2) {$u$};
\node[round] (a1) at (4,-2) {$0_a$};
\node[round] (b0) at (4,0) {$0_b$};
\node[round] (c1) at (3,1.71) {$1_c$};
\node[round] (x) at (2.6,1.51) {$x$};
\node[round] (a0) at (2,0) {$0_a$};
\node (v) at (1.6,-.2) {$v$};
\node (v) at (4.4,-.2) {$z$};
\node[round] (lc1) at (1.3,2.71) {$0_a$};
\node[round] (la0) at (.3,1) {$1_c$};
\node (w) at (-.1,.8) {$w$};
\node[round] (rb0) at (5.7,1) {$1_c$};
\node[round] (rc1) at (4.7,2.71) {$0_b$};
%
\draw[-] (b1) -- (a0);
\draw[-] (a0) -- (b0);
\draw[-] (b0) -- (c1);
\draw[-] (a0) -- (c1);
\draw[-] (la0) -- (lc1);
\draw[-] (a0) -- (la0);
\draw[-] (lc1) -- (c1);
\draw[-] (a1) -- (b1);
\draw[-] (a1) -- (b0);
\draw[-] (b0) -- (rb0);
\draw[-] (rc1) -- (c1);
\draw[-] (rb0) -- (rc1);
\end{tikzpicture}
}
\end{figure}
This simplicial complex represents the result of possibly failed message passing between $a,b,c$. In the ($2$-dimensional) triangle in the middle the messages have all been received, whereas in the ($1$-dimensional) edges on the side a process has crashed: on the left, $b$ is dead, on the right, $a$ is dead, and below, $c$ is dead. 

We propose to interpret this complex in terms of knowledge and uncertainty. Consider triangle $vxz$ (that is $\{v,x,z\}$) and agent $a$, who `colours' (labels) vertex $v$. Vertex $v$ is the intersection of triangle $vxz$, edge $wv$, and edge $vu$. This represents $a$'s uncertainty about the actual state of information: either all three agents are alive, or $b$ is dead, or $c$ is dead. 

First, we now say, as before, that in triangle $vxz$ the value of $c$ is $1$, as in vertex $x$ the value of $c$ is $1$. Second, we now also say, which is novel, that in triangle $vxz$ agent $a$ knows that $c$'s value is $1$. This is justified because in all simplices intersecting in the $a$ vertex $v$, whenever $c$ is alive, its value is $1$. It is sufficient to consider maximal simplices (facets), where, unlike in Example~\ref{figure.ssss}, some facets consist of two vertices and other facets consist of three vertices: in edge $wv$ agent $c$'s value is $1$, in triangle $vxz$ agent $c$'s value is $1$, and in edge $vu$ agent $c$ is dead, so we are excused from considering its value. Third, the semantics we will propose also justifies that in triangle $vxz$ agent $a$ knows that $b$ knows that the value of $c$ is $1$: whenever proposition `$b$ knows that the value of $c$ is $1$' denoted $\phi$ is defined (can be interpreted), it is true: $\phi$ is undefined in $wv$ because $b$ is dead, $\phi$ is undefined in $vu$ because $b$ knows that $c$ is dead, but $\phi$ is defined in $vxz$ and (just as for $a$) true.\footnote{In fact, in the triangle $vxz$ the three agents have common knowledge ($a$ knows that $b$ knows that $c$ knows \dots) of their respective values.}

The point of evaluation can be a facet such as $vxz$ or $wv$ but it may be any simplex, such as vertex $v$. Also in vertex $v$, agent $a$ knows that $c$'s value is $1$. Such knowledge is unstable: an update of information (such as a model restriction) may remove $a$'s uncertainty and make her learn that the real information state is edge $vu$ wherein $c$ is dead. Then, it is no longer true that $a$ knows that the value of $c$ is $1$. However, a different update could have confirmed that the real information state is $vxz$, or $wv$, which would bear out $a$'s knowledge.  
\end{example}

\paragraph*{Our results.}
Inspired by the epistemic logic for pure complexes of \cite{goubaultetal:2018}, the knowledge semantics for certain Kripke models involving alive and dead agents in \cite{diego:2019}, and impure complexes modelling synchronous message passing in \cite{herlihyetal:2013}, we propose an epistemic logic for impure complexes. We define a three-valued modal logical semantics for an epistemic logical language, interpreted on arbitrary simplices of simplicial complexes that are decorated with agents and with local propositional variables. The issue of the definability of formulas has to be handled with care: standard notions such as validity, equivalence, and the interdefinability of dual modalities and non-primitive propositional connectives have to be properly addressed. This involves detailed proofs. As we interpret formulas in arbitrary simplices, results are shown for upwards and downwards monotony of truth that also depend on definability. For example, agent $c$ may have value $1$ in a simplex with a $c$ vertex, but the same proposition is undefined in a face of that simplex without that vertex.  We propose an axiomatization {\bf S5}$^\top$ of our logic for which we show soundness: apart from some usual {\bf S5} axioms and rules it also contains some of those in versions adapted to definability. For example, {\bf MP} (modus ponens) is invalid, but instead we have a derivation rule {\bf MP}$^\top$ stating ``from $\phi$ and $\phi\imp\psi$ infer $\phi^\top\imp\psi$'', where the last validity means that ``whenever $\phi$ and $\psi$ are both defined, $\psi$ is true''.  We then show that the epistemic semantics of pure complexes of \cite{goubaultetal:2018} is a special case of our semantics, and also how their version of the logic {\bf S5} is then recovered. Finally we show how impure simplicial complexes correspond to certain Kripke models with special conditions on the relations and valuations. Such Kripke models have a surplus of information that is absent in their simplicial correspondents. In Kripke models all propositional variables `must' have a value, even those of agents that are dead. But not in complexes. 

Some issues are left for further research, notably: the completeness of the axiomatization, the proper notion of bisimulation for impure complexes, the explicit representation of life and death in the language, and the extension of the logical language with dynamic modalities representing distributed algorithms and other updates.

\paragraph*{Relevance for distributed computing.}
We hope that our research is
relevant for modelling knowledge both in synchronous and
asynchronous distributed computing. Indeed, simplicial complexes are a
very versatile alternative for Kripke models where all agents know their
own state, as they make the powerful theorems of
combinatorial topology available for epistemic analysis \cite{herlihyetal:2013}.
Whereas this was already known for pure complexes \cite{HS99}, our
work reveals that it is also true for impure complexes. Indeed,
crashed processes can alternatively be modelled as non-responding
processes in asynchronous message passing, on (larger) pure simplicial
complexes. Dually, non-responding processes can be modelled as dead
processes in a synchronous setting, for example after a time-out, on
(smaller) impure complexes. Impure simplicial complexes therefore
appear as a useful abstraction even for modelling asynchronous
computations.

\paragraph*{Survey of related works.}
We will compare our work in technical detail in a (therefore) later section to various related works. %
Uncertainty over which agents are alive and dead relates to epistemic logics of \emph{awareness} (of other agents) \cite{faginetal:1988,halpernR13,AgotnesA14a,hvdetal.jolli:2014}. %
Our modal semantics with the three values true, false and undefined relates to other \emph{multi-valued epistemic logics} \cite{Morikawa89,Fitting:1992,odintsovetal:2010,RivieccioJJ17}. %
Our epistemic notion that comes `just short of knowledge' is different from various notions of {\em belief} \cite{hintikka:1962,stalnaker:2005} including such notions for distributed systems \cite{MosesS93,HalpernSS09a}, and in particular from notions of \emph{knowledge} recently proposed in \cite{diego:2019} and in \cite{goubaultetal:2021}: these are also interpreted on impure simplicial complexes and therefore of particular interest. Finally, a \emph{temporal} modal logic interpreted on simplicial complexes has been proposed in \cite{loretietal:2023}.

\paragraph*{Outline of our contribution.}
Section~\ref{section.preliminaries} gives an introduction to simplicial complexes. Section~\ref{section.el} presents the epistemic semantics for impure complexes. Section~\ref{section.validities} presents the axiomatization {\bf S5}$^\top$ for our epistemic semantics. Section~\ref{section.pure} shows that the special case of pure complexes returns the prior semantics and logic of \cite{goubaultetal:2018}.
Section \ref{section.correspondence} transforms impure complexes into a certain kind of Kripke models and vice versa. Section \ref{section.further} compares our results to the literature. 
Section~\ref{section.fff} describes the above topics for further research in detail.

\section{Technical preliminaries: simplicial complexes}  \label{section.preliminaries}

Given are a finite set $A$ of \emph{agents} (or \emph{processes}, or \emph{colours}) $a_0,a_1,\dots$ (or $a,b,\dots$) and a countable set $P = \Union_{a \in A} P_a$ of (\emph{propositional}) \emph{variables}, where all $P_a$ are mutually disjoint. The elements of $P_a$ for some $a \in A$ are the {\em local variables for agent $a$} and are denoted $p_a, q_a, p'_a, q'_a, \dots$ Arbitrary elements of $P$ are also denoted $p,q,p',q',\dots$ As usual in combinatorial topology, the number $|A|$ of agents is taken to be $n+1$ for some $n \in \mathbb N$, so that the dimension of a simplicial complex (to be defined below), that is one less than the number of agents, is $n$.

\begin{definition}[Language]
The {\em language of epistemic logic} $\lang_{K}(A,P)$ is defined as follows, where $a \in A$ and $p_a \in P_a$. \[ \phi ::= p_a \mid \neg\phi \mid (\phi\et\phi) \mid \M_a \phi \]  
\end{definition}
Parentheses will be omitted unless confusion results. We will write $\lang_{K}(P)$ if $A$ is clear from the context, and $\lang_{K}$ if $A$ and $P$ are clear from the context. Connectives $\imp$, $\eq$, and $\vel$ are defined by abbreviation as usual, as well as $K_a\phi := \neg \M_a \neg \phi$. In the semantics and inductive proofs of our work, $\M_a\phi$ is a more suitable linguistic primitive than the more common $K_a\phi$. Otherwise, we emphasize that the choice whether $K_a$ or $\M_a$ is syntactic primitive is pure syntactic sugar. For $\emptyset \neq B\subseteq A$, we write $\lang_{K}|B$ for $\lang_K(B,\Union_{a \in B} P_a)$, and where $\lang_K|b$ means $\lang_K|\{b\}$. For $\neg p$ we may write $\np$. Expression $K_a \phi$ stands for `agent $a$ knows (that) $\phi$' and $\M_a \phi$ stands for `agent $a$ considers it possible that $\phi$' that we will abbreviate in English as `agent $a$ considers (that) $\phi$.' The fragment without inductive construct $\M_a\phi$ is the {\em language of propositional logic} (or {\em Booleans}) denoted $\lang_\emptyset(A,P)$. 

\begin{definition}[Simplicial complex]
Given a non-empty set of \emph{vertices} $V$ (or {\em local states}; singular form {\em vertex}), a \emph{(simplicial) complex} $C$ is a set of non-empty finite subsets of $V$, called \emph{simplices}\footnote{Privileged plural forms are: vertices, simplices, and complexes, where the English plural is privileged as complex is a common modern term, see \cite{herlihyetal:2013}.}, that is closed under non-empty subsets (such that for all $X \in C$, $\emptyset \neq Y \subseteq X$ implies $Y \in C$), and that contains all singleton subsets of $V$.
\end{definition} If $Y \subseteq X$ we say that $Y$ is a \emph{face} of $X$. A maximal simplex in $C$ is a \emph{facet}. The facets of a complex $C$ are denoted as $\FF(C)$, and the vertices of a complex $C$ are denoted as $\VV(C)$. The \emph{star} of $X$, denoted $\sstar(X)$, is defined as $\{ Y \in C \mid X \subseteq Y\}$, where for $\sstar(\{v\})$ we write $\sstar(v)$. The dimension of a simplex $X$ is $|X|-1$, e.g., vertices are of dimension $0$, while edges are of dimension $1$. The dimension of a complex is the maximal dimension of its facets. A simplicial complex is \emph{pure} if all facets have the same dimension. Otherwise it is \emph{impure}. 


Complex $D$ is a \emph{subcomplex} of complex $C$ if $D \subseteq C$. Given processes $B \subseteq A$, a subcomplex of interest is the \emph{$m$-skeleton} $D$ of an $n$-dimensional complex $C$, that is, the maximal subcomplex $D$ of $C$ of dimension $m < n$, that can be defined as $D := \{ X \in C \mid |X| \leq m+1\}$. We will use this term for pure and impure complexes, where we typically consider a pure $m$-skeleton of an impure $n$-dimensional complex.

We decorate the vertices of simplicial complexes with agents' names, that we often refer to as  \emph{colours}. A \emph{chromatic map} $\chi: \VV(C) \imp A$ assigns colours to vertices such that different vertices of the same simplex are assigned different colours.  Thus, $\chi(v)=a$ denotes that the local state or vertex $v$ belongs to agent $a$. Dually, the vertex of a simplex~$X$ coloured with $a$ is denoted $X_a$. Expression $\chi(X)$, for $X \in C$, denotes $\{\chi(v) \mid v \in X\}$. A pair $(C,\chi)$ consisting of a simplicial complex $C$ and a chromatic map $\chi$ is a \emph{chromatic simplicial complex}. From now on, all simplicial complexes will be chromatic simplicial complexes. 

We extend the usage of the term `skeleton' as follows to chromatic simplicial complexes. Given processes $B \subseteq A$, the $B$-skeleton of a chromatic complex $(C,\chi)$, denoted $(C,\chi)|B$, is defined as $\{ X \in C \mid \chi(X) \subseteq B \}$ (it is required to be non-empty).

\weg{
\begin{figure} \center
\scalebox{.7}{
\raisebox{3.5em}{
\begin{tikzpicture}[every node/.style={circle,fill=white,inner sep=1}, font=\large]
\fill[fill=gray!25!white] (0,0) -- (2,0) -- (1,1.71) -- cycle;
\node (a0) at (0,0) {$a$};
\node (b0) at (2,0) {$b$};
\node (c0) at (1,1.71) {$c$};
\draw[-] (a0) -- (c0);
\draw[-] (a0) -- (b0);
\draw[-] (b0) -- (c0);
\end{tikzpicture}
}
\qquad\qquad
\begin{tikzpicture}[every node/.style={circle,fill=white,inner sep=1}, font=\large]
\fill[fill=gray!25!white] (0,0) -- (6,0) -- (3,5.13) -- cycle;
\node (a0) at (0,0) {$a$};
\node (b0) at (2,0) {$b$};
\node (a1) at (4,0) {$a$};
\node (b1) at (6,0) {$b$};
\node (c0) at (1,1.71) {$c$};
\node (b3) at (2.5,2.2) {$b$};
\node (a3) at (3.5,2.2) {$a$};
\node (c3) at (3,1.24) {$c$};
\node (c1) at (5,1.71) {$c$};
\node (a2) at (2,3.42) {$a$};
\node (b2) at (4,3.42) {$b$};
\node (c2) at (3,5.13) {$c$};
\draw[-] (a0) -- (b0);
\draw[-] (b0) -- (a1);
\draw[-] (a1) -- (b1);
\draw[-] (c0) -- (b3);
\draw[-] (c1) -- (a3);
\draw[-] (c3) -- (b3);
\draw[-] (a3) -- (b3);
\draw[-] (a3) -- (c3);
\draw[-] (c2) -- (b3);
\draw[-] (c2) -- (a3);
\draw[-] (a0) -- (c0);
\draw[-] (c0) -- (a2);
\draw[-] (a2) -- (c2);
\draw[-] (c2) -- (b2);
\draw[-] (b2) -- (c1);
\draw[-] (c1) -- (b1);
\draw[-] (a0) -- (b3);
\draw[-] (a0) -- (c3);
\draw[-] (a2) -- (b3);
\draw[-] (c3) -- (a1);
\draw[-] (b0) -- (c3);
\draw[-] (a3) -- (b2);
\draw[-] (b1) -- (c3);
\draw[-] (b1) -- (a3);
\end{tikzpicture}
}
\caption{A two-dimensional simplicial complex and its chromatic subdivision}
\label{fig.thousand}
\end{figure}

A \emph{subdivision} of a simplicial complex is a rather topological concept of which we only need the combinatorial counterpart called chromatic subdivision. Given $(C,\chi)$ of dimension $n$, the \emph{chromatic subdivision} is the chromatic complex $(C',\chi')$ where $\VV(C') = \{(v,X)\mid v \in \VV(C), X \in {\mathit{star}}(v)$, such that $C'$ is the set of simplices $X' = \{ (v_0,X_0), \dots, (v_k,X_k) \}$ for which $X_0 \subseteq \dots \subseteq X_k$ and such that for all $i,j \leq k$, if $(v_i,X_i)$ and $(v_j,X_j)$ are in $X'$ then then $X_i \subseteq X_j$, and such that $\chi'(v,X) = \chi(v)$ for all vertices $(v,X)$ in $C'$. It is easy to show that the subdivision is again a chromatic simplicial complex of the same dimension. We can then see the vertices $(v,\{v\})$ of $C'$ as the `original' vertices $v$ of $C$ and the other vertices of $C'$ as the ones `created' by the subdivision. See \cite{herlihyetal:2013} for details. And see Figure~\ref{fig.thousand} that is worth a thousand words.
}

\paragraph*{Simplicial models.}
We decorate the vertices of simplicial complexes  with local variables $p_a \in P_a$ for $a \in A$, where we recall that $\Union_{a \in A} P_a = P$.  \emph{Valuations} (valuation functions) assigning sets of local variables for agents $a$ to vertices coloured $a$ are denoted $\ell, \ell', \dots$ Given a vertex $v$ coloured $a$, the set $\ell(v) \subseteq P_a$ consists of $a$'s local variables that are true at $v$, where those in $P_a\setminus\ell(v)$ are the variables that are false in $a$. 
 For any $X \in C$, $\ell(X)$ stands for $\Union_{v \in X} \ell(v)$. 
\begin{definition}[Simplicial model]
A \emph{simplicial model} $\C$ is a triple $(C,\chi,\ell)$ where $(C,\chi)$ is a chromatic simplicial complex and $\ell$ a valuation function, and a {\em pointed simplicial model} is a pair $(\C,X)$ where $X \in C$. \end{definition} We slightly abuse the language by allowing terminology for simplicial complexes to apply to simplicial models, for example, $\C$ is impure if $C$ is impure, $\C|B$ is a $B$-skeleton if $(C,\chi)|B$ is a $B$-skeleton, etcetera.  A pointed simplicial model is also called a simplicial model.

 \weg{ For $\C'=(C',\chi',\ell')$ to be a subdivision of $\C=(C,\chi,\ell)$, we additionally require that for all $(v,X) \in \VV(C')$ with $\chi(v)=a$, $p_a \in\ell'(v,X)$ iff $p_a\in \ell(v)$.\footnote{For an epistemic logician, the chromatic subdivision is the result of an epistemic action wherein the agents inform each other of the value of their local variables. There is a corresponding action model \cite{goubaultetal:2018,ledent:2019}.}
}

\paragraph*{Simplicial maps.}


A \emph{simplicial map} ({\em simplicial function}) between simplicial complexes $C$ and $C'$ is a simplex preserving function $f$ between its vertices, i.e., $f: \VV(C) \imp \VV(C')$ such that for all $X \in C$, $f(X) \in C'$, where $f(X) := \{ f(v) \mid v \in X \}$. We let $f(C)$ stand for $\{ f(X) \mid X \in C \}$. A simplicial map is {\em rigid} if it is dimension preserving, i.e., if for all $X \in C$, $|f(X)| = |X|$. We will abuse the language and also call $f$ a simplicial map between  chromatic simplicial complexes $(C,\chi)$ and $(C',\chi')$, and between simplicial models $(C,\chi,\ell)$ and $(C',\chi',\ell')$.

A \emph{chromatic simplicial map} is a {\em colour preserving} simplicial map between chromatic simplicial complexes, i.e., for all $v \in \VV(C)$, $\chi'(f(v)) = \chi(v)$. We note that it is therefore also rigid. A simplicial map $f$ between simplicial models is {\em value preserving} if for all $v \in \VV(C)$, $\ell'(f(v)) = \ell(v)$. If $f$ is not only colour preserving but also value preserving, and its inverse $f^{-1}$ as well, then~$\C$ and~$\C'$ are \emph{isomorphic}, notation ${\C \simeq \C'}$.  

\paragraph*{Epistemic logic on pure simplicial models.}

In \cite{ledent:2019,goubaultetal:2018} an epistemic semantics is given on pure simplicial models with crucial clause that $K_a \phi$ ($\M_a\phi$) is true in a facet $X$ of a pure simplicial model $\C = (C,\chi,\ell)$ of dimension $|A|-1$, if $\phi$ is true in all facets (some facet) $Y$ of $\C$ such that $a \in \chi(X \inter Y)$.
They then proceed by showing that {\bf S5} augmented with `locality' axioms $K_a p_a \vel K_a \neg p_a$ (formalizing that all agents know their local variables) is the epistemic logic of simplicial complexes. In \cite{ledent:2019,goubaultetal_postdali:2021,hvdetal.simpl:2022} bisimulation for simplicial complexes is proposed and the required correspondence (on finite models) shown. The subsection entitled `A semantics for simplices including facets' of  \cite{hvdetal.simpl:2022} proposes an alternative semantics for pure complexes wherein formulas can be interpreted in any face of a facet and not merely in facets. Somewhat surprisingly, this proposal applies to impure complexes as well, with minor adjustments. We will proceed with such an epistemic semantics based on impure complexes.

\weg{
\begin{example} 
Reconsider the simplicial complexes of Example~\ref{figure.ssss}, again depicted below. Let $1_b$ stand for `$p_b$ is true', $0_b$ for `$p_b$ is false', etcetera. Then, according to \cite{goubaultetal:2018}, for example: $K_a (\neg p_a\et p_b \et p_c)$ is true in facet $X''$ of simplicial model $\C''$ (and $b$ and $c$ also know this, and this is even common knowledge), and $K_a (K_b p_b \vel K_b \neg p_b)$ is true in facet $X'$ of simplicial model $\C'$ ($a$ knows that $b$ knows whether $p_b$) although $K_a p_b \vel K_a \neg p_b$ is false in facet $X'$ of simplicial model $\C'$ ($a$ does not know whether $p_b$, and similarly for $c$). Whereas in $\C$ agents $b$ and $c$ have full knowledge but not $a$. 
\begin{figure}[ht]
\scalebox{0.8}{
\begin{tikzpicture}[round/.style={circle,fill=white,inner sep=1}]
\fill[fill=gray!25!white] (2,0) -- (4,0) -- (3,1.71) -- cycle;
\fill[fill=gray!25!white] (0,0) -- (2,0) -- (1,1.71) -- cycle;
\node[round] (b1) at (0,0) {$1_b$};
\node[round] (b0) at (4,0) {$0_b$};
\node[round] (c1) at (3,1.71) {$1_c$};
\node[round] (lc1) at (1,1.71) {$1_c$};
\node[round] (a0) at (2,0) {$0_a$};
\node (f1) at (3,.65) {$Y$};
\node (f1) at (1,.65) {$X$};
\node(c) at (-1,0) {$\C:$};
\draw[-] (b1) -- (a0);
\draw[-] (b1) -- (lc1);
\draw[-] (a0) -- (lc1);
\draw[-] (a0) -- (b0);
\draw[-] (b0) -- (c1);
\draw[-] (a0) -- (c1);
\end{tikzpicture}
\qquad
\begin{tikzpicture}[round/.style={circle,fill=white,inner sep=1}]
\fill[fill=gray!25!white] (2,0) -- (2,2) -- (3.71,1) -- cycle;
\fill[fill=gray!25!white] (0.29,1) -- (2,0) -- (2,2) -- cycle;
\node[round] (b1) at (.29,1) {$1_b$};
\node[round] (b0) at (3.71,1) {$0_b$};
\node[round] (c1) at (2,2) {$1_c$};
\node[round] (a0) at (2,0) {$0_a$};
\node (f1) at (2.6,1) {$Y'$};
\node (f1) at (1.4,1) {$X'$};
\node(cp) at (-.5,0) {$\C':$};
\draw[-] (b1) -- (a0);
\draw[-] (b1) -- (c1);
\draw[-] (a0) -- (b0);
\draw[-] (b0) -- (c1);
\draw[-] (a0) -- (c1);
\end{tikzpicture}
\qquad
\begin{tikzpicture}[round/.style={circle,fill=white,inner sep=1}]
\fill[fill=gray!25!white] (2,0) -- (4,0) -- (3,1.71) -- cycle;
\node[round] (b0) at (4,0) {$1_b$};
\node[round] (c1) at (3,1.71) {$1_c$};
\node[round] (a0) at (2,0) {$0_a$};
\node (f1) at (3,.65) {$X''$};
\node(cpp) at (1,0) {$\C'':$};
\draw[-] (a0) -- (b0);
\draw[-] (b0) -- (c1);
\draw[-] (a0) -- (c1);
\end{tikzpicture}
}
\end{figure}
\end{example}
}


\section{Epistemic logic on impure simplicial models}  \label{section.el}

\subsection{A logical semantics relating definability and truth}

Given agents $A$ and variables $P = \Union_{a \in A} P_a$ as before, let $\C = (C,\chi,\ell)$ be a simplicial model, $X \in C$, and $\phi \in \lang_K(A,P)$. Informally, we now wish to define a satisfaction relation $\models$ between {\bf some} but not all pairs $(\C,X)$ and formulas $\phi$. Not all, because if agent $a$ does not occur in $X$ (if $a \notin \chi(X)$), we do not wish to interpret certain formulas involving $a$, such as $p_a$ and formulas of shape $K_a \phi$. This is, because, if $X$ were a facet (a maximal simplex), the absence of $a$ would mean that the process is absent/dead, and dead processes do not have local values or know anything. The relation should therefore be partial. However, this relation is fairly complex, because we may wish to interpret formulas $K_b \phi$ in $(\C,X)$, with $b 
\in\chi(X)$, where after all agent $a$ `occurs' in $\phi$, for example expressing that a `live' process $b$ is uncertain whether process $a$ is `dead'. Formally, we therefore proceed slightly differently. 

We first define an auxiliary relation $\bowtie$ between (all) pairs $(\C,X)$ and formulas $\phi$, where $\C,X \bowtie \phi$ informally means that (the interpretation of) $\phi$ {\em is defined in} $(\C,X)$. The parentheses in $(\C,X)$ are omitted for notational brevity. For ``not $\C,X\bowtie \phi$'' we write $\C,X \not\bowtie\phi$, for ``$\phi$ is undefined in $(\C,X)$''. Subsequently we then formally also define relation $\models$ between (after all) all pairs $(\C,X)$ and formulas $\phi$, where, as usual, $\C,X \models \phi$ means that $\phi$ {\em is true in} $(\C,X)$, and $\C,X \models \neg\phi$ means that $\phi$ {\em is false in} $(\C,X)$. Again, we omit the parentheses in $(\C,X)$ for notational brevity. For ``not $\C,X \models \phi$'' we write $\C,X \not\models \phi$. Unusually, $\C,X \not\models \phi$ does not mean that $\phi$ is false in $(\C,X)$ but only means that $\phi$ is not true in $(\C,X)$, in which case it can be either false in $(\C,X)$ or undefined in $(\C,X)$. 
\begin{definition}[Definability and satisfaction relation] \label{def.defsat}
We define the definability relation $\bowtie$ and subsequently the satisfaction relation $\models$ by induction on $\phi\in \lang_K$.
\[ \begin{array}{lcl}
\C, X \bowtie p_a & \text{iff} & a \in \chi(X) \\
\C, X \bowtie \phi\et\psi & \text{iff} & \C, X \bowtie \phi \ \text{and} \ \C, X \bowtie \psi \\
\C, X \bowtie \neg \phi & \text{iff} & \C, X \bowtie \phi \\
\C,X \bowtie \M_a\phi & \text{iff} & \C,Y \bowtie \phi \ \text{for some} \ Y \in C \ \text{with} \ a \in \chi(X \inter Y) 
\end{array}\]
\[ \begin{array}{lcl}
\C, X \models p_a & \text{iff} & a \in \chi(X) \ \text{and} \ p_a \in \ell(X) \\
\C, X \models \phi\et\psi & \text{iff} & \C, X \models \phi \ \text{and} \ \C, X \models \psi \\
\C, X \models \neg \phi & \text{iff} & \C, X \bowtie \phi \ \text{and} \ \C, X \not\models \phi \\
\C,X \models \M_a\phi & \text{iff} & \C,Y \models \phi \ \text{for some} \ Y \in C \ \text{with} \ a \in \chi(X \inter Y)
\end{array}\]
Given $\phi,\psi\in\lang_K$, $\phi$ is {\em equivalent} to $\psi$ if for all $(\C,X)$: \[\begin{array}{lll} \C,X \models \phi &\text{ iff }& \C,X \models \psi; \\ \C,X \models \neg\phi &\text{ iff }& \C,X \models \neg\psi; \\ \C,X \not\bowtie\phi &\text{ iff }& \C,X\not\bowtie\psi. \end{array}\] A formula $\phi\in\lang_K$ is {\em valid} if for all $(\C,X)$: $\C,X \bowtie \phi$ implies $\C,X \models \phi$.
\end{definition}
The definition of equivalence is the obvious one for a three-valued logic with values true, false, and undefined. Concerning validity, it should be clear that it could not have been defined as ``$\phi\in\lang_K$ is valid iff for all $(\C,X)$: $\C,X \models \phi$'' because then there would be no validities at all (if there is more than one agent), as even very simple formulas like $p_a \vel \neg p_a$ are undefined when interterpreted in a vertex for an agent $b \neq a$.

An equivalent formulation of the semantics for the epistemic modality for vertices is:
\[ \begin{array}{lcl}
\C,v \models \M_a\phi & \text{iff} & \C,Y \models \phi \ \text{for some} \ Y \in \sstar(v). \\ 
\end{array}\]

In this three-valued semantics $\C,X \not \models \phi$ is {\bf not} equivalent to $\C,X \models \neg \phi$. In particular, if $a$ is an agent not colouring a vertex in $X$ ($a \notin \chi(X)$), then: \[ \begin{array}{lllll} \text{for all} \ p_a \in P_a:  \C,X \not \models p_a \ \text{and} \ \C,X \not \models \neg p_a; \\ \text{for all} \ \psi \in \lang_K:  \C,X \not \models \M_a \psi \ \text{and} \ \C,X \not \models \neg \M_a \psi.
\end{array} \]
On the other hand, an agent may consider it possible that some proposition is true even when this proposition cannot be evaluated. That is, we may have (see Example~\ref{example.xxx}):
\[ \begin{array}{ll} \C,X \models \M_a \phi & \text{even  when} \\
\C,X \not\models \phi & \text{and} \\
\C,X \not\models \neg\phi
\end{array} \]

\begin{example} \label{example.xxx}
Consider the following impure simplicial model $\C$ for three agents $a,b,c$ with local variables respectively $p_a,p_b,p_c$. A vertex $v$ is labelled $0_a$ if $\chi(v)=a$ and $p_a \notin \ell(v)$, $1_b$ if $\chi(v)=b$ and $p_b \in \ell(v)$, etc. Some simplices have been named. 
\begin{center}
\begin{tikzpicture}[round/.style={circle,fill=white,inner sep=1}]
\fill[fill=gray!25!white] (2,0) -- (4,0) -- (3,1.71) -- cycle;
\node[round] (b1) at (0,0) {$1_b$};
\node[round] (b0) at (4,0) {$0_b$};
\node[round] (c1) at (3,1.71) {$1_c$};
\node[round] (a0) at (2,0) {$0_a$};
\node (f1) at (3,.65) {$Y$};
\draw[-] (b1) -- node[above] {$X$} (a0);
\draw[-] (a0) -- (b0);
\draw[-] (b0) -- (c1);
\draw[-] (a0) -- (c1);
\end{tikzpicture}
\end{center}
As expected, $\C,X \models p_b \et \neg p_a$, where the conjunct $\C,X \models \neg p_a$ is justified by $\C,X \bowtie p_a$ and $\C,X \not\models p_a$. We also have $\C,X \models \M_a p_b$, because $a \in \chi(X\inter X) = \chi(X)$ and $\C,X \models p_b$.

Illustrating the novel aspects of the semantics, $\C,X \not\models p_c$, because $c \notin \chi(X)$ so that $\C,X \not\bowtie p_c$. Similarly $\C,X \not\models \neg p_c$. Also, $\C,X \not\models \M_c \neg p_a$ and similarly $\C,X \not\models \neg \M_c \neg p_a$, again because $c \notin\chi(X)$. Still, $\neg p_a$ is true throughout the model: $\C,X \models \neg p_a$ and $\C,Y \models \neg p_a$.

Although $\C,X\not\bowtie p_c$, after all $\C,X \models \M_a p_c$, because $a \in \chi(X \inter Y)$ and $\C,Y \models p_c$. Statement $\M_a p_c$ says that agent $a$ considers it possible that atom $p_c$ is true. For this to be true agent $c$ does not have to be alive in facet $X$. It is sufficient that agent $a$ considers it possible that agent $c$ is alive. 

We also have $\C,Y \models K_b p_c$. This is easier to see after we have introduced the (derived) semantics for knowledge directly. We then explain in Example~\ref{example.zzz} why even $\C,X \models K_a p_c$ (not a typo). 
\end{example}

In this three-valued modal-logical setting we need to prove many intuitively expected results anew over a fair number of lemmas. We finally obtain a version of the logic {\bf S5}. 

\begin{lemma} \label{lemma.modelsbowtie}
If $\C,X \models \phi$ then $\C,X \bowtie \phi$.
\end{lemma}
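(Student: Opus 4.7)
The plan is to prove the lemma by straightforward structural induction on $\phi \in \lang_K$, exploiting the fact that each clause in the definition of $\models$ was designed to entail the corresponding clause of $\bowtie$. Because the inductive hypothesis needs to be applied at different simplices (the modal case jumps from $X$ to some $Y$), I would state the inductive claim uniformly over all pointed models $(\C, X)$.

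For the base case $\phi = p_a$, if $\C, X \models p_a$ then by definition $a \in \chi(X)$ and $p_a \in \ell(X)$; in particular $a \in \chi(X)$, which is exactly $\C, X \bowtie p_a$. For the conjunction $\phi = \psi_1 \et \psi_2$, truth unpacks into $\C, X \models \psi_i$ for $i = 1, 2$, the inductive hypothesis yields $\C, X \bowtie \psi_i$, and these combine to give $\C, X \bowtie \psi_1 \et \psi_2$.

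The negation case $\phi = \neg \psi$ is the one where the three-valued design of the semantics is visible: Definition~\ref{def.defsat} sets $\C, X \models \neg \psi$ iff $\C, X \bowtie \psi$ and $\C, X \not\models \psi$, so the first conjunct immediately gives $\C, X \bowtie \psi$, which in turn is precisely the condition $\C, X \bowtie \neg \psi$. Notice that here I do not even need the inductive hypothesis --- the lemma is \emph{baked into} the definition of $\models$ on negations. For the modal case $\phi = \M_a \psi$, if $\C, X \models \M_a \psi$ then there is some $Y \in C$ with $a \in \chi(X \inter Y)$ and $\C, Y \models \psi$; the inductive hypothesis applied at $Y$ gives $\C, Y \bowtie \psi$, and the very same witness $Y$ then verifies $\C, X \bowtie \M_a \psi$.

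There is no real obstacle here: each case is a one-line unpacking of definitions plus (where applicable) one use of the inductive hypothesis. The only mildly subtle point is recognising that the negation clause of $\models$ was deliberately strengthened to include the definability side condition, so that this lemma holds uniformly; without that side condition on $\neg$, one would face a mismatch. This observation is worth flagging in the write-up since it motivates the unusual shape of Definition~\ref{def.defsat} and foreshadows why $\C, X \not\models \phi$ must be carefully distinguished from $\C, X \models \neg \phi$ later in the paper.
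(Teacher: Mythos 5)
Your proposal is correct and follows essentially the same structural induction as the paper's own proof, case by case, including the observation that the negation case needs no inductive hypothesis because the definability conjunct is built into the $\models$ clause for $\neg$. No issues.
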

\begin{proof}
This is shown by induction on $\phi$. Let $(\C,X)$ be given, where $\C = (C,\chi,\ell)$.
\begin{itemize}

\item Let $\C,X \models p_a$. Then $a \in \chi(X)$. Therefore $\C,X \bowtie p_a$.

\item $\C,X \models \neg\phi$ implies $\C,X \bowtie \phi$ by definition of the semantics. As $\C,X \bowtie \phi$ iff $\C,X \bowtie \neg\phi$ by definition of $\bowtie$, it follows that $\C,X \bowtie \neg\phi$.

\item $\C,X \models \phi\et\psi$, iff $\C,X \models \phi$ and $\C,X \models \psi$. Therefore, by induction for $\phi$ and for $\psi$, $\C,X \bowtie \phi$ and $\C,X \bowtie \psi$, which is equivalent by definition to $\C,X \bowtie \phi\et\psi$.

\item $\C,X \models \M_a\phi$ implies $\C,Y \models \phi$ for some $Y \in C$ with $a \in \chi(X \inter Y)$, which implies (by induction) $\C,Y \bowtie \phi$ for some $Y \in C$ with $a \in \chi(X \inter Y)$, iff (by definition) $\C,X \bowtie \M_a\phi$.


\end{itemize}
\vspace{-.8cm}
\end{proof}

In this semantics, $\C,X \bowtie \phi$ does not imply that all agents occurring in $\phi$ also occur in $X$. We recall Example~\ref{example.xxx} wherein $\C,X \models \M_a p_c$ even though $c \notin\chi(X)$. On the other hand, if all agents occurring in $\phi$ also occur in $X$, then $\phi$ can be interpreted (is defined) in $X$.

Let $A: \lang_K \imp \powerset(A)$ map each formula to the subset of agents occurring in the formula: $A(p_a) := \{a\}$, $A(\phi\et\psi) := A(\phi)\union A(\psi)$, $A(\neg \phi) := A(\phi)$, and $A(\M_a\phi) := A(\phi) \union \{a\}$; $A(\phi)$ is denoted as $A_\phi$. 

\begin{lemma} \label{lemma.aphibowtie}
Let $\phi\in\lang_K$ and $(\C,X)$ be given. Then $A_\phi \subseteq \chi(X)$ implies $\C,X \bowtie \phi$.
\end{lemma}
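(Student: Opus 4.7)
The plan is a straightforward induction on the structure of $\phi$, using the definition of $\bowtie$ from Definition~\ref{def.defsat} and the recursive definition of $A_\phi$ just introduced. No three-valued subtleties arise here since $\bowtie$ is cleanly two-valued, and unlike Lemma~\ref{lemma.modelsbowtie}, no interplay with $\models$ is involved. So this should be considerably easier than the preceding lemma.

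The base case is immediate: if $\phi = p_a$, then $A_\phi = \{a\} \subseteq \chi(X)$ gives $a \in \chi(X)$, which is precisely the clause for $\C, X \bowtie p_a$. The Boolean cases are likewise mechanical. For $\phi = \neg \psi$, note $A_\phi = A_\psi$, and $\bowtie$ is insensitive to negation, so the induction hypothesis on $\psi$ suffices. For $\phi = \psi \et \psi'$, we have $A_\phi = A_\psi \cup A_{\psi'}$, so the hypothesis $A_\phi \subseteq \chi(X)$ splits into the two hypotheses needed to invoke the induction hypothesis on each conjunct and then assemble the conjunction.

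The only case requiring any thought at all is the modal case $\phi = \M_a \psi$, where $A_\phi = A_\psi \cup \{a\}$. Here the definition demands the existence of some $Y \in C$ with $a \in \chi(X \inter Y)$ and $\C, Y \bowtie \psi$. The natural witness is $Y := X$ itself: since $a \in A_\phi \subseteq \chi(X)$, we have $a \in \chi(X \inter X) = \chi(X)$, and since $A_\psi \subseteq A_\phi \subseteq \chi(X) = \chi(Y)$, the induction hypothesis applied to $\psi$ at $(\C, Y)$ yields $\C, Y \bowtie \psi$. That closes the induction.

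I do not expect any real obstacle. The only subtlety worth flagging is that the converse of this lemma fails, as the paper itself notes via Example~\ref{example.xxx}: a formula $\M_a p_c$ can be defined at $X$ without $c \in \chi(X)$, because the witness simplex $Y$ for the modality supplies the colour $c$. So this lemma gives a sufficient, not necessary, condition for definability, which is exactly why it is phrased as a one-way implication.
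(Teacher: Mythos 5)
Your proof is correct and follows exactly the paper's argument: induction on $\phi$, with the only nontrivial case being $\M_a\psi$, where taking $Y := X$ as the witness (using $a \in \chi(X \inter X)$ and the induction hypothesis on $\psi$ at $X$) closes the case. The remark about the failure of the converse is accurate and matches the paper's own discussion around Example~\ref{example.xxx}.
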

\begin{proof}
By induction on $\phi\in\lang_K$. If $\phi = \M_a\psi$, also $A_\psi \subseteq \chi(X)$, so by induction we get $\C,X \bowtie \psi$, and with $a \in \chi(X \inter X)$ it follows that $\C,X \bowtie \M_a\psi$. All other cases are trivial.
\end{proof}
The following may be obvious, but worthwhile to emphasize.
\begin{lemma} Let $\phi\in\lang_K$, $a \in A$, and $(\C,X)$ be given. Then $\C,X\bowtie \M_a \phi$ iff $\C,X \bowtie K_a \phi$.
\end{lemma}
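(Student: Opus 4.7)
The plan is to simply unfold the abbreviation $K_a \phi := \neg \M_a \neg \phi$ and observe that definability is insensitive to negations at the top level, thanks to the clause $\C, X \bowtie \neg \psi \Leftrightarrow \C, X \bowtie \psi$ in Definition~\ref{def.defsat}.

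First I would write out the chain of equivalences. Since $K_a\phi$ is by definition $\neg \M_a \neg \phi$, we have $\C,X \bowtie K_a \phi$ iff $\C,X \bowtie \neg \M_a \neg \phi$, which by the negation clause of $\bowtie$ is equivalent to $\C,X \bowtie \M_a \neg \phi$. By the modality clause of $\bowtie$, this holds iff there exists $Y \in C$ with $a \in \chi(X \inter Y)$ and $\C,Y \bowtie \neg\phi$. Applying the negation clause once more inside, this is equivalent to the existence of $Y \in C$ with $a \in \chi(X \inter Y)$ and $\C,Y \bowtie \phi$, i.e., $\C,X \bowtie \M_a \phi$.

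There is no real obstacle here; the entire argument is essentially the observation that the inductive clause for $\bowtie$ on $\neg$ is a pure identity (the definability of $\neg\psi$ reduces to the definability of $\psi$), so negation can be removed or inserted freely under $\bowtie$ without changing the value. The lemma is therefore almost a one-line calculation, and is worth stating mainly because it guarantees that proofs and axioms can switch freely between the primitive $\M_a$ and the derived $K_a$ when definability conditions are concerned.
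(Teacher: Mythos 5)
Your proof is correct and follows exactly the same chain of equivalences as the paper's own proof: unfold the abbreviation $K_a\phi := \neg\M_a\neg\phi$, strip the outer negation using the $\bowtie$-clause for negation, expand the $\M_a$-clause, and strip the inner negation. Nothing further is needed.
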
 
\begin{proof}
We use the definition by abbreviation of $K_a\phi$: $\C,X \bowtie K_a \phi$, iff $\C,X \bowtie \neg \M_a \neg \phi$, iff $\C,X \bowtie \M_a \neg \phi$, iff $\C,Y \bowtie \neg \phi$ for some $Y$ with $a \in \chi(X \inter Y)$, iff $\C,Y \bowtie \phi$ for some $Y$ with $a \in \chi(X \inter Y)$, iff $\C,X\bowtie \M_a \phi$.
\end{proof}

\begin{lemma} \label{lemma.defidual}
Let $\C,X\bowtie\phi$. Then $\C,X\not\models\phi$ iff $\C,X\models\neg\phi$.
\end{lemma}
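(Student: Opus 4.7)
The plan is to observe that this lemma is a direct unpacking of the negation clause in Definition~\ref{def.defsat}. By definition, $\C,X \models \neg\phi$ holds precisely when the conjunction $\C,X\bowtie\phi$ \emph{and} $\C,X\not\models\phi$ holds. Under the standing hypothesis $\C,X\bowtie\phi$, the first conjunct is discharged and the biconditional reduces to exactly the statement to be proved. So no induction on $\phi$ is required, and no auxiliary lemmas are needed beyond the definition itself.

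Concretely, I would spell out the two directions as follows. For the forward direction, assume $\C,X\not\models\phi$; combining this with the given $\C,X\bowtie\phi$ and applying the semantic clause for $\neg$ yields $\C,X\models\neg\phi$. For the converse, assume $\C,X\models\neg\phi$; unfolding the same semantic clause produces the conjuncts $\C,X\bowtie\phi$ and $\C,X\not\models\phi$, the second of which is what we wanted (the first is redundant with the hypothesis).

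There is essentially no obstacle here. The only conceptual point worth flagging is the role of the assumption $\C,X\bowtie\phi$: without it the biconditional fails, since when $\phi$ is undefined at $(\C,X)$ we have both $\C,X\not\models\phi$ and $\C,X\not\models\neg\phi$ (the latter because $\C,X\models\neg\phi$ would require $\C,X\bowtie\phi$ by Lemma~\ref{lemma.modelsbowtie} together with the $\bowtie$-clause for negation). Thus the lemma records that, on the definable fragment, the three-valued semantics collapses to classical two-valued behaviour and the law of excluded middle is recovered.
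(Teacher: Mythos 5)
Your proof is correct and is essentially the same as the paper's: both directions are just the two halves of the biconditional in the semantic clause for negation, with the hypothesis $\C,X\bowtie\phi$ discharging the extra conjunct (the paper phrases the left-to-right direction via contraposition, but this is a purely stylistic difference). Your closing remark on why the hypothesis is needed is accurate, though note that $\C,X\models\neg\phi$ already requires $\C,X\bowtie\phi$ directly from the negation clause, without appealing to Lemma~\ref{lemma.modelsbowtie}.
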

\begin{proof}
From right to left follows from the semantics of negation. From left to right we proceed by contraposition. Assume $\C,X\not\models\neg\phi$. Then not ($\C,X\bowtie\phi$ and $\C,X  \not\models\phi$), that is, $\C,X\not\bowtie\phi$ or $\C,X \models\phi$. As $\C,X\bowtie\phi$ , it follows that $\C,X\models\phi$.
\end{proof}
\begin{corollary} \label{corollary.bowtiemodels}
$\C,X \bowtie \phi$, iff $\C,X\models\phi$ or $\C,X\models\neg\phi$.
\end{corollary}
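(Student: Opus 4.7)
The plan is to obtain the corollary as an immediate consequence of Lemma \ref{lemma.modelsbowtie} and Lemma \ref{lemma.defidual}, treating the two directions separately and invoking the definability clause for negation where needed.

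For the right-to-left direction, I would split on which disjunct holds. If $\C,X \models \phi$, then Lemma \ref{lemma.modelsbowtie} yields $\C,X \bowtie \phi$ directly. If instead $\C,X \models \neg\phi$, then applying Lemma \ref{lemma.modelsbowtie} to $\neg\phi$ gives $\C,X \bowtie \neg\phi$, and the clause $\C,X \bowtie \neg\phi$ iff $\C,X \bowtie \phi$ from Definition \ref{def.defsat} then delivers $\C,X \bowtie \phi$.

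For the left-to-right direction, I would assume $\C,X \bowtie \phi$ and consider the classical dichotomy: either $\C,X \models \phi$, in which case we are done, or $\C,X \not\models \phi$. In the second case, Lemma \ref{lemma.defidual}, whose premise $\C,X \bowtie \phi$ is precisely our assumption, converts $\C,X \not\models \phi$ into $\C,X \models \neg\phi$, again giving a disjunct.

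There is no real obstacle here, since both lemmas have already been established; the only thing to be a little careful about is not to treat $\C,X \not\models \phi$ as if it automatically meant $\C,X \models \neg\phi$ (which is false in general for this three-valued semantics), and to make explicit that this implication is licensed only because we have $\C,X \bowtie \phi$ in hand, so that Lemma \ref{lemma.defidual} applies.
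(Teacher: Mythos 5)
Your proof is correct and matches the paper's intent: the paper states this as an immediate corollary of Lemma~\ref{lemma.defidual} (together with the fact that truth implies definability, Lemma~\ref{lemma.modelsbowtie}), with no separate proof given, and your two-direction argument is exactly the derivation being left implicit. Your closing caution about not conflating $\C,X\not\models\phi$ with $\C,X\models\neg\phi$ except under the definability hypothesis is precisely the right point of care.
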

In other words, if $\phi$ is defined in $(\C,X)$ then it has a truth value there, and vice versa. This may help to justify the definition of semantic equivalence of formulas. 
\begin{lemma} Let $\phi,\psi\in\lang_K$. Equivalent formulations of `$\phi$ is equivalent to $\psi$' are: 
\begin{enumerate}
\item For all $(\C,X)$: ($\C,X \models \phi$ iff $\C,X \models \psi$), ($\C,X \models \neg\phi$ iff $\C,X \models \neg\psi$), and ($\C,X \not\bowtie\phi$ iff $\C,X\not\bowtie\psi$). 
\item For all $(\C,X)$: ($\C,X \bowtie\phi$ iff $\C,X\bowtie\psi$), and ($\C,X \bowtie\phi$ and $\C,X\bowtie\psi$) imply ($\C,X \models \phi$ iff $\C,X \models \psi$).
\end{enumerate}\vspace{-.8cm}
\end{lemma}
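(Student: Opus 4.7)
The plan is to show each direction using the three previously established facts that connect definability with truth: Lemma~\ref{lemma.modelsbowtie} ($\C,X\models\phi$ implies $\C,X\bowtie\phi$), Lemma~\ref{lemma.defidual} (under definability, $\not\models\phi$ coincides with $\models\neg\phi$), and Corollary~\ref{corollary.bowtiemodels} ($\C,X\bowtie\phi$ iff $\phi$ has a truth value there). The intuition is that in a three-valued setting with only one ``undefined'' value, knowing how $\models$ behaves on defined instances plus knowing where $\phi$ is defined already determines where $\neg\phi$ holds, because falsity is recoverable as ``defined but not true''.

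For the direction from 1 to 2, I would first observe that the third clause of 1 gives ``$\C,X\not\bowtie\phi$ iff $\C,X\not\bowtie\psi$'', whose contrapositive is precisely the first clause of 2. The second clause of 2 is then immediate: it is a conditional weakening of the first clause of 1, which already asserts the biconditional $\C,X\models\phi$ iff $\C,X\models\psi$ without any precondition.

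For the harder direction, from 2 to 1, I would first dispatch the third clause of 1 by contraposing the first clause of 2. For the first clause of 1, suppose $\C,X\models\phi$; by Lemma~\ref{lemma.modelsbowtie} we get $\C,X\bowtie\phi$, hence $\C,X\bowtie\psi$ by 2, and then the second clause of 2 yields $\C,X\models\psi$; the converse is symmetric. The second clause of 1 is the delicate one: assuming $\C,X\models\neg\phi$, Lemma~\ref{lemma.modelsbowtie} gives $\C,X\bowtie\neg\phi$, which by the semantic clause for $\bowtie$ on negation is the same as $\C,X\bowtie\phi$; this in turn gives $\C,X\bowtie\psi$ via 2, and hence $\C,X\bowtie\neg\psi$. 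Now I would invoke Lemma~\ref{lemma.defidual} twice: from $\C,X\bowtie\phi$ and $\C,X\models\neg\phi$ we deduce $\C,X\not\models\phi$; the contrapositive of the second clause of 2 gives $\C,X\not\models\psi$; and combining with $\C,X\bowtie\psi$ via Lemma~\ref{lemma.defidual} yields $\C,X\models\neg\psi$. The converse is again symmetric.

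The only real obstacle is the negation clause, because 2 speaks only of $\models$ on defined arguments and says nothing overt about $\models\neg{\cdot}$; the trick is to rewrite ``defined and false'' as ``defined and not true'' using Lemma~\ref{lemma.defidual}, which is exactly what lets 2 encode information about negations. Once this transfer is made explicit, the proof reduces to a short diagram-chase through the three auxiliary results and requires no induction on $\phi$.
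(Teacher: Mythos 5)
Your proof is correct and follows essentially the same route as the paper, which disposes of the lemma in one line by appealing to propositional reasoning together with the mutual exclusivity of $\C,X\models\phi$, $\C,X\models\neg\phi$, and $\C,X\not\bowtie\phi$ (citing Lemma~\ref{lemma.defidual} and Corollary~\ref{corollary.bowtiemodels}). Your write-up simply makes that terse argument explicit, correctly identifying the negation clause as the only nontrivial transfer and handling it via Lemma~\ref{lemma.defidual} exactly as the paper intends.
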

\begin{proof} This easily follows from propositional reasoning and the observation that $\C,X\models\phi$, $\C,X\models\neg\phi$, and $\C,X\not\bowtie\phi$ are mutually exclusive (see also Lemma~\ref{lemma.defidual} and Corollary~\ref{corollary.bowtiemodels}.)
\end{proof}
We will sometimes use the latter formulation of equivalence instead of the former. Neither   is equivalent to ``For all $(\C,X)$: $\C,X \models \phi$ iff $\C,X \models \psi$.'' For example, $\phi = p_a \et \neg p_a$ and $\psi = p_b \et \neg  p_b$ would be `equivalent'  in that sense, as they are never true.
\begin{lemma} \label{lemma.notneg}
Any formula $\phi$ is equivalent to $\neg\neg\phi$.
\end{lemma}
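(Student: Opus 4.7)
The plan is to check each of the three clauses in Definition~\ref{def.defsat} of equivalence, directly unfolding the semantics for negation, and using Lemma~\ref{lemma.modelsbowtie}, Lemma~\ref{lemma.defidual}, and Corollary~\ref{corollary.bowtiemodels} as the only real tools. Fix $(\C,X)$ throughout.

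The definability clause is essentially immediate. By the clause $\C,X\bowtie\neg\psi$ iff $\C,X\bowtie\psi$ applied twice, we get $\C,X\bowtie\neg\neg\phi$ iff $\C,X\bowtie\neg\phi$ iff $\C,X\bowtie\phi$. This handles clause~3 of the equivalence definition.

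For clause~1 (equi-truth), unfold $\C,X\models\neg\neg\phi$: by the semantic clause for negation this is $\C,X\bowtie\neg\phi$ and $\C,X\not\models\neg\phi$. The first conjunct is equivalent to $\C,X\bowtie\phi$ by the step above. Given that $\C,X\bowtie\phi$ holds, Lemma~\ref{lemma.defidual} gives $\C,X\not\models\neg\phi$ iff $\C,X\models\phi$. So $\C,X\models\neg\neg\phi$ iff ($\C,X\bowtie\phi$ and $\C,X\models\phi$), and this simplifies to $\C,X\models\phi$ by Lemma~\ref{lemma.modelsbowtie}. The converse direction (from $\C,X\models\phi$ to $\C,X\models\neg\neg\phi$) uses the same chain read backwards, with $\C,X\bowtie\phi$ secured again by Lemma~\ref{lemma.modelsbowtie}.

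For clause~2 (equi-falsity), the claim $\C,X\models\neg\phi$ iff $\C,X\models\neg\neg\neg\phi$ is just clause~1 applied to $\neg\phi$ in place of $\phi$, so nothing new needs to be done. I do not expect any real obstacle here; the only subtlety is remembering that $\C,X\not\models\neg\phi$ is genuinely three-valued and so requires the definability premise before invoking Lemma~\ref{lemma.defidual}, which is exactly what the semantics of outer negation provides for us.
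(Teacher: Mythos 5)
Your proof is correct and follows essentially the same route as the paper's: both reduce everything to the $\bowtie$-clause for negation and Lemma~\ref{lemma.defidual}, obtaining $\C,X \models \neg\neg\phi$ iff $\C,X \not\models \neg\phi$ iff $\C,X \models \phi$ under the definedness assumption. The only difference is presentational: you verify the three clauses of Definition~\ref{def.defsat} separately (handling equi-falsity by instantiating equi-truth at $\neg\phi$), while the paper invokes the second, equivalent formulation of equivalence and so dispatches truth and falsity in one pass.
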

\begin{proof}
Let $(\C,X)$ be such that $\C,X \bowtie \phi$. Then also $\C,X \bowtie \neg\phi$ as well as $\C,X \bowtie \neg\neg\phi$. Using Lemma~\ref{lemma.defidual} twice, we obtain that: $\C,X \models \neg\neg\phi$, iff $\C,X \not\models \neg\phi$, iff $\C,X \models \phi$. 
\end{proof}
Let $\xi[p/\phi]$ be uniform substitution in $\xi$ of $p$ by $\phi$ (replace all occurrences in $\xi$ of $p$ with $\phi$).
\begin{lemma} \label{anotherone}
Let $(\C,X)$ be given. 
If $\phi$ is equivalent to $\psi$, then $\C,X \bowtie \xi[p/\phi]$ iff $\C,X \bowtie \xi[p/\psi]$.
\end{lemma}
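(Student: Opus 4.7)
The plan is to proceed by induction on the structure of $\xi \in \lang_K$. Although the lemma fixes a single $(\C,X)$, the modal case requires applying the inductive hypothesis at other simplices $Y \in C$, so I would formulate the inductive statement as ``for all pointed models $(\C,X)$, $\C,X \bowtie \xi[p/\phi]$ iff $\C,X \bowtie \xi[p/\psi]$'' and prove this by induction on $\xi$.

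For the base case, suppose $\xi = q_b$ for some variable $q_b \in P_b$. If $q_b$ is not $p$, then $\xi[p/\phi] = \xi[p/\psi] = q_b$ and the biconditional is trivial. If $q_b = p$, then $\xi[p/\phi] = \phi$ and $\xi[p/\psi] = \psi$, and the conclusion is exactly the third clause of the definition of equivalence (Definition~\ref{def.defsat}): $\C,X \not\bowtie \phi$ iff $\C,X \not\bowtie \psi$, equivalently $\C,X \bowtie \phi$ iff $\C,X \bowtie \psi$.

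The inductive steps follow mechanically from the clauses for $\bowtie$. For $\xi = \neg \chi$, we have $\C,X \bowtie (\neg\chi)[p/\phi]$ iff $\C,X \bowtie \chi[p/\phi]$, which by the induction hypothesis at $(\C,X)$ holds iff $\C,X \bowtie \chi[p/\psi]$, iff $\C,X \bowtie (\neg\chi)[p/\psi]$. For $\xi = \chi_1 \et \chi_2$, the definability of $(\chi_1\et\chi_2)[p/\phi]$ decomposes into the definability of each substituted conjunct, and the induction hypothesis on $\chi_1$ and $\chi_2$ gives the equivalence. For $\xi = \M_a \chi$, by definition $\C,X \bowtie \M_a(\chi[p/\phi])$ iff there exists $Y \in C$ with $a \in \chi(X \inter Y)$ such that $\C,Y \bowtie \chi[p/\phi]$; applying the induction hypothesis at $(\C,Y)$ replaces $\chi[p/\phi]$ with $\chi[p/\psi]$ without changing the witness, so the whole existential statement transforms into $\C,X \bowtie \M_a(\chi[p/\psi])$, and the converse direction is symmetric.

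There is no real obstacle here: the only point that requires care is noticing that the modal case needs the hypothesis to apply at arbitrary simplices rather than just the distinguished one $X$, which is handled simply by stating the inductive claim with a universal quantifier over $(\C,X)$ from the outset. Everything else is a direct application of the clauses defining $\bowtie$ and of the equivalence clause for $\not\bowtie$.
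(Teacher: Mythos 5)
Your proof is correct and follows essentially the same route as the paper's: induction on $\xi$, with the base case discharged by the third clause of the definition of equivalence and the modal case handled by applying the inductive hypothesis at the witness simplex $Y$. Your explicit remark that the inductive claim must be quantified over all pointed models is a point the paper leaves implicit but relies on in exactly the same way.
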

\begin{proof}
All cases of the proof by induction on $\xi$ are obvious (where the base case holds because the assumption entails that $\C,X \bowtie \phi$ iff $\C,X\bowtie \psi$) except the one for knowledge.

\bigskip

\noindent
$\C,X \bowtie (\M_a\xi)[p/\phi]$ \\
iff \\
$\C,X \bowtie \M_a(\xi[p/\phi])$ \\
iff \\
$\C,Y \bowtie \xi[p/\phi]$ for some $Y$ with $a \in \chi(X \inter Y)$ \\
iff \hfill induction \\
$\C,Y \bowtie \xi[p/\psi]$ for some $Y$ with $a \in \chi(X \inter Y)$ \\
iff  \\
$\C,X \bowtie \M_a(\xi[p/\psi])$ \\
iff  \\
$\C,X \bowtie (\M_a\xi)[p/\psi]$
\end{proof}

\begin{lemma} \label{lemma.seven}
If $\phi$ is equivalent to $\psi$, then $\xi[p/\phi]$ is equivalent to $\xi[p/\psi]$.
\end{lemma}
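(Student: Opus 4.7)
The plan is to invoke the second characterization of equivalence stated just after Corollary~\ref{corollary.bowtiemodels}, so that it suffices to establish two things: (i) $\C,X \bowtie \xi[p/\phi]$ iff $\C,X \bowtie \xi[p/\psi]$, and (ii) whenever both are defined, $\C,X \models \xi[p/\phi]$ iff $\C,X \models \xi[p/\psi]$. Claim (i) is exactly Lemma~\ref{anotherone}, so the real work reduces to (ii), which I prove by structural induction on $\xi$.

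The base case splits on whether $\xi$ is the atom $p$ or not. If $\xi = p$, then $\xi[p/\phi] = \phi$ and $\xi[p/\psi] = \psi$, and the claim is immediate from the assumption that $\phi$ is equivalent to $\psi$. If $\xi$ is any other atom, the two substitutions are syntactically identical and there is nothing to show. The conjunction case is routine: $\C,X \bowtie (\xi_1 \et \xi_2)[p/\phi]$ unfolds to joint definability of both conjuncts after substitution, and the IH applied to each conjunct (together with Lemma~\ref{anotherone} for the matching definabilities on the $\psi$-side) delivers the truth equivalence. The modality case $\xi = \M_a \xi'$ is direct: $\C,X \models \M_a(\xi'[p/\phi])$ unpacks to the existence of $Y \in C$ with $a \in \chi(X \inter Y)$ and $\C,Y \models \xi'[p/\phi]$; applying the truth part of the IH at $(\C,Y)$ --- whose joint definability hypothesis is furnished by Lemma~\ref{anotherone} at $Y$ --- gives $\C,Y \models \xi'[p/\psi]$, and rewrapping yields $\C,X \models \M_a(\xi'[p/\psi])$.

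The delicate step is the negation case $\xi = \neg \xi'$, because the three-valued semantics of negation explicitly mixes in definability. Suppose $\C,X \bowtie \neg(\xi'[p/\phi])$ and $\C,X \bowtie \neg(\xi'[p/\psi])$; by the $\bowtie$-clause for negation these yield $\C,X \bowtie \xi'[p/\phi]$ and $\C,X \bowtie \xi'[p/\psi]$. Then Lemma~\ref{lemma.defidual} gives $\C,X \models \neg(\xi'[p/\phi])$ iff $\C,X \not\models \xi'[p/\phi]$, which by the truth part of the IH (applicable precisely because of the joint definability just obtained) is iff $\C,X \not\models \xi'[p/\psi]$, and by another application of Lemma~\ref{lemma.defidual} iff $\C,X \models \neg(\xi'[p/\psi])$.

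I expect the negation case to be the main obstacle, since it is the only place where one must simultaneously deploy the definability-preservation result (Lemma~\ref{anotherone}), the truth-from-definedness equivalence (Lemma~\ref{lemma.defidual}), and the truth part of the inductive hypothesis. Once that case is organised around Lemma~\ref{lemma.defidual}, the rest of the induction is essentially compositional bookkeeping.
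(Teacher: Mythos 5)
Your proposal is correct and takes essentially the same route as the paper: induction on $\xi$, reducing to truth-equivalence under joint definability via Lemma~\ref{anotherone}, handling negation with Lemma~\ref{lemma.defidual}, and transferring the inductive hypothesis to the witness simplex $Y$ in the modal case. The only small imprecision is in the modal case: Lemma~\ref{anotherone} alone gives only the \emph{biconditional} of definability at $Y$, so to obtain the joint definability needed for the inductive hypothesis you must first derive $\C,Y \bowtie \xi'[p/\phi]$ from $\C,Y \models \xi'[p/\phi]$ via Lemma~\ref{lemma.modelsbowtie}, exactly as the paper does.
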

\begin{proof}
This is shown by induction on $\xi$. We show the entire proof, but only the epistemic case is of interest.

The cases for propositional variables follow directly.
\begin{itemize}
\item $p[p/\phi] = \phi$ is equivalent to $p[p/\psi] = \psi$.
\item for $q\neq p$, $q[p/\phi] = q$ is equivalent to $q[p/\psi] = q$.
\end{itemize}
For the remaining cases, let $(\C,X)$ be given.  
From the assumption and Lemma~\ref{anotherone} we obtain that $\C,X \bowtie \xi[p/\phi]$ iff $\C,X \bowtie \xi[p/\psi]$. Let us therefore assume $\C,X \bowtie \xi[p/\phi]$ and $\C,X \bowtie \xi[p/\psi]$. It remains to prove that $\C,X \models \xi[p/\phi]$ iff $\C,X \models \xi[p/\psi]$.

\begin{itemize}
\item On the definability assumption we have that (Lemma~\ref{lemma.defidual}) $\C,X\models\neg\xi[p/\phi]$ iff $\C,X\not\models\xi[p/\phi]$, and also $\C,X\models\neg\xi[p/\psi]$ iff $\C,X\not\models\xi[p/\psi]$. Therefore:

\bigskip

\noindent
$\C,X\models\neg\xi[p/\phi]$ iff $\C,X\models\neg\xi[p/\psi]$, \\ iff \\
$\C,X\not\models\xi[p/\phi]$ iff $\C,X\not\models\xi[p/\psi]$, \\ iff \\
$\C,X\models\xi[p/\phi]$ iff $\C,X\models\xi[p/\psi]$.

\bigskip

\noindent
The last holds by inductive assumption.
\item $\C,X\models (\xi\et\xi')[p/\phi]$, iff  $\C,X\models \xi[p/\phi]\et\xi'[p/\phi]$, iff $\C,X\models \xi[p/\phi]$ and $\C,X\models\xi'[p/\phi]$, iff (by induction) $\C,X\models \xi[p/\psi]$ and $\C,X\models\xi'[p/\psi]$, (...) iff $\C,X\models (\xi\et\xi')[p/\psi]$.

\item $\C,X\models (\M_a\xi)[p/\phi]$, iff $\C,X\models \M_a(\xi[p/\phi])$, iff $\C,Y\models \xi[p/\phi]$ for some $Y \in C$ with $a \in \chi(X \inter Y)$. 

We now use the inductive hypothesis that $\xi[p/\phi]$ is equivalent to $\xi[p/\psi]$. Using Lemma~\ref{anotherone} again but now for $(\C,Y)$ we obtain that $\C,Y \bowtie \xi[p/\phi]$ iff $\C,Y \bowtie \xi[p/\psi]$. Also, from $\C,Y\models \xi[p/\phi]$ and Lemma~\ref{lemma.modelsbowtie} it follows that  $\C,Y\bowtie \xi[p/\phi]$. Therefore assuming $\C,Y \bowtie \xi[p/\phi]$ and $\C,Y \bowtie \xi[p/\psi]$ we may conclude that $\C,Y \models \xi[p/\phi]$ iff $\C,Y \models \xi[p/\psi]$. So that, continuing the proof:

$\C,Y\models \xi[p/\phi]$ for some $Y \in C$ with $a \in \chi(X \inter Y)$, iff
  $\C,Y\models \xi[p/\psi]$ for some $Y \in C$ with $a \in \chi(X \inter Y)$, iff (...) $\C,X\models (\M_a\xi)[p/\psi]$.
\end{itemize}
\vspace{-.8cm}
\end{proof}

Because other logical connectives are defined by notational abbreviation, we have to prove that the implied truth conditions for those other connectives still correspond to our intuitions. This is indeed the case, however, on the strict condition that both constituents of binary connectives are defined.
\begin{lemma} \label{lemma.diamond}
Let $\C = (C, \chi,\ell)$, $X \in C$, and $\phi,\psi \in \lang_K$ be given. Then:
\[ \begin{array}{llll}
\C,X \models \phi \vel \psi & \text{iff} & \C,X \bowtie \phi, \C,X \bowtie \psi, \ \text{and } \C,X \models \phi \ \text{or} \ \C,X \models \psi \\
\C,X \models \phi \imp \psi & \text{iff} & \C,X \bowtie \phi, \C,X \bowtie \psi, \ \text{and } \C,X \models \phi \ \text{implies} \ \C,X \models \psi \\
\C,X \models \phi \eq \psi& \text{iff} &\C,X \bowtie \phi, \C,X \bowtie \psi, \text{ and } \C,X \models \phi \ \text{iff} \ \C,X \models \psi \\
\C,X \models K_a\phi & \text{iff} & \C,X \bowtie K_a\phi, \ \text{and} \\ && \C,Y \bowtie \phi \text{ implies } \C,Y \models \phi \ \text{for all} \ Y \in C \ \text{with} \ a \in \chi(X \inter Y)
\end{array}\]
\end{lemma}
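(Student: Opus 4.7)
The plan is to unfold each abbreviation into its primitive form ($\neg$, $\et$, $\M_a$) and then simplify the resulting expression by means of the primitive semantics together with Lemma~\ref{lemma.defidual}. The key observation is that whenever a formula is known to be definable at $(\C,X)$, the three-valued picture collapses to the two-valued one via $\C,X \not\models \phi$ iff $\C,X \models \neg\phi$.

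For disjunction, $\phi\vel\psi := \neg(\neg\phi\et\neg\psi)$. Unfolding $\models$ for the leading negation yields definability of $\neg\phi\et\neg\psi$, which by the $\bowtie$-clauses for $\neg$ and $\et$ is the conjunction $\C,X\bowtie\phi$ and $\C,X\bowtie\psi$, together with the failure $\C,X\not\models\neg\phi\et\neg\psi$. Distributing the latter over the conjunction and applying Lemma~\ref{lemma.defidual} to each conjunct (now licensed because both $\phi$ and $\psi$ are definable at $X$) replaces $\C,X\not\models\neg\phi$ by $\C,X\models\phi$ and similarly for $\psi$, giving the stated equivalence. The implication case, with $\phi\imp\psi := \neg(\phi\et\neg\psi)$, is handled in the same way; the biconditional is then immediate because it is by definition the conjunction of the two implications, whose individual definability conditions coincide with simultaneous definability of $\phi$ and $\psi$.

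The epistemic case is the most delicate but follows the same recipe. Since $K_a\phi := \neg\M_a\neg\phi$, the clause for negation gives $\C,X\bowtie\M_a\neg\phi$ (which by the $\bowtie$-clauses for $\neg$ and $\M_a$ is exactly $\C,X\bowtie K_a\phi$) together with $\C,X\not\models\M_a\neg\phi$. The latter unfolds to: for every $Y\in C$ with $a\in\chi(X\inter Y)$, $\C,Y\not\models\neg\phi$. Here Lemma~\ref{lemma.defidual} cannot be applied unconditionally, because $\phi$ may be undefined at $Y$; instead one splits into cases. If $\C,Y\not\bowtie\phi$, then the conclusion ``$\C,Y\bowtie\phi$ implies $\C,Y\models\phi$'' holds vacuously. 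If $\C,Y\bowtie\phi$, then Lemma~\ref{lemma.defidual} converts $\C,Y\not\models\neg\phi$ into $\C,Y\models\phi$. Combined, these give exactly the stated right-hand side.

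The main (mild) obstacle is precisely this last point for $K_a$: the universal quantification over $Y$ ranges over simplices where $\phi$ may well be undefined, so the classical rewriting of $\not\models\neg\phi$ as $\models\phi$ cannot be performed uniformly. This is why the conclusion must be phrased as a conditional $\C,Y\bowtie\phi \Imp \C,Y\models\phi$ rather than the more familiar unconditional $\C,Y\models\phi$ for all accessible $Y$.
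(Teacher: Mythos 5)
Your proposal is correct and follows essentially the same route as the paper's proof: unfold each abbreviation into the primitives, extract the definability conditions from the $\bowtie$-clauses, and use Lemma~\ref{lemma.defidual} to collapse $\not\models\neg\phi$ to $\models\phi$ where definability is available, while for the $K_a$ case keeping the conditional form since $\phi$ may be undefined at some accessible $Y$. The only cosmetic difference is that you handle the $K_a$ case by a case split on $\C,Y\bowtie\phi$ where the paper directly negates the conjunction in the semantics of $\neg$; these are the same argument.
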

\begin{proof} The proofs are straightforward, but necessary, where instead of the notational abbrevations we use their definitions.
\begin{itemize}
\item Firstly, $\C,X \models \neg(\neg\phi \et \neg\psi)$, iff $\C,X \bowtie \neg\phi \et \neg\psi$ and $\C,X \not\models \neg\phi \et \neg\psi$. Then, $\C,X \bowtie \neg\phi \et \neg\psi$, iff, respectively, $\C,X \bowtie \neg\phi$ and $\C,X \bowtie \neg\psi$, iff $\C,X \bowtie \phi$ and $\C,X \bowtie \psi$. Also, $\C,X \not\models \neg\phi \et \neg\psi$, iff $\C,X \not\models \neg\phi$ or $\C,X \not\models \neg\psi$. Now using that $\C,X \bowtie \phi$ and $\C,X \bowtie \psi$, and Lemma~\ref{lemma.defidual}, $\C,X \not\models \neg\phi$ or $\C,X \not\models \neg\psi$, iff $\C,X \models \phi$ or $\C,X \models \psi$.

\item Similar to the first item.

\item The third item can be obtained by seeing $\phi \eq \psi$ as the abbreviation of $(\phi\imp\psi)\et(\psi\imp\phi)$, and then proceeding as before.

\item By definition of the semantics for negation, $\C,X \models \neg \M_a \neg \phi$ iff $\C,X \bowtie \M_a \neg \phi$ and $\C,X \not \models \M_a \neg\phi$. Then, $\C,X \bowtie \M_a \neg \phi$, iff $\C,X \bowtie \neg \M_a \neg \phi$, iff (by definition of the abbreviation) $\C,X \bowtie K_a \phi$, which establishes half of the proof obligation. Now:

\medskip

\noindent
$\C,X \not \models \M_a \neg\phi \\
\text{iff} \\
\text{not}[ \ \C,Y \models \neg\phi \text{ for some } Y \text{ with } a \in \chi(X \inter Y) \ ] \\
\text{iff} \\
\C,Y \not\models \neg\phi \text{ for all } Y \text{ with } a \in \chi(X \inter Y) \\
\text{iff} \\
\text{not}[ \ \C,Y \bowtie \phi \text{ and } \C,Y \not\models\phi \ ] \text{ for all } Y \text{ with } a \in \chi(X \inter Y) \\
\text{iff} \\
\C,Y \bowtie \phi \text{ implies } \C,Y \models\phi \text{ for all } Y \text{ with } a \in \chi(X \inter Y)$

\medskip

\noindent That establishes the other half of the proof obligation.
\end{itemize}
\vspace{-.8cm}
\end{proof}
The direct semantics for other propositional and modal connectives of Lemma~\ref{lemma.diamond} is useful in the formulation of other results and in the proofs in the continuation, but also demonstrates our preference for the chosen linguistic primitives. 

In the semantics of conjunction we need not be explicit about definability. But in the semantics for the other binary connectives we need to be explicit about definability. A formula $\phi\eq\psi$ that is an equivalence may well be true and even valid even when $\phi$ is not equivalent to $\psi$. For example, $(p_a \vel \neg p_a) \eq (p_b \vel \neg p_b)$ is valid (whenever a simplex contains vertices for $a$ and for $b$, both are true) but $p_a \vel \neg p_a$ is not equivalent to $p_b \vel \neg p_b$ (given a simplex containing a vertex for $a$ but not for $b$, the first is defined but not the second).

It is easy to see for that all $\phi \in \lang_K$, $\phi\vel\neg\phi$ is valid. We only need to consider $(\C,X)$ such that $\C,X\bowtie \phi$, so that $\phi$ has a value in $(\C,X)$: either $\C,X\models\phi$ or $\C,X\models\neg\phi$, and therefore, with Lemma~\ref{lemma.diamond}, $\C,X \models \phi\vel\neg\phi$. Still, $\C,X\not\models \phi\vel\neg\phi$ if  $\C,X\not\bowtie \phi$ (see Example~\ref{example.zzz}).

The semantics of $\M_a\phi$ we find more elegant than those of $K_a\phi$. In the  semantics of $K_a\phi$ we can replace the part $\C,X \bowtie K_a \phi$ (also repeated below) by either of the following without changing the meaning of the definition.
\[\begin{array}{lll}
(i) & \C,X \bowtie K_a \phi & \quad\quad\text{as in Lemma~\ref{lemma.diamond}} \\
(ii) & \C,Y \bowtie \phi \text{ for some } Y \text{ with } a \in \chi(X \inter Y) \\
(iii) \quad & \C,Y \models \phi \text{ for some } Y \text{ with } a \in \chi(X \inter Y)
\end{array}\]
In proofs we often use variation $(iii)$.

A consequence of the knowledge semantics is that the agent may know a proposition even if that proposition is not defined in all possible simplices. In particular the proposition may not be true in the actual simplex (although it cannot be false there): $K_a \phi$ rather means ``as far as I know, $\phi$'' than ``I know $\phi$.''  See Example~\ref{example.zzz} below. 

A good way to understand knowledge on impure complexes is dynamically: even if $K_a \phi$ is true, an update (such as a model restriction, or a subdivision) may be possible that makes agent $a$ learn that before the update $\phi$ was not true, namely when $\phi$ was not defined. Even when $K_a\phi$ is true, agent $a$ may be uncertain whether an agent involved in $\phi$ is alive, and the update can confirm that the agent was already dead.

\begin{example}\label{example.zzz}
We recall Example~\ref{example.xxx}, about the simplicial model $\C$ here depicted again.
\begin{center}
\begin{tikzpicture}[round/.style={circle,fill=white,inner sep=1}]
\fill[fill=gray!25!white] (2,0) -- (4,0) -- (3,1.71) -- cycle;
\node[round] (b1) at (0,0) {$1_b$};
\node[round] (b0) at (4,0) {$0_b$};
\node[round] (c1) at (3,1.71) {$1_c$};
\node[round] (a0) at (2,0) {$0_a$};
\node (f1) at (3,.65) {$Y$};
\draw[-] (b1) -- node[above] {$X$} (a0);
\draw[-] (a0) -- (b0);
\draw[-] (b0) -- (c1);
\draw[-] (a0) -- (c1);
\end{tikzpicture}
\end{center}
Observe that:
\begin{itemize}
\item $\C,X \models p_b\vel\neg p_b$ (because $b \in \chi(X))$ but $\C,X\not\models p_c\vel\neg p_c$ (because $c \notin \chi(X)$).
\item $\C,X \models \M_a p_b \imp \M_a p_c$ but $\C,X \not\models p_b \imp p_c$
\item $\C,X\models K_a p_c$ but $\C,X\not\models p_c$. For $\C,X\models K_a p_c$ it suffices that $\C,Y \models p_c$, as $\C,X \not\bowtie p_c$.
\item $\C,X\not\models K_a p_c \imp p_c$, because $\C,X\not\bowtie K_a p_c \imp p_c$.
\end{itemize}
Formula $K_a p_c$ is true `by default', because given the two facets $X$ and $Y$ that agent $c$ considers possible, as far as $a$ knows, $p_c$ is true. This knowledge is defeasible because $a$ may learn that the actual facet is $X$ and not $Y$, in which case $a$ has no knowledge about agent $c$. However, $a$ considered it possible that she would have learnt that it was $Y$, in which case her knowledge was justified.

Although $\C,X\not\models K_a p_c \imp p_c$, we still have that $\models K_a p_c \imp p_c$, as for that we only need to consider $(\C',X')$ such that both $K_a p_c$ and $p_c$ are defined in $X'$. The validity of $K_a p_c \imp p_c$ means that there is no $(\C',X')$ such that $\C',X' \models \neg p_c \et K_a p_c$: knowledge cannot be verifiably incorrect.
\end{example}

\begin{example} \label{example.stackedkn}
We provide an example of stacked knowledge operators. Consider the following impure simplicial model $\C$.
\begin{center}
\begin{tikzpicture}[round/.style={circle,fill=white,inner sep=1}]
\fill[fill=gray!25!white] (2,0) -- (4,0) -- (3,1.71) -- cycle;
\node[round] (b0) at (4,0) {$1_b$};
\node[round] (c1) at (3,1.71) {$1_c$};
\node[round] (a0) at (2,0) {$1_a$};
\node (f1) at (3,.65) {$X$};
%
\draw[-] (a0) -- (b0);
\draw[-] (b0) -- (c1);
\draw[-] (a0) -- (c1);

\fill[fill=gray!25!white] (6,0) -- (8,0) -- (7,1.71) -- cycle;
\node[round] (b0r) at (8,0) {$0_b$};
\node[round] (c1r) at (7,1.71) {$1_c$};
\node[round] (a0r) at (6,0) {$1_a$};
\node (f1r) at (7,.65) {$W$};
\draw[-] (a0r) -- (b0r);
\draw[-] (b0r) -- (c1r);
\draw[-] (a0r) -- (c1r);
\node[round] (b0m) at (5,1.71) {$0_b$};
\node[round] (b0mb) at (5,1.3) {$v$};
\draw[-] (c1) -- node[below] {$Y$} (b0m);
\draw[-] (c1r) -- node[below] {$Z$} (b0m);
\end{tikzpicture}
\end{center}
It holds that $\C,v \models K_b K_c p_a$. This is because $v \in Y$ and $v \in Z$ and $\C,Y \models K_c p_a$ as well as $\C,Z \models K_c p_a$. In all facets agent $b$ considers possible and wherein $K_c p_a$ is defined (and it is defined in both), it is true that agent $c$ knows that the value of $a$'s local variable is $1$. But in all facets agent $b$ considers possible agent $a$ is dead, so $b$ `knows' that $a$ is dead (we cannot say this in the logical language --- see Section~\ref{section.further} for a discussion). Also, $\C,v \not\models K_c p_a$ because $c \notin \chi(v) = \{b\}$. And also, $\C,v \not\models p_a$ because $a \notin \{b\}$. And $\C,v \not\models K_b p_a$. Summing up, $K_b K_c p_a$ is true whereas none of $K_c p_a$, $K_b p_a$, and $p_a$ are true.  
\end{example}

\begin{lemma} \label{lemma.upbowtie}
If $\C,X \bowtie \phi$ and $Y \in C$ with $X \subseteq Y$, then $\C,Y \bowtie \phi$.
\end{lemma}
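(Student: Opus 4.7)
The plan is a routine induction on the structure of $\phi \in \lang_K$, using the inductive definition of $\bowtie$ from Definition~\ref{def.defsat}. The key observation driving all cases is that $X \subseteq Y$ entails $\chi(X) \subseteq \chi(Y)$ and, for any $Z \in C$, also $X \inter Z \subseteq Y \inter Z$, hence $\chi(X \inter Z) \subseteq \chi(Y \inter Z)$.

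For the base case $\phi = p_a$: by definition $\C, X \bowtie p_a$ means $a \in \chi(X)$, so $a \in \chi(Y)$, hence $\C, Y \bowtie p_a$. For $\phi = \neg \psi$: by definition $\C, X \bowtie \neg \psi$ iff $\C, X \bowtie \psi$, from which the induction hypothesis yields $\C, Y \bowtie \psi$, i.e., $\C, Y \bowtie \neg \psi$. For $\phi = \psi \et \chi$: both $\C, X \bowtie \psi$ and $\C, X \bowtie \chi$ hold, so by induction both $\C, Y \bowtie \psi$ and $\C, Y \bowtie \chi$, giving $\C, Y \bowtie \psi \et \chi$.

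The modal case is the only one that is not purely mechanical, but it still goes through without invoking the induction hypothesis: if $\C, X \bowtie \M_a \psi$, then by definition there is some $Z \in C$ with $a \in \chi(X \inter Z)$ and $\C, Z \bowtie \psi$. Since $X \subseteq Y$ implies $a \in \chi(X \inter Z) \subseteq \chi(Y \inter Z)$, the very same witness $Z$ shows $\C, Y \bowtie \M_a \psi$.

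I do not anticipate a real obstacle: no step requires a delicate argument, and in particular the modal clause never needs to replace the witness simplex, so we can reuse $Z$ verbatim when moving from $X$ to $Y$. The proof is essentially a transcription of the inductive clauses of $\bowtie$ combined with the monotonicity of $\chi$ under set inclusion.
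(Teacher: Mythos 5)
Your proof is correct and follows essentially the same route as the paper: a structural induction where the base case uses $\chi(X)\subseteq\chi(Y)$, the Boolean cases are immediate from the induction hypothesis, and the modal case reuses the same witness $Z$ since $a\in\chi(X\inter Z)$ implies $a\in\chi(Y\inter Z)$. The only quibble is notational: avoid writing the conjunction case as $\psi\et\chi$, since $\chi$ already denotes the chromatic map.
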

\begin{proof}
By induction on formula structure. Let $Y \in C$ with $X \subseteq Y$.
\begin{itemize}
\item $\C,X \bowtie p_a$, iff $a \in \chi(X)$, which implies $a \in \chi(Y)$, iff $\C,Y \bowtie p_a$. 
\item $\C,X \bowtie \neg\phi$, iff $\C,X \bowtie \phi$, which implies (by induction) $\C,Y \bowtie \phi$, iff $\C,Y \bowtie \neg\phi$. 
\item $\C,X \bowtie \phi\et\psi$, iff $\C,X \bowtie \phi$ and $\C,X \bowtie \psi$, which implies (by induction) $\C,Y \bowtie \phi$ and $\C,Y \bowtie \psi$, iff $\C,Y \bowtie \phi\et\psi$.  
\item $\C,X \bowtie \M_a\phi$, iff $\C,Z \bowtie \phi$ for some $Z \in C$ with $a \in \chi(X \inter Z)$, which implies (as $X \subseteq Y$) $\C,Z \bowtie \phi$ for some $Z \in C$ with $a \in \chi(Y \inter Z)$, iff $\C,Y \bowtie \M_a\phi$.
\end{itemize}
\vspace{-.8cm}
\end{proof}

\begin{proposition}\label{prop.star}
If $\C,X \models \phi$ and $Y \in C$ such that $X \subseteq Y$, then $\C,Y \models \phi$.
\end{proposition}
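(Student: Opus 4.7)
I plan to prove the proposition by induction on $\phi$, but with a slight strengthening: I will show that whenever $\C,X \bowtie \phi$ and $X \subseteq Y$, the truth values coincide, i.e.\ $\C,X \models \phi$ iff $\C,Y \models \phi$. The proposition then follows: if $\C,X \models \phi$, then $\C,X \bowtie \phi$ by Lemma~\ref{lemma.modelsbowtie}, and the biconditional gives $\C,Y \models \phi$. The reason for strengthening to a biconditional is the negation case, where a one-sided upward implication would force me to propagate falsity upward within the defined region using only an inductive hypothesis about truth. This is the main obstacle, and the biconditional costs nothing in any other case.

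For the base case $\phi = p_a$, definedness at $X$ gives $a \in \chi(X)$, and since each simplex has at most one vertex per colour, $X$ and $Y$ share the same $a$-coloured vertex $v_a$. Hence $p_a \in \ell(X)$ iff $p_a \in \ell(v_a)$ iff $p_a \in \ell(Y)$, because any $p_a$ contributed to $\ell(Y)$ must come from an $a$-coloured vertex, of which $v_a$ is the only one. The conjunction case follows directly from the inductive hypothesis applied to each conjunct, together with the clause for $\bowtie$ on conjunctions.

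For $\phi = \neg\psi$, the assumption $\C,X \bowtie \neg\psi$ is equivalent to $\C,X \bowtie \psi$, and Lemma~\ref{lemma.upbowtie} gives $\C,Y \bowtie \psi$ as well. The inductive hypothesis then yields $\C,X \models \psi$ iff $\C,Y \models \psi$, and applying Lemma~\ref{lemma.defidual} on both sides converts this to $\C,X \models \neg\psi$ iff $\C,Y \models \neg\psi$. This is precisely where the biconditional formulation pays off.

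For the epistemic case $\phi = \M_a\psi$, the key observation is that definedness of $\M_a\psi$ at $X$ forces $a \in \chi(X)$, so $X$ and $Y$ share their unique $a$-vertex $v_a$. For any $Z \in C$, since $Z$ also contains at most one $a$-coloured vertex, $a \in \chi(X \inter Z)$ iff $v_a \in Z$ iff $a \in \chi(Y \inter Z)$. Consequently the families of witnesses $Z$ used in the semantic clause for $\M_a$ coincide at $X$ and at $Y$, so $\C,X \models \M_a\psi$ iff $\C,Y \models \M_a\psi$ follows directly from the semantics, without even invoking the inductive hypothesis on $\psi$. This uniqueness-of-colour argument is the decisive combinatorial ingredient and is what makes the upward transfer work for modalities.
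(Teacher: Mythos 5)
Your proof is correct, and it reorganizes the induction in a way that differs from the paper's. The paper proves the proposition by a \emph{mutual} induction on the pair of statements ``$\C,X \models \phi$ implies $\C,Y\models\phi$'' and ``$\C,X \models \neg\phi$ implies $\C,Y\models\neg\phi$'', which forces separate cases for $\neg p_a$, $\neg\neg\phi$, $\neg(\phi\et\psi)$, and $\neg\M_a\phi$; the last of these is handled by rewriting the clause in terms of $\sstar(X_a)$ and observing $X_a=Y_a$. You instead prove a single biconditional under the side condition $\C,X\bowtie\phi$, which collapses all the negation cases into one two-line application of Lemmas~\ref{lemma.upbowtie} and~\ref{lemma.defidual}, and your modal case is dispatched by noting that the witness families $\{Z \mid a\in\chi(X\inter Z)\}$ and $\{Z\mid a\in\chi(Y\inter Z)\}$ coincide once $a\in\chi(X)$ --- the same uniqueness-of-colour observation the paper uses, and like the paper you correctly note no inductive hypothesis is needed there. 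The substance (upward persistence of $\bowtie$, $X_a=Y_a$, locality of $\ell$) is identical; what your packaging buys is economy and, more notably, that your strengthened statement simultaneously yields Proposition~\ref{lemma.ysubx} (downward persistence under definedness) as the other direction of the same biconditional, so the two propositions the paper proves separately become one induction. The paper's version has the mild advantage of not needing the definedness side condition in its statement, but your derivation of the original proposition from the biconditional via Lemma~\ref{lemma.modelsbowtie} is correct, so nothing is lost.
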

\begin{proof}
The proof is by induction on the structure of formulas $\phi$. In order to
make the proof work for the case of negation we prove a stronger statement
(by mutual induction).
\begin{quote}
{\em Let $(\C,X)$ with $\C = (C,\chi,\ell)$ be given. For all $\phi\in\lang_K$ and for all $X,Y \in C$ with $X \subseteq Y$: $\C,X \models \phi$ implies $\C,Y \models \phi$, and $\C,X \models \neg\phi$ implies $\C,Y \models \neg\phi$.}
\end{quote}
Interestingly, the case $\M_a\phi$ below does not require the
use of the induction hypothesis.
\begin{itemize}

\item $\C,X \models p_a$, iff $a \in \chi(X)$ and $p_a \in \ell(X)$, iff $a \in \chi(X)$ and $p_a \in \ell(X_a)$. As $X \subseteq Y$, also $a \in \chi(Y)$, and $X_a=Y_a$. Thus, $a \in \chi(Y)$ and $p_a \in \ell(Y_a) \subseteq \ell(Y)$ which is by definition $\C,Y \models p_a$.

\item $\C,X \models \neg p_a$, iff $\C,X \bowtie p_a$ and $\C,X \not\models p_a$, iff $a \in \chi(X)$ and $p_a \notin \ell(X)$, iff $a \in \chi(X)$ and $p_a \notin \ell(X_a)$. As $X \subseteq Y$, again we obtain that $a \in \chi(Y)$ and $X_a=Y_a$, so $p_a \notin \ell(Y_a)$. Thus, $a \in \chi(Y)$ and $p_a \notin \ell(Y)$ and therefore $\C,Y \models \neg p_a$.

\item $\C, X\models \neg \phi$ implies $\C, Y \models \neg \phi$
by the mutual part of the induction for~$\phi$.

\item $\C, X\models \neg \neg \phi$, iff (by
Lemma~\ref{lemma.notneg}) $\C, X \models \phi$, which implies by induction $\C, Y \models \phi$, iff (by once more Lemma~\ref{lemma.notneg}) $\C, Y\models
\neg \neg \phi$.

\item $\C,X \models \phi\et\psi$, iff $\C,X \models \phi$ and $\C,X \models \psi$, which implies (by induction) $\C,Y \models \phi$ and $\C,Y \models \psi$, iff $\C,Y \models \phi\et\psi$.

\item $\C,X \models \neg(\phi \et \psi)$,
iff $\C,X \bowtie \phi \et \psi$ and $\C,X \not\models \phi \et \psi$,
iff $\C,X \bowtie \phi$, $\C,X \bowtie \psi$, and $\C,X \not\models
\phi$ or $\C,X \not\models \psi$, iff $\C,X \models \neg\phi$ and $\C,X \bowtie \psi$, or  $\C,X \bowtie \phi$ and $\C,X\models \neg\psi$.
Using induction for either $\phi$ or $\psi$
and Lemma~\ref{lemma.upbowtie}, we obtain  $\C,Y \models \neg\phi$ and  $\C,Y \bowtie \psi$, or  $\C,Y \bowtie
\phi$ and $\C,Y\models \neg\psi$, which is equivalent to $\C,Y
\models \neg(\phi \et \psi)$.

\item $\C,X \models \M_a\phi$, iff $\C,Z \models \phi$ for some $Z \in C$ with $a \in \chi(X \inter Z)$, which implies (as $X \subseteq Y$) that $\C,Z \models \phi$ for some $Z \in C$ with $a \in \chi(Y \inter Z)$, iff $\C,Y \models \M_a\phi$.

\item $\C,X \models \lnot\M_a\phi$, iff $\C, X \bowtie\M_a\phi$ and $\C,X \not\models \M_a\phi$, iff $\C,Z \bowtie \phi$ for some $Z \in C$ with $a \in \chi(X \inter Z)$ and $\C,Z \not\models \phi$ for all $Z \in C$ with $a \in \chi(X \inter Z)$, iff $a \in \chi(X)$ and $\C,Z \bowtie \phi$ for some $Z \in \sstar(X_a)$  and $\C,Z \not\models \phi$ for all $Z \in \sstar(X_a)$, which implies (because $X \subseteq Y$ and $Y_a = X_a$) that $a \in \chi(Y)$ and $\C,Z \bowtie \phi$ for some $Z \in \sstar(Y_a)$  and $\C,Z \not\models \phi$ for all $Z \in \sstar(Y_a)$. This statement is equivalent to $\C,Y \models \lnot\M_a\phi$.

\end{itemize}
\vspace{-.8cm}
\end{proof}

\begin{proposition} \label{lemma.ysubx}
If $\C,X \models \phi$ and $Y \in C$ such that $Y \subseteq X$ and $\C,Y \bowtie \phi$, then $\C,Y \models \phi$.
\end{proposition}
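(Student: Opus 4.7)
The plan is to proceed by induction on $\phi$, strengthened as in the proof of Proposition~\ref{prop.star} with a simultaneous claim for $\neg\phi$. Explicitly, I will show by mutual induction: for all $\phi\in\lang_K$ and all $X,Y\in C$ with $Y\subseteq X$, if $\C,Y\bowtie\phi$ then (a) $\C,X\models\phi$ implies $\C,Y\models\phi$, and (b) $\C,X\models\neg\phi$ implies $\C,Y\models\neg\phi$. The main tool throughout is a simple chromatic observation: since the colouring is injective on each simplex, whenever $Y\subseteq X$ and $a\in\chi(Y)$, the $a$-coloured vertex of $Y$ coincides with that of $X$, i.e.\ $Y_a=X_a$. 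In particular, if $a\in\chi(Y)$ and $a\in\chi(X\inter Z)$ for some $Z\in C$, then $a\in\chi(Y\inter Z)$.

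For the base case $\phi = p_a$, definability at $Y$ gives $a\in\chi(Y)$, hence $Y_a = X_a$, and from $p_a \in \ell(X_a)$ we get $p_a\in\ell(Y_a)\subseteq\ell(Y)$. The case $\neg p_a$ is analogous with $p_a\notin\ell(X_a) = \ell(Y_a)$. The case $\neg\neg\phi$ reduces to $\phi$ by Lemma~\ref{lemma.notneg}. For $\phi\et\psi$, definability at $Y$ unfolds to $\C,Y\bowtie\phi$ and $\C,Y\bowtie\psi$, so I apply the induction hypothesis to each conjunct. For $\neg(\phi\et\psi)$, I use that $\C,X\models\neg(\phi\et\psi)$ unfolds (via Lemma~\ref{lemma.defidual}) to $\C,X\bowtie\phi$, $\C,X\bowtie\psi$, and $\C,X\models\neg\phi$ or $\C,X\models\neg\psi$; then the corresponding mutual induction clause, applied on whichever disjunct holds, together with Lemma~\ref{lemma.defidual} again, yields $\C,Y\models\neg(\phi\et\psi)$.

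The interesting cases are the modal ones, but here the chromatic observation makes them immediate. For $\M_a\phi$, the assumption $\C,X\models\M_a\phi$ supplies some $Z\in C$ with $a\in\chi(X\inter Z)$ and $\C,Z\models\phi$; from $\C,Y\bowtie\M_a\phi$ we obtain $a\in\chi(Y)$, so the observation gives $a\in\chi(Y\inter Z)$ and the same $Z$ witnesses $\C,Y\models\M_a\phi$ (no use of the induction hypothesis is needed). For $\neg\M_a\phi$, the assumption is that $\C,Z\not\models\phi$ for every $Z$ with $a\in\chi(X\inter Z)$; for any $Z'$ with $a\in\chi(Y\inter Z')$, the observation (using $a\in\chi(Y)$ from definability) yields $a\in\chi(X\inter Z')$, hence $\C,Z'\not\models\phi$, and therefore $\C,Y\not\models\M_a\phi$, which combined with $\C,Y\bowtie\M_a\phi$ gives $\C,Y\models\neg\M_a\phi$.

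The main potential obstacle is keeping definability straight in the negation clauses, since $\C,Y\bowtie\phi$ does not follow automatically from $\C,X\bowtie\phi$ when passing downward. However, in each case I am given $\C,Y\bowtie\phi$ as hypothesis, and Lemma~\ref{lemma.defidual} then converts ``not true'' into ``false'' and back at $Y$. Once the chromatic identity $Y_a=X_a$ is in hand, every case reduces to routine unfolding of the semantics.
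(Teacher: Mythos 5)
Your proof is correct, but it takes a genuinely different route from the paper on the one case that matters, namely negation. You run a full mutual induction (the same device the paper uses for Proposition~\ref{prop.star}), splitting into $\neg p_a$, $\neg\neg\phi$, $\neg(\phi\et\psi)$, and $\neg\M_a\phi$, each handled by the chromatic identity $Y_a=X_a$ plus Lemma~\ref{lemma.defidual}; all of these check out, including the $\neg\M_a\phi$ case where $a\in\chi(Y\inter Z')$ correctly transfers to $a\in\chi(X\inter Z')$. The paper instead dispatches the entire negation case in one line by invoking the \emph{contrapositive} of the already-proved Proposition~\ref{prop.star}: from $\C,X\not\models\phi$ and $X\in\sstar(Y)$ one gets $\C,Y\not\models\phi$, and the hypothesis $\C,Y\bowtie\phi$ then yields $\C,Y\models\neg\phi$ directly, with no case analysis on the shape of the negated formula. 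Your version is longer but self-contained (it does not depend on upward monotonicity at all), whereas the paper's version is shorter at the cost of leaning on Proposition~\ref{prop.star}; since that proposition is proved immediately beforehand, the paper's shortcut is the more economical choice, but nothing in your argument is wrong.
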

\begin{proof} Let now $Y \in C$ such that $Y \subseteq X$. In all inductive cases we assume that the formula is defined in $Y$.
\begin{itemize}


\item $\C,X \models p_a$, iff $a \in \chi(X)$ and $p_a \in \ell(X)$, that is, $p_a \in \ell(X_a)$. As $\C,Y \bowtie p_a$, also $a \in \chi(Y)$, so that $a \in \chi(X \inter Y)$. Therefore $X_a \in X \inter Y$, so that $p_a \in \ell(X_a) \subseteq \ell(Y)$.

\item $\C,X \models \neg\phi$, iff $\C,X\bowtie\phi$ and $\C,X \not \models \phi$. Using the contrapositive of Proposition~\ref{prop.star}, $\C,X \not \models \phi$ implies $\C,Y\not \models\phi$. From that, together with the assumption $\C,Y\bowtie\phi$, we obtain by definition $\C,Y \models \neg\phi$.

\item $\C,X \models \phi\et\psi$, iff  $\C,X \models \phi$ and $\C,X \models \psi$, which implies (by induction) $\C,Y \models \phi$ and $\C,Y \models \psi$, iff $\C,Y \models \phi\et\psi$.

\item $\C,X \models \M_a\phi$, iff $\C,Z \models \phi$ for some $Z \in C$ with $a \in \chi(X \inter Z)$. Assumption $\C,Y \bowtie \M_a \phi$ implies $a \in \chi(Y)$, so that it follows from $a \in \chi(X \inter Z)$ and $Y \subseteq X$ that $a \in \chi(Y \inter Z)$. Therefore $\C,Z \models \phi$ for some $Z \in C$ with $a \in \chi(Y \inter Z)$, which is by definition $\C,Y \models \M_a\phi$.
\end{itemize}
\vspace{-.8cm}
\end{proof}
An alternative formulation of Propositions~\ref{prop.star} and \ref{lemma.ysubx} is, respectively: 
\begin{itemize}
\item If $\C,X \models \phi$ and $Y \in \sstar(X)$, then $\C,Y \models \phi$. 
\item If $\C,X \models \phi$, $\C,Y \bowtie \phi$ and $X \in \sstar(Y)$, then $\C,Y \models \phi$.
\end{itemize}
A more efficient way to determine the truth of $\phi$ may be to consider facets only. The following are straightforward consequences of the combined Propositions~\ref{prop.star} and \ref{lemma.ysubx}.
\begin{corollary} Let $\C,X \bowtie \phi$. Then all pairwise equivalent are:
\begin{itemize}
\item $\C,X \models \phi$

\item $\C,Y \models \phi$ for all faces $Y \in \sstar(X)$

\item $\C,Y \models \phi$ for all facets $Y \in \sstar(X)$
\end{itemize} \vspace{-.6cm}
\end{corollary}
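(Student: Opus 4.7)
The plan is to establish the three-way equivalence by the cycle $(1) \Rightarrow (2) \Rightarrow (3) \Rightarrow (1)$, leveraging the two monotonicity results just proved.

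For $(1) \Rightarrow (2)$, I would simply invoke Proposition~\ref{prop.star}: if $\C,X \models \phi$ and $Y \in \sstar(X)$, then by definition $X \subseteq Y$, so $\C,Y \models \phi$. The direction $(2) \Rightarrow (3)$ is immediate, since every facet $Y$ in $\sstar(X)$ is in particular some $Y \in \sstar(X)$, so the universal statement over all $Y \in \sstar(X)$ restricts to the universal statement over facets in $\sstar(X)$.

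The nontrivial implication is $(3) \Rightarrow (1)$, where I need the assumption $\C,X \bowtie \phi$. Here I would first observe that $\sstar(X)$ necessarily contains at least one facet: since the set of agents $A$ is finite, every simplex has bounded size $\leq |A|$, so any $X \in C$ can be extended to some maximal simplex $Y \in \FF(C)$ with $X \subseteq Y$ (in particular, $Y \in \sstar(X)$). By (3), $\C,Y \models \phi$. Now apply Proposition~\ref{lemma.ysubx} to $X \subseteq Y$ together with the hypothesis $\C,X \bowtie \phi$ to conclude $\C,X \models \phi$.

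The only delicate point is the existence of a facet above $X$ in the third implication, which is why I would want to be explicit about using the finiteness of $A$ (and hence the bounded size of simplices). Everything else is essentially bookkeeping on top of Propositions~\ref{prop.star} and~\ref{lemma.ysubx}; no new induction on formulas is required, since both of these propositions already handle the modal case uniformly.
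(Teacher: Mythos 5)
Your proof is correct and follows exactly the route the paper intends: the paper gives no explicit proof but states that the corollary is a ``straightforward consequence of the combined Propositions~\ref{prop.star} and \ref{lemma.ysubx}'', which is precisely your cycle $(1)\Rightarrow(2)\Rightarrow(3)\Rightarrow(1)$. Your explicit remark that $\sstar(X)$ contains a facet (so that $(3)$ is not vacuous), justified by the bound $|Y|\leq|A|$ on chromatic simplices, is a worthwhile detail the paper leaves implicit.
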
 

\begin{example} We recall Example~\ref{example.xxx}. Let $u$ be the vertex labelled $0_a$. Then:
\begin{itemize}
\item $\C,u \models K_a p_c$ and $\{u\} \subseteq Y$, therefore $\C,Y \models K_a p_c$ by Proposition~\ref{prop.star}.

\item $\C,Y \models p_c$ and $\{u\} \subseteq Y$, however $\C,u \not\bowtie p_c$, therefore we cannot conclude that $\C,u \models p_c$ by Proposition~\ref{lemma.ysubx}; indeed from $\C,u \not\bowtie p_c$ it follows that $\C,u \not\models p_c$.

\item $\C,Y \models \neg p_a$, $\{u\} \subseteq Y$ and $\C,u \bowtie \neg p_a$, therefore $\C,u \models \neg p_a$ by Proposition~\ref{lemma.ysubx}.
\end{itemize}\vspace{-.8cm}
\end{example}

\section{Axiomatization} \label{section.validities}

In this section we present an axiomatization for our three-valued epistemic semantics. It is a version of the well-known axiomatization {\bf S5} for the two-valued epistemic semantics. We will show soundness for this axiomatization. We will not show completeness for this axiomatization: although we are confident that it is complete, the construction of a canonical simplicial model for our three-valued semantics seems a non-trivial and lengthy technical exercise.

The axiomatization requires an auxiliary syntactic notion: for any formula $\phi$ we define an equidefinable but valid formula $\phi^\top$.
\begin{definition}
For a formula $\phi$ we define formula $\phi^\top$ recursively as follows:
\begin{itemize}
\item $p_a^\top := p_a \vel \neg p_a$;
\item $(\neg \phi)^\top := \phi^\top$;
\item $(\phi \et \psi)^\top := \phi^\top \et \psi^\top$;
\item $(\M_a \phi)^\top := \M_a \phi^\top$.
\end{itemize}
Formulas $p_a^\top$ are also denoted $\top_a$.
\end{definition}

\begin{lemma}
For any formula $\phi\in \lang_K$, $\phi^\top$ is valid.
\end{lemma}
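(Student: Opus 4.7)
The plan is to prove the statement by induction on $\phi$, strengthening the induction hypothesis to cover equidefinability as well: for every $\phi \in \lang_K$ and every pointed model $(\C,X)$,
\begin{enumerate}
\item[(a)] $\C,X \bowtie \phi^\top$ iff $\C,X \bowtie \phi$, and
\item[(b)] $\C,X \bowtie \phi^\top$ implies $\C,X \models \phi^\top$.
\end{enumerate}
Conclusion (b) is exactly the validity asserted; (a) is needed so that definability of the subformulas of $\phi^\top$ matches up with that of $\phi$, which is how the recursive definition of $\cdot^\top$ is crafted.

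The base case $\phi = p_a$ gives $\phi^\top = p_a \vel \neg p_a$. For (a), $\C,X \bowtie p_a \vel \neg p_a$ unfolds to $\C,X \bowtie p_a$ (since $\bowtie$ agrees on a formula and its negation, and conjunction of two equidefinable conditions). For (b), if $a \in \chi(X)$ then either $p_a \in \ell(X)$, giving $\C,X\models p_a$, or $p_a \notin \ell(X)$, giving $\C,X \not\models p_a$ and hence $\C,X \models \neg p_a$ by Lemma~\ref{lemma.defidual}; in either case, using the direct truth condition for disjunction in Lemma~\ref{lemma.diamond} (both disjuncts are defined), we obtain $\C,X \models p_a \vel \neg p_a$.

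The inductive cases are routine unfoldings. For $\phi = \neg \psi$, we have $\phi^\top = \psi^\top$; $\bowtie$-preservation under negation gives (a) from the induction hypothesis, and (b) is immediate by induction. For $\phi = \psi \et \chi$, we have $\phi^\top = \psi^\top \et \chi^\top$; definability and truth of a conjunction split into the two conjuncts, so both (a) and (b) follow directly by induction. For $\phi = \M_a \psi$, we have $\phi^\top = \M_a \psi^\top$. Then $\C,X \bowtie \M_a \psi^\top$ iff there is $Y \in C$ with $a \in \chi(X \inter Y)$ and $\C,Y \bowtie \psi^\top$, which by the inductive (a) is equivalent to $\C,Y \bowtie \psi$ for some such $Y$, i.e., to $\C,X \bowtie \M_a \psi$. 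For truth, if $\C,X \bowtie \M_a \psi^\top$, pick any witness $Y$; by inductive (b), $\C,Y \models \psi^\top$, whence $\C,X \models \M_a \psi^\top$ by the semantics of $\M_a$.

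There is no real obstacle here: the recursive definition of $\phi^\top$ was engineered precisely so that definability is preserved step by step, and the only "content" in the argument is the base case, where the law of excluded middle holds whenever $p_a$ is defined. The remaining clauses are just monotone transport of the two inductive properties through the connectives.
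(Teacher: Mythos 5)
Your proof is correct and takes essentially the same route as the paper's: induction on $\phi$, with the base case resting on excluded middle under definedness and the inductive steps on the facts that validity is preserved by conjunction and by prefixing $\M_a$. The equidefinability clause (a) that you carry along is exactly the paper's separate Lemma~\ref{lem:equidef}; it is not actually needed for the validity claim outside the base-case unfolding, but it does no harm.
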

\begin{proof}
The proof is by induction on the construction of $\phi$. The base case of $p_a$ clearly yields a valid formula. Conjunction of valid formulas is valid, and the modality $\M_a$ applied to a valid formula produces a valid formula.
\end{proof}

\begin{lemma}
\label{lem:equidef}
For any $\phi\in \lang_K$ and simplicial model $(\C,X)$: 
$\C, X \bowtie \phi$ iff $\C, X \bowtie \phi^\top$.
\end{lemma}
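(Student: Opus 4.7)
The plan is to proceed by induction on the construction of $\phi \in \lang_K$, mirroring the recursive definition of $\phi^\top$. The key observation is that the definability relation $\bowtie$, unlike $\models$, is insensitive to the valuation $\ell$ and only reads off the chromatic structure of the visited simplices; thus the transformation $\phi \mapsto \phi^\top$, which only replaces atoms $p_a$ with tautologies $p_a \vee \neg p_a$ while preserving all other syntactic structure, should not alter definability.

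First I would handle the base case $\phi = p_a$. Here $\phi^\top = p_a \vee \neg p_a$, which (by the abbreviation) is $\neg(\neg p_a \wedge \neg\neg p_a)$. Unfolding the definability clauses for $\neg$ and $\wedge$ from Definition~\ref{def.defsat}, we get $\C,X \bowtie \phi^\top$ iff $\C,X \bowtie \neg p_a$ and $\C,X \bowtie \neg\neg p_a$, both of which in turn reduce to $\C,X \bowtie p_a$, i.e., $a \in \chi(X)$. So the base case boils down to the fact that $\bowtie$ is unchanged under $\neg$.

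For the inductive cases, each one is essentially immediate from the recursive definition of $\phi^\top$ combined with the matching definability clause. For $\phi = \neg\psi$, we have $\phi^\top = \psi^\top$, and both $\C,X \bowtie \neg\psi$ and $\C,X \bowtie \psi$ are equivalent, so the induction hypothesis on $\psi$ gives the result. For $\phi = \psi \wedge \chi$, the definability of a conjunction splits into the conjunction of definabilities, matching the structure of $(\psi \wedge \chi)^\top = \psi^\top \wedge \chi^\top$. For $\phi = \M_a \psi$, the definability clause quantifies over $Y$ with $a \in \chi(X \cap Y)$, and since $(\M_a\psi)^\top = \M_a \psi^\top$ uses the exact same quantification, the induction hypothesis applied at each such $Y$ (namely $\C,Y \bowtie \psi$ iff $\C,Y \bowtie \psi^\top$) closes the case.

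I do not expect any real obstacle: the statement is a structural sanity check, and the proof is a routine induction whose only mildly subtle point is unpacking the abbreviation $p_a \vee \neg p_a$ in the base case to confirm that its definability coincides with that of $p_a$.
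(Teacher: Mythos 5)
Your proof is correct and follows exactly the route the paper intends (the paper only states ``by induction on the construction of $\phi$''), with the right attention to the one subtle point: unfolding the abbreviation $p_a \vel \neg p_a$ as $\neg(\neg p_a \et \neg\neg p_a)$ to see that its definability reduces to $a \in \chi(X)$. The only cosmetic quibble is your reuse of $\chi$ as a formula metavariable in the conjunction case, which clashes with the paper's chromatic map $\chi$.
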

\begin{proof}
The proof is by induction on the construction of $\phi$. 
\end{proof}
In the continuation we will also need another, basic, lemma.
\begin{lemma} \label{lem:MPloc}
Let $(\C,X)$ and $\phi,\psi\in \lang_K$ be given. Then $\C, X \models \phi\imp \psi$ and $\C, X \models \phi$ imply $\C, X \models \psi$.
\end{lemma}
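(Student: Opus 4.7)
The plan is to apply Lemma~\ref{lemma.diamond} directly to unpack the meaning of $\C,X\models \phi\imp\psi$ in the three-valued semantics, and then combine it with the other hypothesis.

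More specifically, by Lemma~\ref{lemma.diamond} (second clause), $\C,X\models \phi\imp \psi$ is equivalent to the conjunction of three conditions: $\C,X\bowtie \phi$, $\C,X\bowtie \psi$, and ``$\C,X\models \phi$ implies $\C,X\models \psi$''. From the first hypothesis I extract the third conjunct; combining it with the second hypothesis $\C,X\models \phi$ gives $\C,X\models \psi$ immediately.

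The only thing worth emphasizing is that we do \emph{not} need to invoke the definability conjuncts $\C,X\bowtie \phi$ and $\C,X\bowtie \psi$ at all for the conclusion --- although they are part of what makes the implication true, they are not needed beyond this point, since we already have $\C,X\models \phi$ outright (which by Lemma~\ref{lemma.modelsbowtie} automatically secures definability of $\phi$, and the conclusion $\C,X\models \psi$ likewise guarantees definability of $\psi$).

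There is no real obstacle here; this lemma is a corollary of the characterisation of the semantics of $\imp$ obtained through notational abbreviation. The content of the lemma is simply that, unlike the full rule $\mathbf{MP}$ (which fails in general because the conclusion $\psi$ may be undefined even when $\phi\imp\psi$ and $\phi$ are both true), the local version of modus ponens at a single pointed model does hold, precisely because the truth of $\phi\imp\psi$ at $(\C,X)$ already internalises the definedness of $\psi$ at $(\C,X)$ via Lemma~\ref{lemma.diamond}.
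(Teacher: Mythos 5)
Your proof is correct and matches the paper's: the paper likewise derives this lemma directly from the characterisation of $\imp$ in Lemma~\ref{lemma.diamond}, extracting the conditional conjunct and applying the hypothesis $\C,X\models\phi$. Your additional remarks about the definability conjuncts being superfluous at this point are accurate but not needed.
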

\begin{proof}
This follows directly from the derived semantics of implication (Lemma~\ref{lemma.diamond}).
\end{proof}

\begin{definition}[Axiomatization S5$^\top$]
The axiomatization {\bf S5}$^\top$ consists of the following axioms and rules.
\[\begin{array}{l|l}
\mathbf{Taut} & \text{all instantiations of propositional tautologies} \\
\mathbf{L} & K_a p_a \vel K_a \neg p_a \\
\mathbf{K^\top} & K_a (\phi \imp \psi) \imp K_a \phi \imp K_a (\phi^\top\imp\psi)  \\
\mathbf{T} & K_a \phi \imp \phi  \\
\mathbf{4} & K_a \phi \imp K_a K_a \phi \\
\mathbf{5} & \M_a \phi \imp K_a \M_a \phi   \\
\mathbf{MP^\top} & \text{from } \phi\imp\psi \text{ and } \phi \text{ infer } \phi^\top \imp \psi \\
\mathbf{N} & \text{from } \phi \text{ infer } K_a \phi
\end{array}\]
\end{definition}

The soundness of this axiomatization is shown over a number of propositions. Given the presence of axiom {\bf L} featuring propositional variables $p_a$ instead of arbitrary formulas $\phi,\psi,\dots$ we can immediately observe that {\bf S5}$^\top$ is not closed under uniform substitution. For example, $K_a p_a \vel K_a \neg p_a$ is derivable (and even an axiom) but, for $a \neq b$, $K_a p_b \vel K_a \neg p_b$ is not derivable. Therefore the logic with axiomatization {\bf S5}$^\top$ is not a normal modal logic.

\begin{proposition} \label{prop.taut}
Let $\phi \in L_\emptyset(A,P)$ be a tautology. Then $\models \phi$.
\end{proposition}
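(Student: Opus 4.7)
The plan is to reduce the three-valued semantics, on the fragment of propositional formulas that are defined at a given simplex, to classical two-valued semantics, so that classical tautologyhood transfers.

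First I would observe that if $\C,X \bowtie \phi$ for $\phi \in \lang_\emptyset(A,P)$, then every subformula of $\phi$ is also defined at $X$. This is immediate from the propositional clauses of $\bowtie$: conjunction requires definability of both conjuncts, negation requires definability of the negand, and atoms $p_a$ require $a\in\chi(X)$. In particular, for each $p_a$ occurring in $\phi$ we have $\C,X \bowtie p_a$, and hence by Corollary~\ref{corollary.bowtiemodels} exactly one of $\C,X\models p_a$ or $\C,X\models \neg p_a$ holds.

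Next I would define a classical Boolean valuation $v$ on the variables occurring in $\phi$ by $v(p_a) := 1$ if $\C,X\models p_a$ and $v(p_a) := 0$ otherwise, and show by induction on the structure of an arbitrary subformula $\psi$ of $\phi$ that
\[
\C,X \models \psi \quad\text{iff}\quad v(\psi)=1,
\]
where $v(\psi)$ denotes the classical truth value of $\psi$ under $v$. The base case holds by construction. The conjunction case follows directly from the semantic clause for $\et$. The only delicate step is negation: since $\psi$ is defined at $X$ (by the observation above), Lemma~\ref{lemma.defidual} gives $\C,X\models\neg\psi$ iff $\C,X\not\models\psi$, which by the inductive hypothesis is iff $v(\psi)=0$, i.e., iff $v(\neg\psi)=1$.

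Finally, since $\phi$ is a classical propositional tautology, $v(\phi)=1$ for every Boolean valuation $v$, and in particular for the $v$ just defined. Hence $\C,X\models\phi$. As $(\C,X)$ was arbitrary subject to $\C,X\bowtie\phi$, we conclude $\models\phi$. The only real obstacle is the negation step, which relies essentially on having all subformulas defined at $X$, so that three-valued negation collapses to classical negation — this is precisely why validity is defined relative to $\bowtie$.
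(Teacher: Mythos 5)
Your proof is correct and takes essentially the same route as the paper: both reduce the three-valued semantics on a defined Boolean formula to classical two-valued satisfaction over the variables of $\chi(X)$ and then invoke tautologyhood. The paper merely asserts that the truth value at $X$ coincides with the classical one under $\ell(X)$, whereas you supply the explicit induction (with the negation step via Lemma~\ref{lemma.defidual}) that justifies this claim.
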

\begin{proof}
Given some $\phi \in L_\emptyset$ and a simplicial model $(\C,X)$, assume $\C,X \bowtie \phi$. Then either $\C,X \models \phi$ or $\C,X \models \neg \phi$. As $\phi$ is a Boolean formula, this means that $a \in \chi(X)$ for all variables $p_a$ occurring in $\phi$, and also, that its truth value only depends on the valuation of variables $P$ in $X$. In other words, we must have that either $\ell(X)\Vdash\phi$ or $\ell(X)\Vdash\neg\phi$, where $\Vdash$ is the standard (two-valued) propositional logical satisfaction relation in $L_\emptyset|\chi(X)$. As $\phi$ is a propositional tautology, this must be $\ell(X)\Vdash\phi$, and therefore also $\C,X \models \phi$. We now have shown that $\models \phi$.
\end{proof}

\begin{proposition} \label{prop.local}
For all agents $a$ and variables $p_a$: $\models K_a p_a \vel K_a \neg p_a$. \end{proposition}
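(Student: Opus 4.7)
The plan is to unfold the definition of validity, identify the definability condition on $(\C,X)$, and then do a case split on the value of $p_a$ at the vertex of $X$ colored by $a$.

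First I would determine when $K_a p_a \vel K_a \neg p_a$ is defined at a pair $(\C,X)$. Using the clause for $\bowtie$ on $\M_a$ and the earlier lemma equating $\C,X\bowtie \M_a\phi$ with $\C,X\bowtie K_a\phi$, definedness boils down to the existence of some $Y$ with $a \in \chi(X \inter Y)$ such that $\C,Y\bowtie p_a$. Since $a \in \chi(X\inter Y)$ already forces $a \in \chi(X)\inter\chi(Y)$, the definability of the whole formula is equivalent to $a \in \chi(X)$ (take $Y=X$). So I may assume $a\in\chi(X)$, and let $X_a$ denote the unique vertex of $X$ colored $a$.

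The key structural observation is that for every simplex $Y$ with $a \in \chi(X\inter Y)$, the vertex colored $a$ in $Y$ must coincide with $X_a$; indeed, $a$ colors at most one vertex of $X\inter Y$, and that vertex lies in both $X$ and $Y$, so it is simultaneously $X_a$ and $Y_a$. Consequently $\ell(Y_a)=\ell(X_a)$, and in particular $p_a \in \ell(X_a) \Leftrightarrow p_a \in \ell(Y_a)$.

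Now I case-split on whether $p_a \in \ell(X_a)$. If yes, then for every $Y$ with $a \in \chi(X\inter Y)$ we have $a \in \chi(Y)$ and $p_a \in \ell(Y_a)\subseteq \ell(Y)$, so $\C,Y\models p_a$; by the derived $K_a$-clause in Lemma~\ref{lemma.diamond} this yields $\C,X\models K_a p_a$. If no, then for every such $Y$ we still have $\C,Y\bowtie p_a$ but $p_a\notin\ell(Y_a)=\ell(X_a)$, hence $\C,Y\not\models p_a$; by Lemma~\ref{lemma.defidual} $\C,Y\models \neg p_a$, so again by Lemma~\ref{lemma.diamond} $\C,X\models K_a\neg p_a$. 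In either case, invoking the disjunction clause of Lemma~\ref{lemma.diamond} (both $K_a p_a$ and $K_a\neg p_a$ are defined, since $a\in\chi(X)$), we conclude $\C,X\models K_a p_a \vel K_a\neg p_a$, establishing validity. There is no real obstacle here: the whole argument hinges on the combinatorial fact that shared color in an intersection forces a shared vertex, so that the local valuation of $a$ is the same across all $a$-accessible simplices from $X$.
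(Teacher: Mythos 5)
Your proof is correct and follows essentially the same route as the paper's: reduce definability of the disjunction to $a\in\chi(X)$, then case-split on whether $p_a\in\ell(X_a)$, using the fact that every $Y$ with $a\in\chi(X\inter Y)$ shares the vertex $X_a$. The only cosmetic difference is that the paper first establishes the disjunction at the vertex $X_a$ and then lifts it to $X$ via upward monotony (Proposition~\ref{prop.star}), whereas you verify the derived $K_a$-clause of Lemma~\ref{lemma.diamond} directly at $X$; both hinge on the same shared-vertex observation.
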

\begin{proof} Let $(\C,X)$ be given such that $\C,X \bowtie K_a p_a \vel K_a \neg p_a$. Then $\C,X \bowtie K_a p_a$ and $\C,X \bowtie K_a \neg p_a$, that is, there is a $Y$ with $a \in \chi(X \inter Y)$ such that $\C,Y \bowtie p_a$ and there is a $Z$ with $a \in \chi(X \inter Z)$ such that $\C,Z \bowtie \neg p_a$ (and therefore also  $\C,Z \bowtie p_a$). As $a \in \chi(X \inter Y)$ and $a \in \chi(X \inter Z)$ imply that $a \in \chi(X)$, so that  $\C,X \bowtie p_a$, clearly $X=Y=Z$ serves as such a witness. Therefore $\C,X \bowtie p_a$. It is easy to see that we also have that $\C,X \bowtie p_a$ implies  $\C,X \bowtie K_a p_a \vel K_a \neg p_a$. Therefore  $\C,X \bowtie K_a p_a \vel K_a \neg p_a$ iff $\C,X \bowtie p_a$.

On the assumption that $\C,X \bowtie p_a$, we now show that $\C,X \models K_a p_a \vel K_a \neg p_a$. As $\C,X \bowtie p_a$, we must have that $\C,X_a \models p_a$ or that $\C,X_a\models \neg p_a$. In the first case we obtain $\C,X_a \models K_a p_a$ and in the second case we obtain $\C,X_a \models K_a \neg p_a$ (the vertex coloured $a$ is a member of any simplex containing it). Therefore $\C,X_a \models K_a p_a \vel K_a \neg p_a$. With Proposition~\ref{prop.star} it follows that $\C,X \models K_a p_a \vel K_a \neg p_a$, as required.
\end{proof}

\begin{proposition}\label{prop.k} Let $a \in A$ and $\phi,\psi \in \lang_K$ be given. Then $\models K_a (\phi \imp \psi) \imp K_a \phi \imp K_a (\phi^\top\imp \psi)$.
\end{proposition}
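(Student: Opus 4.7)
The plan is to verify validity directly by unfolding definability, applying the derived semantics for $K_a$ and $\imp$ from Lemma~\ref{lemma.diamond}, and chasing through a fixed witness simplex. Throughout, Lemma~\ref{lem:equidef} lets us move freely between $\phi$ and $\phi^\top$ at the level of definability, which is what makes the axiom work.

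First I would fix $(\C,X)$ and assume $\C,X \bowtie K_a (\phi \imp \psi) \imp K_a \phi \imp K_a (\phi^\top\imp \psi)$. Unfolding $\bowtie$ for $\imp$ (via $\neg$ and $\et$) and for $K_a = \neg \M_a \neg$, I would observe that each conjunct only contributes information about $a \in \chi(X)$ and the joint existence of a witness simplex above the $a$-vertex where the relevant subformulas are defined. Then I would assume both antecedents: $\C,X \models K_a(\phi \imp \psi)$ and $\C,X \models K_a \phi$. The goal is $\C,X \models K_a(\phi^\top \imp \psi)$, whose definability at $X$ is part of the assumption.

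Next, by the derived semantics for $K_a$ (Lemma~\ref{lemma.diamond}), it suffices to take an arbitrary $Y \in C$ with $a \in \chi(X \inter Y)$ such that $\C,Y \bowtie \phi^\top \imp \psi$, and to show $\C,Y \models \phi^\top \imp \psi$. From $\C,Y \bowtie \phi^\top \imp \psi$ we extract $\C,Y \bowtie \phi^\top$ and $\C,Y \bowtie \psi$, and Lemma~\ref{lem:equidef} converts the former to $\C,Y \bowtie \phi$. In particular $\C,Y \bowtie \phi \imp \psi$ holds as well. Applying the derived semantics for $K_a$ to the two assumed antecedents at $X$ (with $Y$ as the witness), I get $\C,Y \models \phi \imp \psi$ and $\C,Y \models \phi$, and then Lemma~\ref{lem:MPloc} yields $\C,Y \models \psi$. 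Since $\phi^\top$ is valid and defined in $(\C,Y)$, we have $\C,Y \models \phi^\top$, so once again by Lemma~\ref{lemma.diamond} we conclude $\C,Y \models \phi^\top \imp \psi$.

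I do not foresee a deep obstacle here; the proof is essentially the classical argument for axiom $\mathbf{K}$, packaged so as to respect the three-valued setting. The one delicate point, which is exactly the reason $\phi^\top$ appears in the conclusion at all, is that the witness $Y$ we work with need only satisfy $\C,Y \bowtie \phi^\top \imp \psi$, not $\C,Y \bowtie \phi \imp \psi$ a priori; without Lemma~\ref{lem:equidef} ensuring $\C,Y \bowtie \phi \Leftrightarrow \C,Y \bowtie \phi^\top$, we could not pull $\C,Y \models \phi\imp\psi$ out of $K_a(\phi\imp\psi)$ at $Y$. This is also where one sees why the naive $\mathbf{K}$ axiom $K_a(\phi\imp\psi)\imp K_a\phi\imp K_a\psi$ cannot be stated: the conclusion $K_a\psi$ need not even be defined at $X$ when the premises are, since it does not force $\phi$ to be definable at the witness simplices, whereas the $\phi^\top$ rider exactly restores that definability bridge.
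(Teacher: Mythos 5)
Your proof is correct and follows essentially the same route as the paper's: fix a witness $Y$ where $\phi^\top\imp\psi$ is defined, use Lemma~\ref{lem:equidef} to transfer definability from $\phi^\top$ to $\phi$, extract $\C,Y\models\phi\imp\psi$ and $\C,Y\models\phi$ from the antecedents via the derived $K_a$ semantics, and close with Lemma~\ref{lem:MPloc}. The only cosmetic difference is that the paper concludes $\C,Y\models\phi^\top\imp\psi$ directly from $\C,Y\models\psi$ and definability, without needing to invoke the validity of $\phi^\top$.
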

\begin{proof}
Consider a simplex $X$ in a given $\C$ where the formula is defined. Then all the three $K_a$ constituents are defined. Assume that $\C, X \models K_a (\phi \imp \psi)$ and $\C, X \models K_a \phi$. Now consider any $Y$ with $a \in \chi(X \cap Y)$ such that $\C, Y \bowtie \phi^\top \imp \psi$. Note that such $Y$ must exist because $\C,X \bowtie K_a (\phi^\top \imp \psi)$ (we recall that existence is also required in the $K_a$ semantics, see Lemma~\ref{lemma.diamond} and in particular the subsequent version $(iii)$). Then we have $\C, Y \bowtie \phi^\top$ and $\C, Y \bowtie \psi$. It follows from the former by Lemma~\ref{lem:equidef} that $\C, Y \bowtie \phi$. Hence, also $\C, Y \bowtie \phi \imp \psi$. Given our assumptions, we conclude that $\C, Y \models \phi$ and $\C, Y \models \phi \imp \psi$. Hence, by Lemma~\ref{lem:MPloc}, $\C, Y \models \psi$, implying that $\C, Y \models \phi^\top \imp \psi$. Since $Y$ was arbitrary and since some such $Y$ exists, it follows that $\C, X \models K_a (\phi^\top\to\psi)$.
\end{proof}

\begin{proposition} \label{prop.standard} Let $\phi,\psi \in \lang_K$ and $a \in A$ be given. 
\begin{itemize}
\item $\models K_a \phi \imp \phi$
\item $\models K_a \phi \imp K_a K_a \phi$
\item $\models \M_a \phi \imp K_a \M_a \phi$
\end{itemize} \vspace{-.6cm}
\end{proposition}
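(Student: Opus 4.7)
The plan is to verify each of the three validities separately, but all three reductions share one structural fact that I will establish first: whenever $a \in \chi(X)$, the simplex $X$ contains a unique $a$-coloured vertex, which we may call $X_a$, and the condition ``$a \in \chi(X \cap Y)$'' is equivalent to $X_a = Y_a$, i.e., to $Y \in \sstar(X_a)$. Consequently, for simplices containing a common $a$-vertex $v$, the relation ``$a \in \chi(\,\cdot\, \cap \,\cdot\,)$'' restricted to $\sstar(v)$ is reflexive, symmetric, and transitive. This is the semantic reason why we expect \textbf{T}, \textbf{4}, and \textbf{5} to hold, and it is the workhorse of all three arguments.

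For each axiom, I would first check that the antecedent and the consequent are equidefinable, so that definability of the whole implication reduces to a single condition, and then carry out the truth argument using the variant $(iii)$ of the $K_a$ semantics noted after Lemma~\ref{lemma.diamond}. For \textbf{T}, assume $\C,X \bowtie K_a\phi \imp \phi$, which unfolds into $\C,X \bowtie K_a\phi$ and $\C,X \bowtie \phi$; the first yields $a \in \chi(X)$, so $X$ itself witnesses the universal-with-existence clause of $K_a\phi$, giving $\C,X \models \phi$ directly from $\C,X \models K_a\phi$. For \textbf{4}, I would first show $\C,X \bowtie K_a\phi \Leftrightarrow \C,X \bowtie K_a K_a \phi$ by unfolding definitions (one direction reuses the same witness twice, the other uses that the $a$-vertex is shared). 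Then, assuming $\C,X \models K_a\phi$, for any $Y$ with $a \in \chi(X \cap Y)$ and any $Z$ with $a \in \chi(Y \cap Z)$ one has $X_a = Y_a = Z_a$, hence $a \in \chi(X \cap Z)$, and the hypothesis gives truth of $\phi$ at $Z$ whenever $\C,Z \bowtie \phi$. For \textbf{5}, equidefinability of $\M_a\phi$ and $K_a \M_a \phi$ is similar; assuming $\C,X \models \M_a\phi$ with witness $Z^\ast$, for any $Y$ with $a \in \chi(X \cap Y)$ we again have $Y_a = X_a = Z^\ast_a$, so $a \in \chi(Y \cap Z^\ast)$ and $Z^\ast$ witnesses $\C,Y \models \M_a\phi$.

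The main subtlety, and the only real obstacle, is the bookkeeping around definability. I must ensure that in each step the relevant formula is defined at the relevant simplex before invoking the semantic clauses or Lemma~\ref{lemma.diamond}, because $\C,X\not\models\phi$ does not, in our setting, entail $\C,X\models\neg\phi$. In particular, verifying equidefinability of $K_a\phi$ and $K_aK_a\phi$ (for \textbf{4}) requires producing a simplex at which $\phi$ is defined, starting from a chain of two $\M_a$-style witnesses; the shared-vertex observation is exactly what makes this work. Once those equidefinabilities are in place, the truth arguments themselves are short and mirror the standard \textbf{S5} proofs on Kripke frames where the $a$-accessibility relation is an equivalence on $\sstar(X_a)$.
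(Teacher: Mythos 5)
Your proposal is correct and follows essentially the same route as the paper's proof: reduce each implication to a single definability condition, then run the standard \textbf{S5} argument using the fact that sharing an $a$-vertex (i.e., $a \in \chi(X \cap Y)$, equivalently $X_a = Y_a$) is transitive, while taking $X$ itself as the witness for the existential part of the $K_a$ clause. Making the equivalence-relation-on-$\sstar(X_a)$ observation explicit up front is a nice packaging of what the paper does inline, but it is not a different argument.
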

\begin{proof} Let $\C = (C,\chi,\ell)$ and $X \in C$ be given.
\begin{itemize}

\item Let $\C,X \bowtie K_a\phi$ and $\C,X \bowtie \phi$ (recall the semantics for implication from Lemma~\ref{lemma.diamond}). Now assume $\C,X \models K_a \phi$. Then $\C,Y \bowtie \phi$ implies $\C,Y \models \phi$ for all $Y$ with $a \in \chi(X \inter Y)$. In particular, for $X=Y$ we have assumed $\C,X \bowtie \phi$, and trivially $a \in \chi(X \inter X)$. Therefore $\C,X \models \phi$ as desired.
 
\item Let $\C,X \bowtie K_a\phi$ and $\C,X \bowtie K_a K_a\phi$. It is easy to see that $\C,X \bowtie K_a\phi$ iff $\C,X \bowtie K_a K_a\phi$, so that the first suffices. But that means that the proof assumption $\C,X \models K_a \phi$ is already sufficient, because from that it follows that $\C,X\bowtie K_a\phi$. 

In order to prove $\C,X \models K_a K_a \phi$, let $Y \in C$ with $a \in \chi(X\inter Y)$ and $\C,Y \bowtie K_a \phi$ be arbitrary. We now must show that $\C,Y \models K_a \phi$. 
In order to prove that, let $Z \in C$ with $a \in \chi(Y\inter Z)$ and $\C,Z \bowtie \phi$ be arbitrary. We now must prove that $\C,Z \models \phi$. From $a \in \chi(X\inter Y)$ and $a \in \chi(Y\inter Z)$ follows $a \in \chi(X\inter Z)$. From $a \in \chi(X\inter Z)$, $\C,Z \bowtie \phi$ and assumption $\C,X\models K_a \phi$ then follows $\C,Z \models \phi$. This fulfils part of our proof requirement. 

However, in order to prove $\C,X \models K_a K_a \phi$ we still need to show that there is $V \in C$ with $a \in \chi(X\inter V)$ and $\C,V \models K_a\phi$. But this is elementary: take $V=X$.

Therefore $\C,X \models K_a K_a \phi$.
 
\item In this case it suffices to assume $\C,X \models \M_a \phi$ (as from this it follows that $\C,X \bowtie \M_a \phi$, as well as $\C,X \bowtie K_a \M_a \phi$, so that the implication is defined in $\C,X$). Then $\C,Y \models \phi$ for some $Y \in C$ with $a \in \chi(X\inter Y)$. 

In order to prove $\C,X \models K_a \M_a \phi$, first assume arbitrary $Z \in C$ with $a \in \chi(X\inter Z)$ and $\C,Z \bowtie \M_a\phi$. We need to prove that $\C,Z \models \M_a \phi$. From $a \in \chi(X \inter Y)$ and $a \in\chi(X \inter Z)$ it follows that $a \in\chi(Z \inter Y)$, and as we already obtained that $\C,Y \models \phi$, the required $\C,Z \models \M_a \phi$ follows.

However, in order to prove $\C,X \models K_a \M_a \phi$ we still need to show that there is $V \in C$ with $a \in \chi(X\inter V)$ and $\C,V \models \M_a\phi$. As in the previous case, it suffices to choose $V=X$.

\end{itemize}
\vspace{-.8cm}
\end{proof}

\begin{proposition}\label{prop.mp} Let $\phi,\psi \in \lang_K$ be given. Then $\models \phi \imp \psi$ and $\models \phi$ imply $\models\phi^\top\imp \psi$.
\end{proposition}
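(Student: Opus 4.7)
The plan is to show soundness of $\mathbf{MP}^\top$ by unfolding the derived semantics of implication from Lemma~\ref{lemma.diamond} and applying the equidefinability lemma (Lemma~\ref{lem:equidef}) to bridge $\phi$ and $\phi^\top$.

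First I would fix an arbitrary pointed simplicial model $(\C,X)$ with $\C,X \bowtie \phi^\top \imp \psi$, which by Lemma~\ref{lemma.diamond} unpacks to $\C,X \bowtie \phi^\top$ and $\C,X \bowtie \psi$. Applying Lemma~\ref{lem:equidef} to $\C,X \bowtie \phi^\top$ yields $\C,X \bowtie \phi$, and combining this with $\C,X \bowtie \psi$ shows that $\C,X \bowtie \phi \imp \psi$.

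Next I would invoke the two validity hypotheses. From $\models \phi \imp \psi$ together with $\C,X \bowtie \phi \imp \psi$ we get $\C,X \models \phi \imp \psi$; from $\models \phi$ together with $\C,X \bowtie \phi$ we get $\C,X \models \phi$. Then Lemma~\ref{lem:MPloc} delivers $\C,X \models \psi$.

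Finally, to conclude $\C,X \models \phi^\top \imp \psi$, I would use the forward direction of Lemma~\ref{lemma.diamond} for implication: I have $\C,X \bowtie \phi^\top$, $\C,X \bowtie \psi$, and the conditional ``$\C,X \models \phi^\top$ implies $\C,X \models \psi$'' holds trivially because the consequent $\C,X \models \psi$ is already established. I do not anticipate any real obstacle; the only subtlety is remembering that validity quantifies only over simplices where the formula is defined, so Lemma~\ref{lem:equidef} must be used to transport the definability assumption on $\phi^\top$ back to $\phi$ before the hypothesis $\models \phi$ can be applied.
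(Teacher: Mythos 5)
Your proof is correct and follows essentially the same route as the paper's: unfold the definability of $\phi^\top \imp \psi$, use Lemma~\ref{lem:equidef} to transfer definability from $\phi^\top$ to $\phi$, apply the two validity hypotheses to get $\C,X \models \phi\imp\psi$ and $\C,X\models\phi$, conclude $\C,X\models\psi$ via Lemma~\ref{lem:MPloc}, and reassemble the implication. The only difference is that you spell out the final application of Lemma~\ref{lemma.diamond} slightly more explicitly than the paper does.
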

\begin{proof}
Let $(\C,X)$ be given such that $\C, X \bowtie \phi^\top \imp \psi$. Then, both $\C, X \bowtie \phi^\top$ and $\C, X \bowtie \psi$. The former implies, by Lemma~\ref{lem:equidef}, that $\C, X \bowtie \phi$. Thus, $\C, X \bowtie \phi \imp \psi$. It follows from the validity of $\phi\to\psi$ and $\phi$ that $\C, X \models  \phi \imp \psi$ and $\C, X \models\phi$. Thus, by Lemma~\ref{lem:MPloc}, we conclude that $\C, X \models \psi$. From that and $\C, X \bowtie \phi^\top \imp \psi$ it follows that $\C, X \models \phi^\top \imp \psi$, as required. 
\end{proof}

\begin{proposition}\label{prop.nec} Let $a \in A$ and $\phi\in\lang_K$ be given. Then $\models \phi$ implies $\models K_a \phi$.
\end{proposition}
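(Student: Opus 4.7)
My plan is to use the derived semantics for $K_a$ given in Lemma~\ref{lemma.diamond}, which states that $\C,X \models K_a\phi$ iff (i) $\C,X \bowtie K_a\phi$ and (ii) for all $Y \in C$ with $a \in \chi(X \inter Y)$, $\C,Y \bowtie \phi$ implies $\C,Y \models \phi$. So to establish $\models K_a\phi$, I would fix an arbitrary pointed model $(\C,X)$ with $\C,X \bowtie K_a\phi$ and verify both conjuncts.

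Condition (i) is given by assumption. For condition (ii), I would take any $Y \in C$ with $a \in \chi(X \inter Y)$ and assume $\C,Y \bowtie \phi$. Since $\phi$ is valid by hypothesis, the definition of validity immediately yields $\C,Y \models \phi$. Because $Y$ was arbitrary, condition (ii) holds, and therefore $\C,X \models K_a\phi$. Since $(\C,X)$ was arbitrary subject to $\C,X \bowtie K_a\phi$, this shows $\models K_a\phi$.

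I do not anticipate any obstacle here: the proof is essentially a one-step unpacking of the derived $K_a$ semantics, and unlike Proposition~\ref{prop.k} or Proposition~\ref{prop.standard} there is no need to produce a witness simplex (the existence of one is already encoded in $\C,X \bowtie K_a\phi$, though even that is not invoked beyond clause (i)). The only subtlety worth flagging is that the naive reading ``for every $Y$ with $a \in \chi(X \inter Y)$, $\C,Y \models \phi$'' would be too strong in the three-valued setting, since validity of $\phi$ does not guarantee $\phi$ is defined at every such $Y$; but Lemma~\ref{lemma.diamond} only requires the conditional ``$\C,Y \bowtie \phi$ implies $\C,Y \models \phi$,'' which is precisely what validity delivers.
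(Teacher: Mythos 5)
Your proof is correct and follows essentially the same route as the paper: both unpack the derived semantics of $K_a$ from Lemma~\ref{lemma.diamond} and use the validity of $\phi$ to discharge the conditional ``$\C,Y \bowtie \phi$ implies $\C,Y \models \phi$'' at every $Y$ with $a \in \chi(X \inter Y)$. The only cosmetic difference is that the paper separately verifies the existential witness condition (variation $(iii)$ after Lemma~\ref{lemma.diamond}) by extracting a $Z$ with $\C,Z \bowtie \phi$ from $\C,X \bowtie K_a\phi$, whereas you rely on clause $(i)$ being the standing assumption, which is equally valid.
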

\begin{proof}
Let $(\C,X)$ be given such that $\C,X \bowtie K_a \phi$. We have to show that $\C,X \models K_a \phi$. We have two proof obligations.

Firstly, let $Y$ be arbitrary such that $a \in \chi(X \inter Y)$ and assume that $\C,Y \bowtie \phi$. From $\C,Y \bowtie \phi$ and $\models \phi$ it follows that $\C,Y \models \phi$. Therefore, for all $Y$ with $a \in \chi(X \inter Y)$, $\C,Y \bowtie \phi$ implies $\C,Y \models \phi$.

Secondly, from $\C,X \bowtie K_a \phi$ it follows that there is a $Z$ with $a \in \chi(X \inter Z)$ such that $\C,Z \bowtie \phi$. From $\C,Z \bowtie \phi$ and $\models \phi$ it follows that $\C,Z \models \phi$. Therefore, there exists a $Z$ with $a \in \chi(X \inter Z)$ such that $\C,Z \models \phi$.

Therefore, $\C,X \models K_a \phi$.
\end{proof}

\begin{theorem}
Axiomatization {\bf S5}$^\top$ is sound.
\end{theorem}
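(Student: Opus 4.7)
The plan is to argue by induction on the length of a derivation in $\mathbf{S5}^\top$ that every derivable formula is valid. Since the substantive soundness analysis for each individual axiom and rule has already been carried out in Propositions~\ref{prop.taut}--\ref{prop.nec}, the theorem reduces to assembling these results.

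For the base case of the induction, I would dispatch each axiom schema to the corresponding soundness proposition: Proposition~\ref{prop.taut} for $\mathbf{Taut}$, Proposition~\ref{prop.local} for $\mathbf{L}$, Proposition~\ref{prop.k} for $\mathbf{K^\top}$, and the three items of Proposition~\ref{prop.standard} for $\mathbf{T}$, $\mathbf{4}$, and $\mathbf{5}$ respectively. For the inductive step, I would invoke Proposition~\ref{prop.mp} to see that $\mathbf{MP^\top}$ preserves validity, and Proposition~\ref{prop.nec} to see the same for $\mathbf{N}$. Combining these, any formula produced by finitely many applications of the rules to the axioms is valid.

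The one place where a little extra care is needed is $\mathbf{Taut}$: Proposition~\ref{prop.taut} is proved only for propositional tautologies $\phi \in \lang_\emptyset(A,P)$, but the schema $\mathbf{Taut}$ ranges over \emph{all instantiations} of propositional tautologies, which may contain epistemic subformulas. To extend to such an instantiation $\phi'$, I would observe that if $\phi'$ is defined in $(\C,X)$, then every subformula substituted for a propositional variable of the underlying tautology is itself defined in $(\C,X)$, and hence, by Corollary~\ref{corollary.bowtiemodels}, has a classical truth value there. The evaluation of $\phi'$ at $(\C,X)$ then reduces to the standard two-valued Boolean evaluation of the underlying tautology, so $\C, X \models \phi'$. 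I expect this to be the only minor obstacle; everything else in the proof is a straightforward citation of the preceding propositions.
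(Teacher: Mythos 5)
Your proof is correct and follows essentially the same route as the paper, which likewise just assembles the preceding propositions (and in fact does not even cite Proposition~\ref{prop.taut} in its one-line proof). Your extra care with $\mathbf{Taut}$ is well placed: Proposition~\ref{prop.taut} is stated only for $\lang_\emptyset$-formulas, and your argument --- that definability of an instantiation forces definability, hence a classical truth value (Corollary~\ref{corollary.bowtiemodels}, Lemma~\ref{lemma.defidual}), for every substituted subformula, so the Boolean evaluation proceeds two-valuedly --- correctly closes a gap the paper leaves implicit.
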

\begin{proof}
Directly from Propositions~\ref{prop.local}, \ref{prop.k}, \ref{prop.standard}, \ref{prop.mp}, and \ref{prop.nec}.
\end{proof}

It is instructive to see why the usual {\bf K} axiom and {\bf MP} rule are invalid for our semantics. This insight might help to justify the unusual shape of {\bf K}$^\top$ and {\bf MP}$^\top$. We make the relevant counterexamples into formal propositions.

\begin{proposition}\label{prop.nonk}
There are $\phi,\psi\in \lang_K$ and $a \in A$ such that $\not\models K_a (\phi\imp\psi) \imp K_a \phi \imp K_a \psi$.
\end{proposition}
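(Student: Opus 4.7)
The plan is to exhibit a small impure simplicial model together with a simplex at which $K_a(p_b \imp p_c) \imp K_a p_b \imp K_a p_c$ is defined and false. The key conceptual point, already visible in Example~\ref{example.zzz}, is that $K_a \phi$ can hold ``by default'': it is true whenever $\phi$ is true in every simplex in $\sstar(X_a)$ where $\phi$ happens to be defined, and this default behaviour need not be preserved under the usual modus ponens in the consequent.

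Concretely, I would take three agents $a,b,c$ and consider the complex with an $a$-coloured vertex $v$ sitting inside three facets: an edge $\{v,u\}$ with $\chi(u)=b$ and $p_b \in \ell(u)$, an edge $\{v,w\}$ with $\chi(w)=c$ and $p_c \notin \ell(w)$, and a triangle $\{v,u',w'\}$ with $\chi(u')=b$, $\chi(w')=c$, $p_b\in \ell(u')$, $p_c\in\ell(w')$. Evaluate at the point $X = \{v\}$. First, check that each of $K_a(p_b\imp p_c)$, $K_a p_b$, $K_a p_c$ is defined at $X$: this only requires one facet in $\sstar(v)$ in which the respective body is defined, which the triangle $\{v,u',w'\}$ provides in all three cases. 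Hence the full implication is defined at $X$ by the clauses of Definition~\ref{def.defsat} and Lemma~\ref{lemma.diamond}.

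Second, I would compute the truth values using Lemma~\ref{lemma.diamond}. For $K_a p_b$: the only facets in $\sstar(v)$ in which $p_b$ is defined are $\{v,u\}$ and $\{v,u',w'\}$, and $p_b$ is true in both, so $K_a p_b$ is true at $X$. For $K_a(p_b \imp p_c)$: the implication is defined only in the triangle $\{v,u',w'\}$ (the two edges lack either $c$ or $b$), and there $p_b$ and $p_c$ are both true, making $p_b \imp p_c$ true; hence $K_a(p_b\imp p_c)$ is true at $X$. For $K_a p_c$: on the edge $\{v,w\}$ the formula $p_c$ is defined but false, so $K_a p_c$ is false at $X$. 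Thus the antecedents are both true, the consequent is false, and by the implication clause of Lemma~\ref{lemma.diamond} the whole formula is false at $(\C, X)$, refuting validity.

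I do not expect any real obstacle; the only care needed is to ensure that the triangle facet is included so that $K_a(p_b\imp p_c)$ is defined (otherwise the implication becomes undefined at $X$ and the example fails to refute validity rather than witnessing its failure). The example also explains, informally, why {\bf K} has to be weakened to {\bf K}$^\top$: the triangle provides enough defined witnesses to make $K_a(p_b \imp p_c)$ and $K_a p_b$ true, but the edge $\{v,w\}$ where $p_b$ is undefined yet $p_c$ is defined-and-false breaks the transfer of truth to $K_a p_c$, which is precisely what the extra $\phi^\top$ in the conclusion of {\bf K}$^\top$ is designed to block.
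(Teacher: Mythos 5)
Your counterexample is correct and works for exactly the same reason as the paper's: both antecedent knowledge claims hold ``by default'' because their bodies are only defined in the triangle, while the consequent $K_a p_c$ is refuted by a facet where $p_c$ is defined and false. The paper achieves this more economically with the two-facet model of Example~\ref{example.xxx} (evaluating at the edge itself, using $K_a(p_c\imp\neg p_b)\imp K_a p_c\imp K_a\neg p_b$), but this is an inessential difference.
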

\begin{proof}
Consider again the simplicial model from Example~\ref{example.xxx}, reprinted here. 
\begin{center}
\begin{tikzpicture}[round/.style={circle,fill=white,inner sep=1}]
\fill[fill=gray!25!white] (2,0) -- (4,0) -- (3,1.71) -- cycle;
\node[round] (b1) at (0,0) {$1_b$};
\node[round] (b0) at (4,0) {$0_b$};
\node[round] (c1) at (3,1.71) {$1_c$};
\node[round] (a0) at (2,0) {$0_a$};
\node (f1) at (3,.65) {$Y$};
\draw[-] (b1) -- node[above] {$X$} (a0);
\draw[-] (a0) -- (b0);
\draw[-] (b0) -- (c1);
\draw[-] (a0) -- (c1);
\end{tikzpicture}
\end{center}
We can observe that:
\begin{itemize}
\item $\C,X \models K_a (p_c \imp \neg p_b)$ \\ This is because $Y \in C$ is the single facet in $C$ such that $\C,Y \bowtie p_c \imp \neg p_b$ (as this requires $\C,Y \bowtie p_c$ and $\C,Y \bowtie p_b$) and $a \in \chi(X \inter Y)$; and $\C,Y \models p_c \imp \neg p_b$ because $\C,Y \models p_c$ and $\C,Y \models \neg p_b$. 

\item $\C,X \models K_a p_c$ \\ This is for the same reason as the previous item.

\item $\C,X \not\models K_a \neg p_b$ \\ This is because there are two facets where $\neg p_b$ (and thus $p_b$) is defined: $\C,X \bowtie p_b$ and $\C,Y \bowtie p_b$. However, although $\C,Y \models \neg p_b$, $\C,X \models p_b$, and therefore $\C,X \not\models K_a \neg p_b$.
\end{itemize}
From the above we conclude that $\C,X \not\models K_a (p_c \imp \neg p_b) \imp K_a p_c \imp K_a \neg p_b$, although $\C,X \bowtie K_a (p_c \imp \neg p_b) \imp K_a p_c \imp K_a \neg p_b$. Therefore, $\not\models K_a (p_c \imp \neg p_b) \imp K_a p_c \imp K_a \neg p_b$. \end{proof}

\begin{proposition}\label{prop.nonmp}
There are $\phi,\psi\in \lang_K$ such that $\models\phi\imp\psi$ and $\models\phi$ but $\not\models\psi$.
\end{proposition}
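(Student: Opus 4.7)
The plan is to exhibit $\phi, \psi \in \lang_K$ satisfying $\models \phi$, $\models \phi \imp \psi$, but $\not\models \psi$. The key semantic observation is that $\models \phi \imp \psi$ only constrains pointed simplicial models $(\C, X)$ at which \emph{both} $\phi$ and $\psi$ are defined. Combined with $\models \phi$, this only yields that $\psi$ is true at every pointed simplicial model where both are jointly defined. For $\not\models \psi$ to hold nonetheless, one needs some pointed simplicial model where $\psi$ is defined and false; and at such a point $\phi$ must be \emph{undefined}, for otherwise the hypotheses would force $\psi$ true there.

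The strategy is therefore to engineer a definedness asymmetry between $\phi$ and $\psi$, arranging $\psi$'s definability domain to properly extend beyond $\phi$'s and locating $\psi$'s false instance in that extra region. Concretely, I would take $\phi$ to be a validity whose definability requires a particular agent $a$ to belong to $\chi(X)$ --- e.g., an instance of the tautology $p_a \vel \lnot p_a$, valid by Proposition~\ref{prop.taut}, or the locality axiom $K_a p_a \vel K_a \lnot p_a$, valid by Proposition~\ref{prop.local}. For $\psi$, I would take a formula that references $a$ only through modalities of the form $\M_b$ or $K_b$ with $b \neq a$, so that $\psi$ can be defined at pointed simplicial models with $a \notin \chi(X)$ provided some $b$-reachable witness accommodates $a$; the construction must ensure that on the overlap of the two definability domains (i.e., at points with $a, b \in \chi(X)$) structural constraints, of the kind underpinning the locality axiom, force $\psi$'s truth.

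Verification would then proceed in three steps: (i) $\models \phi$ follows from Proposition~\ref{prop.taut} or Proposition~\ref{prop.local}; (ii) $\models \phi \imp \psi$ would be shown by a direct semantic argument inspecting points $(\C, X)$ where $\phi \imp \psi$ is defined, observing that joint definedness forces $a \in \chi(X)$, and using this together with the L-axiom-style guarantees to conclude $\psi$'s truth there; (iii) $\not\models \psi$ would be exhibited by constructing an explicit pointed simplicial model, in the spirit of Examples~\ref{example.xxx} and~\ref{example.stackedkn}, at which $a$ is absent from $\chi(X)$, $\psi$ is defined (via the $b$-modality witnessing some $a$-alive neighbour), and $\psi$ turns out to be false.

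The principal obstacle lies in step (ii): $\psi$ must be non-valid yet structurally forced true on the overlap of definability domains. This requires carefully arranging $\psi$'s truth conditions so that they are pinned down by the combinatorial consequences of $a$'s presence in $\chi(X)$, while genuinely collapsing when $a$ is dead --- exploiting the fact that validity in this three-valued logic is too coarse to perceive falsehoods hidden entirely within regions where the premise $\phi$ is undefined.
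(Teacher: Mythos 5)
You have correctly diagnosed the only mechanism by which the proposition can hold: any $(\C,X)$ witnessing $\not\models\psi$ must have $\psi$ defined and false while $\phi$ is \emph{undefined}, since joint definedness of $\phi$ and $\psi$ together with $\models\phi$ and $\models\phi\imp\psi$ forces $\C,X\models\psi$ via Lemma~\ref{lem:MPloc}. This matches the paper's strategy exactly. However, the statement is existential, so a proof must actually exhibit $\phi$ and $\psi$ and a countermodel, and your proposal stops precisely at what you yourself call ``the principal obstacle'': constructing a $\psi$ that is non-valid, is forced true wherever it is jointly defined with $\phi$, and yet is defined (not merely undefined) and false somewhere outside $\phi$'s definability domain. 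Nothing in your sketch explains how to achieve the conjunction of these three requirements, and the hint that ``L-axiom-style guarantees'' would force $\psi$'s truth on the overlap does not point at a working construction; locality plays no role in the paper's argument. As it stands, the proposal is a correct problem analysis rather than a proof.

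For comparison, the paper takes $\phi = \top_a\et\top_b\et\top_c$ and $\psi = \M_d\bigl((\top_a\et p_c)\vel\M_d(\top_b\et\neg p_c)\bigr)$, and the two ingredients you are missing are the following. First, truth of $\psi$ on the overlap ($\{a,b,c,d\}\subseteq\chi(X)$) is obtained not from any locality-style constraint but from a case split on the truth value of $p_c$ (which exists there, since $c$ is alive): whichever value $p_c$ takes, one of the two disjuncts is true at $X$ itself, and $d$'s reflexivity ($d\in\chi(X\inter X)$) turns this into truth of the outer $\M_d$. Second, falsity of $\psi$ off the overlap exploits the weak-Kleene behaviour of disjunction ($\C,X\bowtie\alpha\vel\beta$ requires \emph{both} disjuncts defined): in the two-triangle countermodel with facets $X=\{b,d,1_c\}$ and $Y=\{d,a,0_c\}$ sharing the $d$-vertex, the disjunction is undefined at $X$ (because $\top_a\et p_c$ is undefined there) and defined-but-false at $Y$ (each disjunct is defined and false there), so the outer $\M_d$ is defined at $X$ via the witness $Y$ yet has no witness making it true --- giving $\C,X\bowtie\psi$, $\C,X\not\models\psi$, while $\C,X\not\bowtie\phi$. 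Without supplying formulas realizing both of these effects, the proof is not complete.
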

\begin{proof}

The counterexample is
\begin{align}
\label{eq:one}
\models\quad&
 \top_a \et  \top_b \et \top_c
\\
\label{eq:two}
\models\quad&
\top_a \et \top_b \et \top_c
\imp 
\M_d ((\top_a \et p_c) \vel \M_d(\top_b \et \neg p_c))
\\
\label{eq:three}
\not\models\quad&
\M_d ((\top_a \et p_c) \vel \M_d(\top_b \et \neg p_c))
\end{align}
\paragraph*{Proof of \eqref{eq:one}.} We note that for any agent $e$ and any $(\C,X)$: $\C, X \bowtie \top_e$, iff $e \in \chi(X)$. Thus,  \eqref{eq:one} is either undefined or true. It is true iff agents $a$, $b$, and $c$ are all alive in $X$.

\paragraph*{Proof of \eqref{eq:two}.} Given some $(\C,X)$, the antecedent of the implication is defined in $X$ iff $a$, $b$, and $c$ are alive in $X$. For the consequent of the implication to be defined, it is additionally necessary to have $d$ alive in $X$. Overall, the implication \eqref{eq:two} is defined iff $\{a,b,c,d\}\subseteq \chi(X)$.

Therefore, assume $\{a,b,c,d\}\subseteq \chi(X)$. To show \eqref{eq:two} it is sufficient to show that the conclusion of the implication is true. Since $c \in \chi(X)$, $p_c$ is defined, thus, there are only two possibilities, i.e., $\C, X \models p_c$ or $\C, X \models \neg p_c$.

If $\C, X \models p_c$, then $\C, X \models \top_a \et p_c$ 
since $\top_a$ is true whenever defined. As we also have $\C, X \bowtie \M_d(\top_b \et \neg p_c)$, it follows that $\C, X \models (\top_a \et p_c)  \vel \M_d(\top_b \et \neg p_c)$. 
Thus, since $d \in \chi(X \cap X)$, we have
\[
\C, X \models \M_d ((\top_a \et p_c)  \vel \M_d(\top_b \et \neg p_c))
\]
and therefore \eqref{eq:two}.

If $\C, X \models \neg p_c$, then $\C, X \models \top_b \et \neg p_c$, 
and since $d \in \chi(X \cap X)$, also $
\C, X \models \M_d(\top_b \et \neg p_c)$. 
Given $\C, X \bowtie \top_a \et p_c$, it again follows that $\C, X \models (\top_a \et p_c)  \vel \M_d(\top_b \et \neg p_c)$ and the argument for \eqref{eq:two} being  true is now the same as in the preceding case.

Thus, we showed that in either case \eqref{eq:two} is  true.
\paragraph*{Proof of \eqref{eq:three}.} 
Consider the simplicial model $\C$ depicted below, wherein we have only labelled the $c$ nodes with the value of the agent's propositional variable (we do not care about the value of the variables labelling the other nodes, only about the colour of the nodes).
\begin{figure}[ht]
\center
\scalebox{1}{
\begin{tikzpicture}[round/.style={circle,fill=white,inner sep=1}]
\fill[fill=gray!25!white] (2,0) -- (4,0) -- (3,1.71) -- cycle;
\fill[fill=gray!25!white] (0,0) -- (2,0) -- (1,1.71) -- cycle;
\node[round] (b1) at (0,0) {$b$};
\node[round] (b0) at (4,0) {$a$};
\node[round] (c1) at (3,1.71) {$0_c$};
\node[round] (lc1) at (1,1.71) {$1_c$};
\node[round] (a0) at (2,0) {$d$};
\node (f1) at (3,.65) {$Y$};
\node (f1) at (1,.65) {$X$};
\node(c) at (-1,0) {$\C:$};
\draw[-] (b1) -- (a0);
\draw[-] (b1) -- (lc1);
\draw[-] (a0) -- (lc1);
\draw[-] (a0) -- (b0);
\draw[-] (b0) -- (c1);
\draw[-] (a0) -- (c1);
\end{tikzpicture}
}
\end{figure}

For this complex we can make the following observations in tandem, schematically listed in the order of justification, thus producing a `witness' against the validity of \eqref{eq:three} (and even two, merely for the sake of symmetry).
\begin{align*}
\C, X &\bowtie \top_b \et \neg p_c 
&
\C, Y &\not\bowtie \top_b \et \neg p_c 
\\
\C, X &\not\models \top_b \et \neg p_c 
&
\\
 &
&
\C, Y &\bowtie \M_d(\top_b \et \neg p_c) 
\\
 &
&
\C, Y &\not\models \M_d(\top_b \et \neg p_c) 
\\
\C, X &\not\bowtie \top_a \et  p_c 
&
\C, Y &\bowtie \top_a \et p_c 
\\
 &
&
\C, Y &\not\models \top_a \et p_c 
\\
\C, X &\not\bowtie (\top_a \et p_c) \vel  \M_d(\top_b \et \neg p_c)
&
\C, Y &\bowtie (\top_a \et p_c) \vel  \M_d(\top_b \et \neg p_c)
\\
&
&
\C, Y &\not\models (\top_a \et p_c) \vel  \M_d(\top_b \et \neg p_c)
\\
\C, X &\bowtie \M_d ((\top_a \et p_c) \vel  \M_d(\top_b \et \neg p_c))
&
\C, Y &\bowtie \M_d ((\top_a \et p_c) \vel  \M_d(\top_b \et \neg p_c))
\\
\C, X &\not\models \M_d ((\top_a \et p_c) \vel  \M_d(\top_b \et \neg p_c))
&
\C, Y &\not\models \M_d ((\top_a \et p_c) \vel  \M_d(\top_b \et \neg p_c))
\end{align*}
We have shown that \eqref{eq:three} is defined but false and is therefore not valid.
\end{proof}

\section{Pure complexes}  \label{section.pure}

When the simplicial complexes are pure and of dimension $|A|-1=n$, the logical semantics should become that of \cite{goubaultetal:2018,ledent:2019} and the logic should become {\bf S5}{+}{\bf L}. We show that this is indeed the case.

\begin{lemma} \label{lemma.pure}
Let $\C = (C,\chi,\ell)$ be pure and $\phi\in\lang_K$.
\begin{itemize}
\item For all $Y \in \FF(C)$: $\C,Y\bowtie\phi$.
\item For all $X \in C$ with $a \in \chi(X)$: $\C,X\bowtie\M_a\phi$.
\end{itemize} \vspace{-.6cm}
\end{lemma}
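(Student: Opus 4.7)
The entire lemma rides on one combinatorial observation: in a pure complex of dimension $n = |A|-1$, every facet is ``colour-complete''. Concretely, if $Y\in \FF(C)$, then $|Y| = n+1 = |A|$; since $\chi$ assigns pairwise distinct colours to the vertices of any simplex, the pigeonhole argument forces $\chi(Y) = A$. In particular $A_\phi \subseteq A = \chi(Y)$ for every $\phi\in\lang_K$, and Lemma~\ref{lemma.aphibowtie} immediately delivers $\C,Y\bowtie\phi$. This handles the first bullet.

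For the second bullet, I first check that every simplex extends to a facet. Because $C$ has dimension at most $|A|-1$, every element of $\sstar(X)$ has size at most $|A|$, so $\sstar(X)$ has a maximal element $Y$ with $X\subseteq Y$, and that maximal element is by definition a facet of $C$. Purity then guarantees $|Y| = |A|$, whence (by the argument of the first bullet) $\C,Y\bowtie\phi$. From $X\subseteq Y$ we get $X\inter Y = X$, so the assumption $a\in\chi(X)$ is exactly $a\in \chi(X\inter Y)$. The definability clause for $\M_a$ then yields $\C,X\bowtie\M_a\phi$, as required.

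I do not expect any real obstacle here; both parts follow once one unpacks what purity plus dimension $|A|-1$ means for the colours of facets, and the only mildly non-routine step is the existence of a facet above a given simplex, which is handled by the bounded-dimension observation above.
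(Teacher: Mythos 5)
Your proof is correct. For the first bullet you take a genuinely different (and shorter) route than the paper: the paper runs a fresh induction on $\phi$, with the colour-completeness of facets entering only at the base case $p_a$ and the modal case discharged by choosing the facet itself as the witness, whereas you observe that colour-completeness gives $A_\phi \subseteq A = \chi(Y)$ outright and then invoke Lemma~\ref{lemma.aphibowtie}, which already packages exactly the required induction. This buys brevity and reuse of existing machinery at no cost; the paper's self-contained induction gains nothing extra, since Lemma~\ref{lemma.aphibowtie} is available well before this point. For the second bullet the two arguments coincide: pick a facet $Y \supseteq X$, apply the first bullet to it, and note that $X \inter Y = X$ so $a \in \chi(X \inter Y)$. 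You additionally justify the existence of such a facet (a maximal element of $\sstar(X)$ exists because simplex sizes are bounded by $|A|$, and any such maximal element is maximal in $C$), a step the paper leaves implicit. Both proofs rely on the section's standing assumption that the pure complex has dimension $|A|-1$, without which facets need not carry all colours; you make this hypothesis explicit, which is harmless and arguably an improvement on the bare statement of the lemma.
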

\begin{proof}
The first item is shown by induction on $\phi$. If $\phi = p_a$ for some $a \in A$ and $p_a \in P_a$, then, as $C$ is pure and its facets $Y$ contain all colours, $a \in \chi(Y)$ and therefore $\C,Y \bowtie p_a$. The cases conjunction and negation are trivial. If $\phi = \M_a\phi'$ for some $a \in A$, then $\C,Y\bowtie \M_a\phi'$ iff there is $X$ with $a \in \chi(X\inter Y)$ and $\C,X\bowtie\phi'$. We now take $X=Y$ and we may assume $\C,Y\bowtie\phi'$ by induction.

The second is shown using the first item. By definition, $\C,X\bowtie\M_a\phi$ iff there is $Z$ with $a \in \chi(X\inter Z)$ and $\C,Z\bowtie\phi$. Take any $Z \in \FF(C)$ with $X \subseteq Z$. By the first item it holds that $\C,Z\bowtie\phi$. Therefore, $\C,X\bowtie\M_a\phi$. (For a vertex $v$ coloured with $a$ we therefore now always have that $\C,v\bowtie\M_a\phi$. However, if $\chi(v)\neq a$ then also for pure complexes we still have that $\C,v\not\bowtie\M_a\phi$!)
\end{proof}

Instead of a satisfaction relation $\models$ between pairs $(\C,X)$ consisting of a (possibly impure) simplicial model $\C$ for agents $A$ and a simplex $X$ contained in it, and a formula $\phi \in \lang_K(A)$, we now define a satisfaction relation $\models_\pure$ between pairs $(\C,X)$ consisting of a pure simplicial model $\C$ for agents $A$ of dimension $|A|-1=n$ and a facet $X$ contained in it, and a formula $\phi \in \lang_K(A)$. Note that the logical language is the same either way. We will show that this satisfaction relation $\models_\pure$ is as in \cite{goubaultetal:2018}.

First we recall the definition of $\models$ (Definition~\ref{def.defsat}). As \cite{goubaultetal:2018} has $K_a\phi$ as linguistic primitive, not $\M_a\phi$, we will do likewise.
\[ \begin{array}{lcl}
\C, X \models p_a & \text{iff} & a \in \chi(X) \ \text{and} \ p_a \in \ell(X) \\
\C, X \models \phi\et\psi & \text{iff} & \C, X \models \phi \ \text{and} \ \C, X \models \psi \\
\C, X \models \neg \phi & \text{iff} & \C, X \bowtie \phi \ \text{and} \ \C, X \not\models \phi \\
\C,X \models K_a\phi & \text{iff} & \C,X \bowtie K_a\phi \ \text{and} \\ && \C,Y \bowtie \phi \text{ implies } \C,Y \models \phi \ \text{for all} \ Y \in C \ \text{with} \ a \in \chi(X \inter Y)
\end{array}\]
When $X$ and $Y$ above are facets in a pure complex $\C$ of dimension $n$, in view of Lemma~\ref{lemma.pure} we can scrap all definability requirements and we thus obtain
\[ \begin{array}{lcl}
\C, X \models_\pure p_a & \text{iff} & p_a \in \ell(X) \\
\C, X \models_\pure \phi\et\psi & \text{iff} & \C, X \models_\pure \phi \ \text{and} \ \C, X \models_\pure \psi \\
\C, X \models_\pure \neg \phi & \text{iff} & \C, X \not\models_\pure \phi \\
\C,X \models_\pure K_a\phi & \text{iff} & \C,Y \models_\pure \phi \ \text{for all} \ Y \in \FF(C) \ \text{with} \ a \in \chi(X \inter Y)
\end{array}\]
which is the semantics of \cite{goubaultetal:2018}.

Even on pure complexes, there is nothing against privileging the $\models$ semantics, because it is local and allows us to interpret formulas in, for example, vertices, which seems a natural thing to do. We then continue to need definability requirements such as $\C, v \not\bowtie p_a$ if $a \neq \chi(v)$.

All prior results obtained for impure complexes are obviously preserved for pure complexes. In particular we note that semantic equivalence of formulas $\phi,\psi$ and validity of formulas $\phi$ becomes as usual, when restricting the definitions to pure complexes of dimension $n$ and facets. This was (see again Definition~\ref{def.defsat})
\begin{quote}
Given $\phi,\psi\in\lang_K$, $\phi$ is {\em equivalent} to $\psi$ if for all $(\C,X)$: $\C,X \models \phi$ iff $\C,X \models \psi$, $\C,X \models \neg\phi$ iff $\C,X \models \neg\psi$, and $\C,X \not\bowtie\phi$ iff $\C,X\not\bowtie\psi$. A formula $\phi\in\lang_K$ is {\em valid} if for all $(\C,X)$: $\C,X \bowtie \phi$ implies $\C,X \models \phi$.
\end{quote}
and now has become, for facets $X$ only,
\begin{quote}
Given $\phi,\psi\in\lang_K$, $\phi$ is {\em equivalent} to $\psi$ if for all $(\C,X)$: $\C,X \models_\pure \phi$ iff $\C,X \models_\pure \psi$. A formula $\phi\in\lang_K$ is {\em valid} if for all $(\C,X)$: $\C,X \models_\pure \phi$.
\end{quote}
This should not be surprising, as the $\models_\pure$ semantics is exactly the \cite{goubaultetal:2018} semantics.
 
This brings us to the axiomatization. Again, as the $\models_\pure$ semantics is the \cite{goubaultetal:2018} semantics, the logic must be the same, that is, {\bf S5} plus the locality axiom {\bf L}. The difference between {\bf S5} and {\bf S5}$^\top$ only concerns the {\bf K} axiom and the {\bf MP} derivation rule. We recall that {\bf S5}$^\top$ contained
\[\begin{array}{ll}
\mathbf{K}^\top \qquad & \models K_a (\phi\imp\psi) \imp K_a \phi \imp K_a (\phi^\top \imp \psi) \\
\mathbf{MP}^\top \qquad  & \text{From } \models \phi\imp\psi \text{ and } \models \phi, \text{infer } \models \phi^\top \imp \psi
\end{array}\]
These are now replaced by
\[\begin{array}{ll}
\mathbf{K} \qquad & \models_\pure K_a (\phi\imp\psi) \imp K_a \phi \imp K_a \psi \\
\mathbf{MP} \qquad  & \text{From } \models_\pure \phi\imp\psi \text{ and } \models_\pure \phi, \text{infer } \models_\pure \psi
\end{array}\]

So we get the complete axiomatization $\mathbf{S5}+\mathbf{L}$ for free, by referring to \cite{goubaultetal:2018}. 

We should note that is not surprising\dots Validity of all axioms in {\bf S5}$^\top$ is preserved, as truth for all $(\C,X)$ where $\C$ may be impure and $X$ is any simplex, implies truth for all $(\C',X')$ where $\C'$ is pure of dimension $n$ and $X$ a facet. Also {\bf K}$^\top$ remains valid, and clearly $\mathbf{K}^\top \eq \mathbf{K}$ is $\models_\pure$ valid.  Validity preservation of the derivation rules in {\bf S5}$^\top$ also holds but was not guaranteed, as the assumption of truth for all $(\C,X)$ where $\C$ may be impure and $X$ is any simplex is stronger than the assumption of truth for all $(\C',X')$ where $\C'$ is pure of dimension $|A|-1$ and $X$ a facet. So necessation {\bf Nec} and {\bf MP} would have to be shown anew. Here we can be lazy and for {\bf Nec} refer to \cite{goubaultetal:2018} (although an actual proof would not take more than a few lines). Concerning {\bf MP}, it suffices to observe that $\models_\pure \psi$ is equivalent to $\models_\pure \phi^\top \imp \psi$, as $\phi^\top$ (that is now always defined) is a $\models_\pure$ validity. 

This ends our exploration into pure complexes of dimension $|A|-1$ and this is the `sanity check' result that was expected.

\section{Correspondence to Kripke models}  \label{section.correspondence}

A precise one-to-one correspondence between pure simplicial models and multi-agent Kripke models satisfying the following three conditions is given in \cite{goubaultetal:2018,ledent:2019}: $(i)$ all accessibility relations are equivalences, $(ii)$ all propositional variables are local, that is, there is an agent who knows the value of that variable, and  $(iii)$ the intersection of all relations is the identity. We now propose Kripke models where agents may be dead {\bf or} alive. We call them {\em local epistemic models}.

For local epistemic models we require that for each agent there is a subset of the domain on which the accessibility relation for that agent is an equivalence relation (these are the states where that agent is alive), whereas in the remainder of the domain the accessibility relation is empty (these are the states where that agent is dead). We also require that for any two distinct states there must always be a live agent in one that can distinguish it from the other and a live (possibly but not necessarily different) agent in the other that can distinguish it from the one, generalizing the requirement for pure complexes that the intersection of relations is the identity \cite{goubaultetal:2018}. (This requirement should be credited to \cite{goubaultetal:2021}.) Additionally we require that a formula can only be interpreted in a given state  of a model if the interpretation is defined in that state. For example, we do not wish to say that formula $p_a$ is true or false in a state where agent $a$'s accessibility relation is empty, $p_a$ should be undefined in that state. 

Given that, however, and with the epistemic logic for impure complexes already at our disposal, we can map simplicial models to local epistemic models and vice versa, and these transformations are truth preserving. 

\paragraph*{Local epistemic models.}

As before, given are the set $A$ of agents and the set $P = \Union_{a \in A} P_a$ of (local) variables. Given an abstract domain of objects called {\em states} and an agent $a$, a {\em local equivalence relation} ($\sim\!(a)$ or) $\sim_a$ is a binary relation between elements of $S$ that is an equivalence relation on a subset of $S$ denoted $S_a$ and otherwise empty. So, $\sim_a$ induces a partition on $S_a$, whereas ${\sim_a} = \emptyset$ on the complement $\overline{S}_a := S \setminus S_a$ of $S_a$. For $(s,t) \in {\sim_a}$ we write $s \sim_a t$, and for $\{ t \mid s \sim_a t \}$ we write $[s]_a$: this is an equivalence class of the relation $\sim_a$ on $S_a$. Given $s \in S$, let $A_s := \{a \in A \mid s \in S_a\}$. Set $A_s$ contains the agents that are alive in state $s$. Note that $a \in A_s$ iff $s \in S_a$.

\begin{definition}[Local epistemic model]
\emph{Local epistemic frames} are pairs $\model = (S,\sim)$ where $S$ is the (non-empty) domain of \emph{(global) states}, and $\sim$ is a function that maps the agents $a\in A$ to local equivalence relations $\sim_a$ that are required to be \emph{proper}, that is: for all distinct $s,t \in S$ there is a $b \in A_s$ such that $s \not\sim_b t$ and there is (therefore also) a $c \in A_t$ such that $s \not\sim_c t$. Agents $b$ and $c$ may but need not be the same. \emph{Local epistemic models} are triples $\model = (S,\sim,L)$, where $(S,\sim)$ is a local epistemic frame, and where \emph{valuation} $L$ is a function from $S$ to $\powerset(P)$ satisfying 
that for all $a \in A$, $p_a \in P_a$ and $s,t \in S_a$, if $s \sim_a t$ then $p_a \in L(s)$ iff $p_a \in L(t)$. We say that all variables $p_a$ are {\em local} for agent $a$. 
A pair $(\model,s)$ where $s \in S$ is a \emph{pointed} local epistemic model.
\end{definition}
Note that there is no requirement for the valuation of variables $p_a$ on the complement $\overline{S}_a$. 


\paragraph*{Semantics on local epistemic models.}

The interpretation of a formula $\phi\in \lang_K$ in a global state of a given pointed local epistemic model $(\model,s)$ is by induction on the structure of $\phi$. As before, we need relations $\bowtie$ to determine whether the interpretation is defined, and $\models$ to determine its truth value when defined. 
\begin{definition} Let $\model = (S,\sim,L)$ be given. We define $\bowtie$ and $\models$ by induction on $\phi\in\lang_K$.
\[ \begin{array}{lcl}
\model,s \bowtie p_a & \text{iff} & s \in S_a \\
\model,s \bowtie \neg\phi & \text{iff} & \model,s \bowtie \phi \\
\model,s \bowtie \phi\et\psi & \text{iff} & \model,s \bowtie \phi \text{ and } \model,s \bowtie \psi \\
\model,s \bowtie \M_a \phi & \text{iff} & \model,t \bowtie \phi \text{ for some } t \text{ with } s \sim_a t
\end{array} \]

\[ \begin{array}{lcl}
\model,s \models p_a & \text{iff} & s \in S_a \text{ and } p_a \in L(s) \\
\model,s \models \neg\phi & \text{iff} & \model,s \bowtie \phi \text{ and } \model,s \not\models \phi \\
\model,s \models \phi\et\psi & \text{iff} & \model,s \models \phi \text{ and } \model,s \models \psi \\
\model,s \models \M_a \phi & \text{iff} & \model,t \models \phi \text{ for some } t \text{ with } s \sim_a t
\end{array} \]
Formula $\phi$ is {\em valid} iff for all $(\model,s)$, $\model,s \bowtie \phi$ implies $\model,s \models \phi$. We let $\I{\phi}_\model$ stand for $\{ s \in S \mid \model,s \models \phi \}$. This set is called the {\em denotation} of $\phi$ in $\model$.
\end{definition}

Analogous results as for the semantics on simplicial complexes can be obtained for the semantics on local epistemic models, demonstrating the tricky interaction between $\bowtie$ and $\models$. For example, the interpretation of other propositional connectives such as disjunction and that of knowledge, now becomes (we recall Lemma~\ref{lemma.diamond}):

\[ \begin{array}{llll}
\model,s \models \phi \vel \psi & \text{iff} & \model,s \bowtie \phi, \model,s \bowtie \psi, \ \text{and } \model,s \models \phi \ \text{or} \ \model,s \models \psi \\
\model,s \models K_a\phi & \text{iff} & \model,s \bowtie K_a\phi, \ \text{and} \\ && \model,t \bowtie \phi \text{ implies } \model,t \models \phi \ \text{for all} \ t \in S \ \text{with} \ s \sim_a t
\end{array}\]
Instead of showing all this in detail, we restrict ourselves to show a truth value (and definability) preserving transformation between simplicial models and local epistemic models.

\paragraph*{Correspondence between simplicial models and epistemic models.}

Given agents $A$ and variables $P$, let $\mathcal K$ be the class of local epistemic models and let $\mathcal S$ be the class of (possibly impure) simplicial models. In \cite{goubaultetal:2018}, the \emph{pure} simplicial models are shown to correspond to local epistemic models of the following special kind: all relations are equivalence relations on the entire domain of the model (instead of on a subset only), and the intersection of the relations for all agents is the identity, a frame requirement they call \emph{proper} (and that is not bisimulation invariant). We give a generalization of their construction where the relations merely need to be equivalence relations on a subset of the domain.

We define maps $\sigma: \mathcal K \imp \mathcal S$ ($\sigma$ for \emph{S}implicial) and $\kappa: \mathcal S \imp \mathcal K$ ($\kappa$ for \emph{K}ripke), such that $\sigma$ maps each local epistemic model $\model$ to a simplicial model $\sigma(\model)$, and $\kappa$ maps each  simplicial model $\C$ to a local epistemic model $\kappa(\C)$. As $\sigma$ maps a state $s$ in $\model$ to a facet $X=\sigma(s)$ in $\sigma(\model)$, and $\kappa$ maps each facet $X$ in $\C$ to a state $s = \kappa(X)$ in $\kappa(\C)$, these maps are also between pointed structures $(\model,s)$ respectively $(\C,X)$.  Subsequently we then show that for all $\phi\in\lang_K$, $\model,s\models\phi$ iff $\sigma(\model,s) \models\phi$, and that (for facets $X$) $\C,X\models\phi$ iff $\kappa(\C,X)\models \phi$.

\begin{definition}
Given a local epistemic model $\model = (S,\sim,L)$, we define $\sigma(\model) = (C,\chi,\ell)$: 
\[\begin{array}{lll}
X \in C &\text{iff}& X = \{ ([s]_a,a) \mid a \in B \} \text{ for some } s \in S \text{ and } B \subseteq A \text{ with } \emptyset\neq B \subseteq A_s \\
\chi(([s]_a,a)) &=& a \\
p_a \in \ell(([s]_a,a)) & \text{iff} & p_a \in L(s)
\end{array}\vspace{-.6cm}\]
\end{definition}
It follows from the definition of $\sigma(\model)$ that:
\[\begin{array}{lll}
\VV(C) &=& \{ ([s]_a,a) \mid s \in A, a \in A_s\} \\
X \in \FF(C) &\text{iff}& X = \{ ([s]_a,a) \mid a \in A_s\} \text{ for some } s \in S \text{ with } A_s \neq \emptyset
\end{array}\]
Note that in a vertex denoted $([s]_a,a)$ the first argument $[s]_a$ is a set of states in which the name $a$ of the agent does not appear, which is why we need the second argument $a$ to determine its colour in $\chi(([s]_a,a))=a$.

Given $s \in S$, we let $\sigma(s)$ denote the facet $\{ ([s]_a,a) \mid a \in A_s\}$, and by $\sigma(\model,s)$ we mean $(\sigma(\model),\sigma(s))$. The requirement that local epistemic models are proper ensures that different states are mapped to different facets, that is, for $s \neq t$ we have that neither $\sigma(s)\subseteq\sigma(t)$ nor $\sigma(t)\subseteq\sigma(s)$. This is a generalization of a similar requirement in \cite{goubaultetal:2018,ledent:2019} for epistemic models corresponding to pure complexes, namely that $\Inter_{a \in A}\sim_a$ is the identity relation. Some issues with this requirement are discussed at the end of this section.

Finally we should observe that $\sigma(\model)$ is indeed a simplicial model. This is elementary to see. It is closed under subsets of simplices. Also, $([s]_a,a) = ([t]_a,a)$ means that $s \sim_a t$, and the locality of the epistemic model then guarantees that $p_a \in L(s)$ iff $p_a \in L(t)$.

\begin{definition}
Given a simplicial model $\C = (C,\chi,\ell)$, we define $\kappa(\C) = (S,\sim,L)$: 
\[\begin{array}{lll}
S & = & \FF(C) \\
X \sim_a Y & \text{iff} & a \in \chi(X \inter Y) \\
p_a \in L(X) & \text{iff} & p_a \in \ell(X)
\end{array}\]
\end{definition}
Simplices $X,Y$ above are elements of the domain, and therefore facets. We let $\kappa(\C,X)$ for facets $X$ denote $(\kappa(\C),X)$. Concerning the definition of $\kappa(\C)$ we recall that $\ell(X) = \Union_{a \in \chi(X)} \ell(X_a)$, where for a vertex $v$ coloured $a$ such as $X_a$, $\ell(v)\subseteq P_a$. Given $a \in A$, a facet not containing that colour (so of dimension less than $|A|-1$), that is, $X \in \FF(C)$ with $a \notin \chi(X)$, is in the $\overline{S}_a$ part of the model $\kappa(\C)$, on which ${\sim_a} = \emptyset$. Whereas restricted to $S_a$, $\sim_a$ is indeed an equivalence relation between facets/states. The states in $S_a$ may also be facets of dimension less than $|A|-1$, but then they lack a vertex for a colour other than $a$.

If $a \notin \chi(X)$, then obviously $p_a \notin \ell(X)$. Consequently, in $\kappa(\C)$ the valuation of variables for agent $a$ in the $\overline{S}_a$ part of the model is empty. This was not required, but it does not matter, as we will show that it is irrelevant for the truth (or definability) of formulas.

Finally, note that $\kappa(\C)$ is indeed a local epistemic model. Here it is important to observe that the model is proper: distinct facets $X,Y \in C$ intersecting in $a$ are mapped to different states in $\kappa(\C)$. As $X\setminus Y \neq \emptyset$ there is a $b \in \chi(X\setminus Y)$ and therefore $X \not\sim_b Y$ in $\kappa(\C)$. As $Y\setminus X \neq \emptyset$ there is a $c \in \chi(Y\setminus X)$ and therefore $Y \not\sim_c X$ in $\kappa(\C)$.

\begin{figure}
\center
\scalebox{.8}{
\begin{tabular}{cccccc}
\begin{tikzpicture}[round/.style={circle,fill=white,inner sep=1}]
\fill[fill=gray!25!white] (2,0) -- (4,0) -- (3,1.71) -- cycle;
\node[round] (b1) at (0,0) {$1_b$};
\node[round] (b0) at (4,0) {$0_b$};
\node[round] (c1) at (3,1.71) {$1_c$};
\node[round] (a0) at (2,0) {$0_a$};
%
\draw[-] (b1) --  (a0);
\draw[-] (a0) -- (b0);
\draw[-] (b0) -- (c1);
\draw[-] (a0) -- (c1);
\end{tikzpicture} 
& \quad $\stackrel \kappa \Imp$ \quad &
\begin{tikzpicture}
\node (010l) at (0,0.4) {\scriptsize$ab$};
\node (001l) at (3,0.4) {\scriptsize$abc$};
\node (010) at (.5,0) {$0_a1_b0_c$};
\node (001) at (3.5,0) {$0_a0_b1_c$};
\draw[-] (010) -- node[above] {$a$} (001);
\end{tikzpicture}
\\
\begin{tikzpicture}[round/.style={circle,fill=white,inner sep=1}]
\fill[fill=gray!25!white] (2,0) -- (4,0) -- (3,1.71) -- cycle;
\fill[fill=gray!25!white] (0,0) -- (2,0) -- (1,1.71) -- cycle;
\node[round] (b1) at (0,0) {$1_b$};
\node[round] (b0) at (4,0) {$0_b$};
\node[round] (c1) at (3,1.71) {$1_c$};
\node[round] (lc1) at (1,1.71) {$0_c$};
\node[round] (a0) at (2,0) {$0_a$};
%
\draw[-] (b1) -- (a0);
\draw[-] (b1) -- (lc1);
\draw[-] (a0) -- (lc1);
\draw[-] (a0) -- (b0);
\draw[-] (b0) -- (c1);
\draw[-] (a0) -- (c1);
\end{tikzpicture}
& \quad $\stackrel \kappa \Imp$ \quad &
\begin{tikzpicture}
\node (010l) at (0,0.4) {\scriptsize$abc$};
\node (001l) at (3,0.4) {\scriptsize$abc$};
\node (010) at (.5,0) {$0_a1_b0_c$};
\node (001) at (3.5,0) {$0_a0_b1_c$};
\draw[-] (010) -- node[above] {$a$} (001);
\end{tikzpicture}
\\
\begin{tikzpicture}[round/.style={circle,fill=white,inner sep=1}]
\node[round] (b1) at (0,0) {$1_b$};
\node[round] (b0) at (4,0) {\color{white}$0_b$};
\node[round] (c1) at (3,1.71) {$1_c$};
\node[round] (a0) at (2,0) {$0_a$};
%
\draw[-] (b1) --  (a0);
\draw[-] (a0) --  (c1);
\end{tikzpicture}
& \quad $\stackrel \kappa \Imp$ \quad  &
\begin{tikzpicture}
\node (010l) at (0,0.4) {\scriptsize$ab$};
\node (001l) at (3,0.4) {\scriptsize$ac$};
\node (010) at (.5,0) {$0_a1_b0_c$};
\node (001) at (3.5,0) {$0_a0_b1_c$};
\draw[-] (010) -- node[above] {$a$} (001);
\end{tikzpicture}
& \quad $\stackrel \sigma \Imp$ \qquad &
\begin{tikzpicture}[round/.style={circle,fill=white,inner sep=1}]
\node[round] (b1) at (0,0) {$1_b$};
\node[round] (c1) at (3,1.71) {$1_c$};
\node[round] (a0) at (2,0) {$0_a$};
\node[round] (b0) at (4,0) {\color{white}$0_b$};
\node (b1b) at (-.5,.4) {\scriptsize$(\{0_a1_b0_c\},b)$};
\node (c1b) at (1.9,1.51) {\scriptsize$(\{0_a0_b1_c\},c)$};
\node (a0b) at (3.6,.2) {\scriptsize$(\{0_a1_b0_c,0_a0_b1_c\},a)$};
\draw[-] (b1) --  (a0);
\draw[-] (a0) --   (c1);
\end{tikzpicture}
\\
\begin{tikzpicture}[round/.style={circle,fill=white,inner sep=1}]
\fill[fill=gray!25!white] (2,0) -- (4,0) -- (3,1.71) -- cycle;
\node[round] (b0) at (4,0) {$1_b$};
\node[round] (c1) at (3,1.71) {$0_c$};
\node[round] (a0) at (2,0) {$0_a$};
%
\draw[-] (a0) -- (b0);
\draw[-] (b0) -- (c1);
\draw[-] (a0) -- (c1);
\end{tikzpicture}
& \quad $\stackrel \kappa \Imp$ \quad &
\begin{tikzpicture}
\node (010l) at (0,0.4) {\scriptsize$abc$};
\node (010) at (.5,0) {$0_a1_b0_c$};
\end{tikzpicture}
\end{tabular}
}
\caption{From simplicial models to local epistemic models, and vice versa in one case. Labels of states in epistemic models list the agents that are alive.}
\label{figure.kappasigma}
\end{figure}
%

\begin{example} \label{example.kappasigma}
Figure~\ref{figure.kappasigma} shows various examples of the transformation via $\kappa$ of simplicial models into local epistemic models. The third simplicial model consisting of two edges produces a local epistemic model that is indeed proper: agent $b$ is alive in the left state $0_a1_b0_c$ and can distinguish this state from the right state $0_a0_b1_c$, whereas agent $c$ is alive in the right state $0_a0_b1_c$ and can distinguish that state from the left state $0_a1_b0_c$. 

To obtain simplicial models from local epistemic models, we simply follow the arrow in the other direction, where the only difference is that the names of vertices are now pairs consisting of an equivalence class of states and an agent. This is demonstrated for the third model only.
\end{example}
It is easy to see that (always) $\sigma(\kappa(\C))$ is isomorphic to $\C$ and $\kappa(\sigma(\model))$ is isomorphic to $\model$.

\begin{proposition}\label{prop.corr2}
For all formulas $\phi \in \lang_K$, for all pointed local epistemic models $(\model,s)$: $\model,s\bowtie \phi$ iff $\sigma(\model,s) \bowtie \phi$, and $\model,s\models \phi$ iff $\sigma(\model,s) \models \phi$.
\end{proposition}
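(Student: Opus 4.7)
I would prove the two biconditionals simultaneously by induction on the structure of $\phi \in \lang_K$, exploiting the specific shape of simplices in $\sigma(\model)$.

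For the base case $\phi = p_a$, I use that $\sigma(s)$ contains a vertex coloured $a$ exactly when $a \in A_s$, i.e.\ $s \in S_a$, which gives the $\bowtie$ equivalence directly. Since the unique $a$-coloured vertex of $\sigma(s)$ is $([s]_a,a)$ and $\ell$ only assigns $a$-variables to $a$-coloured vertices, $p_a \in \ell(\sigma(s))$ iff $p_a \in \ell(([s]_a,a))$ iff $p_a \in L(s)$, giving the $\models$ equivalence. The inductive cases for $\neg$ and $\et$ are routine: $\bowtie$ distributes through both connectives on both sides, and $\models$ distributes through $\et$ while the $\neg$ clause reduces to combining the $\bowtie$ and $\models$ inductive hypotheses.

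The modality case is the main step. The structural observation I would exploit is that by construction \emph{every} simplex $Y \in C$ has the form $Y = \{([u]_b,b) \mid b \in B\}$ for some $u \in S$ and $\emptyset \neq B \subseteq A_u$, so $Y \subseteq \sigma(u)$ is a face of the facet $\sigma(u)$. The forward direction of the $\bowtie$ equivalence is easy: if $s \sim_a t$ in $\model$, then $([s]_a,a) = ([t]_a,a)$, so $a \in \chi(\sigma(s) \cap \sigma(t))$, and the induction hypothesis transfers $\model,t \bowtie \phi$ to $\sigma(\model),\sigma(t) \bowtie \phi$. For the converse, given $Y \in C$ witnessing $\sigma(\model),\sigma(s) \bowtie \M_a \phi$ with $a \in \chi(\sigma(s) \cap Y)$, I pick a facet $\sigma(u) \supseteq Y$. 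Then the $a$-vertex shared between $\sigma(s)$ and $Y$ must be $([s]_a,a) = ([u]_a,a)$, forcing $s \sim_a u$. Lemma~\ref{lemma.upbowtie} lifts $\sigma(\model),Y \bowtie \phi$ to $\sigma(\model),\sigma(u) \bowtie \phi$, and then the induction hypothesis yields $\model,u \bowtie \phi$, so $\model,s \bowtie \M_a \phi$. The $\models$ case is symmetric but uses Proposition~\ref{prop.star} instead of Lemma~\ref{lemma.upbowtie} to move truth from $Y$ up to $\sigma(u)$.

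The main obstacle is precisely this mismatch between the two semantics: on the epistemic side, $\M_a$ quantifies only over \emph{states}, which correspond to facets; on the simplicial side, it quantifies over \emph{arbitrary simplices}. Without the upwards monotonicity results, the induction hypothesis (which speaks about pairs $(\model,u)$, hence facets $\sigma(u)$) would not apply to the witnessing face $Y$. The two monotonicity lemmas are exactly what bridges this gap, and they do so because a shared $a$-vertex between $\sigma(s)$ and an arbitrary $Y$ already pins down $u$ up to $\sim_a$-equivalence. Everything else — closure of $C$ under subsets, properness of $\sigma(\model)$, and the correctness of $\chi$ and $\ell$ — follows from the definition of $\sigma$ and has essentially been verified when $\sigma$ was introduced.
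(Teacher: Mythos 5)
Your proposal is correct and follows essentially the same route as the paper: simultaneous induction on $\phi$ for both relations, with the modality case resolved by noting that every simplex of $\sigma(\model)$ is a face of some facet $\sigma(u)$, that a shared $a$-vertex forces $s \sim_a u$, and by invoking Lemma~\ref{lemma.upbowtie} (for $\bowtie$) and Proposition~\ref{prop.star} (for $\models$) to lift the witness from the face to the facet. Your write-up is somewhat more explicit than the paper's chain of equivalences about why the facet-versus-arbitrary-simplex mismatch is harmless, but the underlying argument is identical.
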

\begin{proof}
The proof is by induction on $\phi$. In the case negation for the $\models$ statement it is important that the induction hypothesis holds for the $\bowtie$ statement and for the $\models$ statement, which requires us to show them simultaneously.

\medskip
\noindent $\model,s \bowtie p_a \quad \text{ iff } \\
s \in S_a \quad \text{ iff } \text{(recall that } \sigma(s) = \{ ([s]_a,a) \mid a \in A_s\}) \\
{([s]_a,a)} \in \sigma(s) \quad \text{ iff } \\
a \in \chi(\sigma(s)) \quad \text{ iff } \\
\sigma(\model), \sigma(s) \bowtie p_a$

\medskip
\noindent $\model,s \bowtie \neg\phi \quad \text{ iff } \\
\model,s \bowtie \phi  \quad \text{ iff (by induction)} \\
\sigma(\model,s) \bowtie \phi \quad \text{ iff } \\
\sigma(\model,s) \bowtie \neg \phi$

\medskip
\noindent $\model,s \bowtie \phi\et\psi \quad \text{ iff } \\
\model,s \bowtie \phi \text{ and } \model,s \bowtie \psi \quad \text{ iff (by induction) } \\
\sigma(\model,s) \bowtie \phi \text{ and } \sigma(\model,s) \bowtie \psi \quad \text{ iff } \\
\sigma(\model,s) \bowtie \phi\et\psi$

\medskip
\noindent $\model,s \bowtie \M_a\phi \quad \text{ iff } \\
\model,t \bowtie \phi \text{ for some } t \sim_a s \quad \text{ iff (by induction)} \\
\sigma(\model),\sigma(t) \bowtie \phi \text{ for some } t \sim_a s \quad \text{ iff } \\
\sigma(\model),\sigma(t) \bowtie \phi \text{ for some } \sigma(t) \text{ with } a\in \chi(\sigma(s)\inter\sigma(t)) \quad \text{ iff (use Lemma~\ref{lemma.upbowtie} for $\Pmi$)} \\
\sigma(\model,s) \bowtie \M_a\phi$

\medskip

We continue with the satisfaction relation.

\medskip

\noindent $\model,s \models p_a \quad \text{ iff } \\
s \in S_a \text{ and } p_a \in L(s) \quad \text{ iff (as $([s]_a,a) \in \sigma(s)$, and $p_a 
\in \ell(([s]_a,a)) \subseteq \ell(\sigma(s))$)} \\
a \in \chi(\sigma(s)) \text{ and } p_a \in \ell(\sigma(s)) \quad \text{ iff } \\
\sigma(\model), \sigma(s) \models p_a$

\medskip
\noindent $\model,s \models \neg\phi \quad \text{ iff } \\
\model,s\bowtie\phi \text{ and } \model,s \not\models \phi \quad \text{ iff (by induction for $\bowtie$ and for $\models$)} \\
\sigma(\model,s) \bowtie \phi \text{ and } \sigma(\model,s) \not\models \phi \quad \text{ iff } \\
\sigma(\model,s) \models \neg \phi$

\medskip
\noindent $\model,s \models \phi\et\psi \quad \text{ iff } \\
\model,s \models \phi \text{ and } \model,s \models \psi \quad \text{ iff (by induction) } \\
\sigma(\model,s) \models \phi \text{ and } \sigma(\model,s) \models \psi \quad \text{ iff } \\
\sigma(\model,s) \models \phi\et\psi$

\medskip
\noindent $\model,s \models \M_a\phi \quad \text{ iff } \\
\model,t \models \phi \text{ for some } t \sim_a s \quad \text{ iff (by induction)} \\
\sigma(\model),\sigma(t) \models \phi \text{ for some } \sigma(t) \text{ with } a\in \chi(\sigma(s)\inter\sigma(t)) \quad \text{ iff (use Proposition~\ref{prop.star} for $\Pmi$)} \\
\sigma(\model),\sigma(s) \models \M_a\phi$
\end{proof}

\begin{proposition}\label{prop.corr3}
For all formulas $\phi \in \lang_K$, for all pointed simplicial models $(\C,X)$ where $X$ is a facet: $\C,X \bowtie\phi$ iff $\kappa(\C,X) \bowtie \phi$, and $\C,X \models\phi$ iff $\kappa(\C,X) \models \phi$.
\end{proposition}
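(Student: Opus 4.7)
The plan is to prove both biconditionals simultaneously by induction on the structure of $\phi$, mirroring the structure of Proposition~\ref{prop.corr2} but now in the opposite direction. The simultaneous induction is essential because the semantic clause for $\neg\phi$ mixes $\bowtie$ and $\models$, so one cannot establish either equivalence in isolation. Throughout, $X$ ranges over facets of $C$, hence over states of $\kappa(\C)$.

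For the atomic case, $\C, X \bowtie p_a$ iff $a \in \chi(X)$, which by construction of $\kappa(\C)$ is exactly $X \in S_a$, i.e., $\kappa(\C), X \bowtie p_a$. For truth, I would use that $L(X) = \ell(X)$ by definition, combined with the equivalence $a \in \chi(X)$ iff $X \in S_a$. The cases for $\neg\phi$ and $\phi \et \psi$ are routine unfoldings, invoking the induction hypothesis for both $\bowtie$ and $\models$ on the immediate subformulas.

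The modality case is where the real work lies, and the main obstacle is the following asymmetry: the clause for $\M_a\phi$ in $\C$ quantifies over arbitrary simplices $Y \in C$ with $a \in \chi(X \inter Y)$, whereas the clause in $\kappa(\C)$ quantifies only over facets $Z$ with $X \sim_a Z$, since the domain of $\kappa(\C)$ is $\FF(C)$. The key is upward monotony. For $\bowtie$: if $\C, X \bowtie \M_a\phi$ is witnessed by some $Y \in C$ with $a \in \chi(X \inter Y)$ and $\C, Y \bowtie \phi$, extend $Y$ to a facet $Z \supseteq Y$; then $a \in \chi(X \inter Y) \subseteq \chi(X \inter Z)$, so $X \sim_a Z$, and Lemma~\ref{lemma.upbowtie} yields $\C, Z \bowtie \phi$, to which the induction hypothesis applies to give $\kappa(\C), Z \bowtie \phi$ and hence $\kappa(\C), X \bowtie \M_a\phi$. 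For $\models$: the analogous step uses Proposition~\ref{prop.star} in place of Lemma~\ref{lemma.upbowtie}, to push truth from the arbitrary witness $Y$ up to the facet $Z$. The reverse directions are immediate, since any facet witness $Z$ in $\kappa(\C)$ is already a simplex of $C$ with $a \in \chi(X \inter Z)$, so the induction hypothesis transfers it back.

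I expect the atomic and boolean cases to be one-line unfoldings, while the modality case will need to be written out in full with explicit invocations of Lemma~\ref{lemma.upbowtie} (for $\bowtie$) and Proposition~\ref{prop.star} (for $\models$) to justify replacing an arbitrary simplex witness by a facet witness. No separate argument for the valuation on $\overline S_a$ is needed, since formulas $p_a$ are never evaluated there: the definability clause for $p_a$ fails whenever $a \notin \chi(X)$, which is precisely when $X \notin S_a$, so the potentially arbitrary values of $L$ on $\overline{S}_a$ play no role.
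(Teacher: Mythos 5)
Your proposal is correct and follows essentially the same route as the paper's proof: a simultaneous induction on $\bowtie$ and $\models$, with the modality case handled by extending an arbitrary simplex witness $Y$ to a facet $Z\supseteq Y$ via Lemma~\ref{lemma.upbowtie} (for definability) and Proposition~\ref{prop.star} (for truth), the converse direction being trivial since facets are simplices. The paper's proof is just the written-out version of exactly this plan.
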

\begin{proof}
The proof is by induction on $\phi$. Recall that $\kappa(\C,X) = (\kappa(\C),X)$.

\medskip
\noindent $
\C,X\bowtie p_a \quad \text{ iff }\\
a \in \chi(X) \quad \text{ iff (*)}\\
X \in \FF(C)_a \quad \text{ iff }\\
\kappa(\C),X \bowtie p_a$

\medskip
\noindent $(*)$: Note that $\FF(C)_a = \{ X \in \FF(C) \mid X \sim_a X\} = \{ X \in \FF(C) \mid a \in \chi(X) \}$.

\medskip
\noindent $
\C,X \bowtie \neg\phi \quad \text{ iff }\\
\C,X \bowtie \phi \quad \text{ iff (induction)}\\
\kappa(\C),X \bowtie \phi \quad \text{ iff }\\
\kappa(\C),X \bowtie \neg\phi$

\medskip
\noindent $
\C,X \bowtie \phi\et\psi \quad \text{ iff }\\
\C,X \bowtie \phi \text{ and } \C,X \bowtie \psi \quad \text{ iff (induction)}\\
\kappa(\C),X \bowtie \phi \text{ and } \kappa(\C),X \bowtie \psi \quad \text{ iff }\\
\kappa(\C),X \bowtie \phi\et\psi$

\medskip
\noindent $
\C,X \bowtie \M_a\phi \quad \text{ iff }\\
\C,Y \bowtie \phi \text{ for some } Y \text{ with } a \in \chi(X\inter Y) \quad \text{ iff ($\Imp$: $Y \subseteq Z$ and Lemma~\ref{lemma.upbowtie}, $\Pmi$: trivial) } \\
\C,Z \bowtie \phi \text{ for some } Z \in \FF(C) \text{ with } a \in \chi(X\inter Z) \quad \text{ iff (induction)} \\
\kappa(\C),Z \bowtie \phi \text{ for some } Z \in \FF(C) \text{ with } X\sim_a Z \quad \text{ iff } \\
\kappa(\C),X \bowtie \M_a \phi$

\medskip

We continue with the satisfaction relation (in the case negation, we use induction for $\bowtie$ and for $\models$, as in the previous proposition).

\medskip
\noindent $
\C,X\models p_a \quad \text{ iff }\\
a \in \chi(X) \text{ and } p_a \in \ell(X) \quad \text{ iff }\\
X \in \FF(C)_a \text{ and } p_a \in L(X) \quad \text{ iff }\\
\kappa(\C),X \models p_a$

\medskip
\noindent $
\C,X \models \neg\phi \quad \text{ iff }\\
\C,X \bowtie \phi \text{ and } \C,X \not\models \phi \quad \text{ iff (twice induction)}\\
\kappa(\C),X \bowtie \phi \text{ and } \kappa(\C),X \not\models \phi  \quad \text{ iff }\\
\kappa(\C),X \models \neg\phi$

\medskip
\noindent $
\C,X \models \phi\et\psi \quad \text{ iff }\\
\C,X \models \phi \text{ and } \C,X \models \psi \quad \text{ iff (induction)}\\
\kappa(\C),X \models \phi \text{ and } \kappa(\C),X \models \psi \quad \text{ iff }\\
\kappa(\C),X \models \phi\et\psi$

\medskip
\noindent $
\C,X \models \M_a\phi \quad \text{ iff }\\
\C,Y \models \phi \text{ for some } Y \text{ with } a \in \chi(X\inter Y) \quad \text{ iff ($\Imp$: $Y \subseteq Z$ and Proposition~\ref{prop.star}; $\Pmi$: trivial)} \\
\C,Z \models \phi \text{ for some } Z \in \FF(C) \text{ with } a \in \chi(X\inter Z) \quad \text{ iff (induction)} \\
\kappa(\C),Z \models \phi \text{ for some } Z \in \FF(C) \text{ with } X\sim_a Z \quad \text{ iff } \\
\kappa(\C),X \models \M_a \phi$
\end{proof}

Propositions~\ref{prop.corr2} and \ref{prop.corr3} established truth value preservation and definability preservation for corresponding simplicial models and local epistemic models. Given that validity is defined in terms of definability and truth, we may conclude that the logics (the set of validities) with respect to both classes of structures are the same. 

\begin{example}[Improper Kripke models] 
The requirement that local epistemic models are proper rules out some models that succinctly describe uncertainty of agents about other agents being alive or dead, but that have the same information content as some other, bigger models. Figure~\ref{figure.lozenge} demonstrates how two agents $a,b$ are uncertain whether a third agent $c$ is alive, and also how the $\sigma$ construction returns an isomorphic local epistemic model. The four-state model, let us call it $\model$, has the same information content as both two-state models in the middle row, that are improper. Whether the value of $p_c$ is true or false when $c$ is dead is irrelevant. That it became false is an artifact of the $\kappa$ construction: it does not label states with variables of agents that are dead, but as these variables are then undefined, their absence does not mean that they are false: $\model, 1_a1_b0_c \not\bowtie p_c$ so $\model, 1_a1_b0_c \not\models\neg p_c$.

Now consider the two-agent Kripke model below in the figure. It is improper. Agent $a$ is only uncertain whether agent $b$ is alive. We do not know how to represent the same information in a simplicial model. As this seems a sad note to end a nice story, let us make it into a more appealing cliffhanger instead. Instead of simplicial complexes that are sets of subsets of a vertex set, consider multisets of subsets. These called \emph{pseudocomplexes} in \cite{hilton_wylie_1960} and \emph{simplicial sets} in \cite{ledent:2019}. One can imagine allowing impure complexes that are such sets of multisets containing multiple copies of a simplex such that one copy is the face of a facet but another copy consisting of the same vertices is a facet in itself. The bottom Kripke model thus corresponds to such a `pseudo impure simplicial model' that consists of an $ab$-edge where $b$ is alive as well as a for $a$ indistinguishable $a$-vertex $v$ that is a facet and where $b$ is dead. Similarly we can represent the other improper models as pseudocomplexes consisting of a single triangle with one of the edges occurring twice.
\end{example}

\begin{figure}[ht]
\center
\scalebox{.75}{
\begin{tikzpicture}[round/.style={circle,fill=white,inner sep=1}]
\fill[fill=gray!25!white] (4,0) -- (4,2) -- (5.71,1) -- cycle;
\fill[fill=gray!25!white] (0.29,1) -- (2,0) -- (2,2) -- cycle;
\node[round] (b1) at (.29,1) {$1_c$};
\node[round] (b0r) at (5.71,1) {$1_c$};
\node[round] (c1) at (2,2) {$1_b$};
\node[round] (c1r) at (4,2) {$1_a$};
\node[round] (a0) at (2,0) {$1_a$};
\node[round] (a0r) at (4,0) {$1_b$};
%
\draw[-] (b1) -- (a0);
\draw[-] (b1) -- (c1);
\draw[-] (a0r) -- (b0r);
\draw[-] (b0r) -- (c1r);
\draw[-] (a0) -- (c1);
\draw[-] (a0r) -- (c1r);
\draw[-] (a0) -- (a0r);
\draw[-] (c1) -- (c1r);
\end{tikzpicture}
\quad $\stackrel \kappa \Imp$  \quad
\begin{tikzpicture}
\node (lz) at (-2,0.4) {\scriptsize$abc$};
\node (rz) at (1.9,0.4) {\scriptsize$abc$};
\node (tz) at (-.5,1.9) {\scriptsize$ab$};
\node (bz) at (-.5,-1.9) {\scriptsize$ab$};
\node (l) at (-1.5,0) {$1_a1_b1_c$};
\node (r) at (1.5,0) {$1_a1_b1_c$};
\node (t) at (0,1.5) {$1_a1_b0_c$};
\node (b) at (0,-1.5) {$1_a1_b0_c$};
\draw[-] (l) -- node[above left] {$b$} (t);
\draw[-] (l) -- node[below left] {$a$} (b);
\draw[-] (t) -- node[above right] {$a$} (r);
\draw[-] (b) -- node[below right] {$b$} (r);
\end{tikzpicture}
\quad $\stackrel\sigma\Imp$ \quad
\begin{tikzpicture}[round/.style={circle,fill=white,inner sep=1}]
\fill[fill=gray!25!white] (4,0) -- (4,2) -- (5.71,1) -- cycle;
\fill[fill=gray!25!white] (0.29,1) -- (2,0) -- (2,2) -- cycle;
\node[round] (b1) at (.29,1) {$1_c$};
\node[round] (b0r) at (5.71,1) {$1_c$};
\node[round] (c1) at (2,2) {$1_b$};
\node[round] (c1r) at (4,2) {$1_a$};
\node[round] (a0) at (2,0) {$1_a$};
\node[round] (a0r) at (4,0) {$1_b$};
\draw[-] (b1) -- (a0);
\draw[-] (b1) -- (c1);
\draw[-] (a0r) -- (b0r);
\draw[-] (b0r) -- (c1r);
\draw[-] (a0) -- (c1);
\draw[-] (a0r) -- (c1r);
\draw[-] (a0) -- (a0r);
\draw[-] (c1) -- (c1r);
\node (b1a) at (-.2,1.4) {\scriptsize$(\{1_a1_b1_c\},c)$};
\node (b0ra) at (6.2,1.4) {\scriptsize$(\{1_a1_b1_c\},c)$};
\node (c1a) at (1.3,2.4) {\scriptsize$(\{1_a1_b0_c,1_a1_b1_c\}, b)$};
\node (c1ra) at (4.7,2.4) {\scriptsize$(\{1_a1_b0_c,1_a1_b1_c\},a)$};
\node (a0a) at (1.3,-.4) {\scriptsize$(\{1_a1_b0_c,1_a1_b1_c\}, a)$};
\node (a0ra) at (4.7,-.4) {\scriptsize$(\{1_a1_b0_c,1_a1_b1_c\}, b)$};
\end{tikzpicture}
}

\bigskip
\bigskip

\scalebox{0.75}{
\begin{tikzpicture}
\node (010l) at (0,0.4) {\scriptsize$ab$};
\node (001l) at (3,0.4) {\scriptsize$abc$};
\node (010) at (.5,0) {$1_a1_b0_c$};
\node (001) at (3.5,0) {$1_a1_b1_c$};
\draw[-] (010) -- node[above] {$ab$} (001);
\end{tikzpicture}
\qquad
\begin{tikzpicture}
\node (010l) at (0,0.4) {\scriptsize$ab$};
\node (001l) at (3,0.4) {\scriptsize$abc$};
\node (010) at (.5,0) {$1_a1_b1_c$};
\node (001) at (3.5,0) {$1_a1_b1_c$};
\draw[-] (010) -- node[above] {$ab$} (001);
\end{tikzpicture}
}

\bigskip
\bigskip

\scalebox{0.75}{
\begin{tikzpicture}
\node (010l) at (0.1,0.4) {\scriptsize$a$};
\node (001l) at (3.1,0.4) {\scriptsize$ab$};
\node (010) at (.5,0) {$1_a1_b$};
\node (001) at (3.5,0) {$1_a1_b$};
\draw[-] (010) -- node[above] {$a$} (001);
\end{tikzpicture}
}

\caption{Different representations for two agents only uncertain whether a third agent is alive. Below, a model for one agent only uncertain whether a second agent is alive.}
\label{figure.lozenge}
\end{figure}

\section{Comparison to the literature} \label{section.further}

We will now discuss in greater detail relations of our proposal to work on awareness of agents, many-valued logics, other notions of knowledge and belief, correctness (of agents in relation to protocols), and temporal modal logic interpreted on simplicial complexes.

\paragraph{Awareness of agents.} Knowing that an agent is alive is like saying that you are aware of that agent. Various (propositional) modal logics combine knowledge and uncertainty with awareness and unawareness \cite{faginetal:1988,halpernR13,AgotnesA14a,hvdetal.jolli:2014,DitmarschFVW18}. However, in all those except \cite{hvdetal.jolli:2014} this is awareness of sets of \emph{formulas}, not awareness of \emph{agents}. One could consider defining awareness of agents in a given state as awareness of all formulas involving those agents, as a generalization of the so-called awareness of propositional variables (primitive propositions) of \cite{faginetal:1988}. Given a logical language $\lang$ for variables $P$, this means awareness of all formulas in the language $\lang|Q$ for some $Q \subseteq P$. Now if, in our setting, $Q = \Union_{a \in B} P_a$ for some $B \subseteq A$, then awareness of fragment $\lang|(\Union_{a\in B} P_a)$ means unawareness of any local variables of agents not in $B$. This is somewhat as if the agents in $B$ are alive and those not in $B$ are dead. However, for definability it not merely counts what agent's variables $p_a$ occur in a formula, but also what agent's modalities $K_a$ occur in a formula. That goes beyond straightforward \cite{faginetal:1988}-style awareness of propositional variables. And even if that were possible, it would be different from our essentially modal setting: whether a formula is defined in a given state is not a function of what agents are alive in that state, but a function of what agents are alive in indistinguishable states (and so on, recursively). Still, in a simplex wherein all the agents occurring in a formula $\phi$ are alive, the formula is defined: we recall  Lemma~\ref{lemma.aphibowtie} stating exactly that: $A_\phi \subseteq \chi(X)$ implies $\C,X \bowtie \phi$.  

\paragraph{Many-valued logic.}
Our semantics is a three-valued modal logic with a propositional basis known as Kleene's weak three-valued logic: the value of any binary connective is unknown if one of it arguments is unknown \cite{sep-logic-manyvalued}. Modal logical (including epistemic) extensions of many-valued logics are found in \cite{Morikawa89,Fitting:1992,odintsovetal:2010,RivieccioJJ17} (there is no relation to embeddings of three-valued propositional logics into modal logics \cite{KooiT13}). This sounds promising, however, before a link can be made with our work there are a lot of caveats. In the three-valued logics of this crowd the third value stands for `both true and false', and the four-valued (so-called Belnapian) logics of this crowd add to this a fourth value with the intended meaning `neither true nor false', in other words `unknown'. These are paraconsistent logics. Unknown and undefined are similar but not the same: a disjunction $p \vel q$ is true if one of the disjuncts is true and the other is unknown, but a disjunction $p \vel q$ is undefined if one of the disjuncts is true and the other is undefined. More importantly, in such many-valued modal logics the modal extension tends to be independent from the many-valued propositional base. But not in our case: whether a formula is defined depends on the modalities occurring in it, not only on the propositional variables.

\paragraph{Belief and knowledge.} The logic {\bf S5}$^\top$ of knowledge on impure complexes is `almost' {\bf S5}. Is it yet another epistemic notion that is `almost' like knowledge and hovering somewhere between belief and knowledge? It is not. It is not Hintikka's favourite {\bf S4} \cite{hintikka:1962} nor {\bf S4.4}  \cite{stalnaker:2005,HalpernSS09a}, as axiom {\bf 5} ($\M_a\phi \imp K_a \M_a\phi$) is valid (Proposition~\ref{prop.standard}). It is not {\bf KD45} \cite{handbookintro:2015} either, so-called `consistent belief', as {\bf T} ($K_a \phi \imp \phi$) is valid (Proposition~\ref{prop.standard}). However, let us recall the peculiarity of {\bf T} in our setting: that $K_a \phi \imp \phi$ is valid does {\bf not} mean that if $K_a\phi$ is true then $\phi$ is true. It means that if $K_a\phi$ is true and $\phi$ is defined then $\phi$ is true. Which is the same as saying that if $K_a\phi$ is true then $\phi$ is not false. This seems to come close to the motivation of `belief as defeasible knowledge' in \cite{MosesS93}. Let us consider that.

We recall $B^\alpha \phi$ of \cite{MosesS93}: the agent believes $\phi$ on \emph{assumption} $\alpha$. Could this assumption not be that `$\phi$ is defined'? One option they consider for $B^\alpha \phi$ is to define this as $K(\alpha \imp \phi) \et \M \phi$  \cite[Def.\ 3, page~304]{MosesS93}. This appears to come close to our derived semantics of knowledge $K \phi$ (Proposition~\ref{lemma.diamond}) as, in the incarnation for Kripke models: `in all indistinguishable states, whenever $\phi$ is defined, it is true, and there is an indistinguishable state where $\phi$ is true'. But, although coming close, it is not the same. Their results are for assumptions $\alpha$ that are \emph{formulas}, and in a binary-valued (true/false) semantics. Whereas our definability assumption is a feature grounding a three-valued semantics. Still, our motivation is very much that of \cite{MosesS93} as well as \cite{stalnaker:2005,HalpernSS09a}: how to define belief from knowledge, rather then knowledge from belief.

Semantics of knowledge for impure complexes were also considered in \cite{diego:2019} and in \cite{goubaultetal:2021}. 

In \cite{diego:2019} it is observed that impure simplicial models correspond to Kripke models where dead agents (crashed processes) do not have equivalence accessibility relations, and as this is undesirable for reasoning about knowledge, projections are proposed from impure complexes to pure subcomplexes for the subset of agents that are alive \cite[Section 3.3]{diego:2019}, and from impure complexes to Kripke models for the subset of agents that are alive \cite[Section 3.4]{diego:2019}. One could imagine a logic based on this observation but this would then not allow agents that are alive to reason about agents that are dead, which seems restrictive. 

In \cite{goubaultetal:2021}, the authors propose an epistemic semantics for impure complexes for which they also show correspondence with (certain) Kripke models with symmetric and transitive relations. There are interesting similarities and differences with our approach. In \cite{goubaultetal:2021}, the Kripke/simplicial correspondence is shown between Kripke frames (without valuations $L$) and chromatic simplicial complexes (without valuations $\ell$), not between Kripke models and simplicial models. Basically, the construction is the same as ours.\footnote{We acknowledge \'Eric Goubault kindly sharing older unpublished work on this matter.} The differences appear when we go from frames to models, namely $(i)$ in their \emph{different notion of valuation on complexes}, and $(ii)$ in their \emph{different knowledge semantics}. To illustrate that, we recall once more the impure complex $\C$ from Example~\ref{example.xxx} (page~\pageref{example.xxx}) consisting of an edge $X$ and a triangle $Y$ intersecting in an $a$-vertex. First, in \cite{goubaultetal:2021}, valuations $\ell$ assign variables not to vertices but to facets. We therefore have a choice whether $\ell(X) = \{p_b\}$ so that $p_b$ is true and $p_a$ and $p_c$ are false, or $\ell'(X) = \{p_b,p_c\}$ so that $p_b$ and $p_c$ are true and $p_a$ is false. In our semantics $p_c$ is undefined in $X$ and therefore neither true nor false. Second, in \cite{goubaultetal:2021}, definability is not an issue, any formula $\phi$ is always defined, and an agent knows the proposition expressed by $\phi$, if $\phi$ is true in all facets containing that agent's vertex. Formulas are always defined because local variables always have a value, also in facets of smaller than maximal dimension, like the edge $X$ in $\C$. There are therefore two different impure complexes like $\C$, one with $\ell$ where $K_a p_c$ is false (as $p_c$ is false in $X$), and another one with $\ell'$ where $K_a p_c$ is true (as $p_c$ is true in $X$ and in $Y$). In our semantics, for $K_a p_c$ to be true, it suffices that $p_c$ is true in triangle $Y$. This is consistent with, say, edge $X$ having been a triangle where $p_c$ was true before $c$ crashed, and also consistent with edge $X$ having been a triangle where $p_c$ was false before $c$ crashed. The difference between the respective semantics becomes notable when applied to the `benchmark' modelling Example~\ref{figure.sergio} from \cite{herlihyetal:2013}: for $a$ to know that the value of $c$ is $1$, variable $p_c$ must be true on edges lacking a $c$ vertex. This reflects that agent $a$ already received confirmation of $c$'s value before becoming uncertain whether $c$ subsequently crashed.\footnote{We acknowledge several discussions on this matter with Sergio Rajsbaum and with Ulrich Schmid.} Maybe, such modelling desiderata can also be explicitly realized with a (event/action) history-based semantics for complexes just as for Kripke models in \cite{jfaketal.JPL:2009}. 

The logic for impure complexes of \cite{goubaultetal:2021} is axiomatized by the modal logic {\bf KB4} (where {\bf B} is the axiom $\phi \imp K_a \M_a \phi$). The logic {\bf S5}{+}{\bf L} of \cite{goubaultetal:2018} is now not the special case for impure complexes, as in Section~\ref{section.pure}: because valuations assign variables to facets and not to vertices, the locality axiom {\bf L} ($K_a p_a \vel K_a \neg p_a$) is invalid for the semantics of \cite{goubaultetal:2021}.

The semantics of \cite{goubaultetal:2021} can explicitly treat life and death in the language. If $a$ is dead then $K_a \bot$ is true (where $\bot$ is the always false proposition) and if $a$ is alive then $\neg K_a \bot$ is true. Alive agents have correct beliefs: $\neg K_a \bot \imp (K_a \phi \imp \phi)$ is valid. These are appealing properties. We recall that in our semantics we cannot formalize that agents are alive or dead. We consider this a feature rather than a problem. In the next section, in the paragraph `Death in the language' we show that adding global variables for `agent $a$ is alive' to the language (with appropriate semantics) makes our logic more expressive. 

It seems of interest to determine the expressivity of a version of \cite{goubaultetal:2021} with a primitive modality meaning $\neg K_a \bot \et K_a \phi$, for `what alive agents know', that combines existential and universal features, somewhat more akin to our knowledge semantics and to the `hope' modality in \cite{abs-2106-11499,KuznetsP0F19} modelling correctness, discussed below.
 
\paragraph{Alive and dead versus correct and incorrect.}
Instead of being alive or dead (crashing), processes may merely be correct or incorrect, for example as a consequence of unreceived messages or Byzantine behaviour \cite{abs-2106-11499,KuznetsP0F19}. We anticipate that certain impure complexes might be equally useful to model knowledge and uncertainty about correct/incorrect agents analogously to modelling knowledge and uncertainty about alive/dead agents.

\paragraph{A temporal modal logic on simplicial complexes.} 
In \cite{loretietal:2023} a temporal modal logic with `next time' and `until' modalities is presented  that is interpreted on non-chromatic simplicial complexes, in a line of research on spatial modal logics  related to epistemic logics interpreted on topological structures \cite{DabrowskiMP96,jfaketal.space:2004,parikhetal:2007}. The facets of their complexes can have different dimensions just as in the impure complexes of our contribution. The vertices of their complexes are not decorated with agents or with local variables (they are not chromatic and not local). However, the authors envisage an epistemic topological interpretation wherein the set of vertices represents a set of (all different) agents. Under that interpretation, and using their notion called spatial adjacency (non-empty intersection of faces), the semantics of `next time' correspond to the semantics of `somebody knows' \cite{AgotnesW21}, and the semantics of `until' correspond to that of relativized common knowledge \cite{jfaketal.lcc:2006,hvdetal.del:2007}. 

\section{Further research} \label{section.fff}

\paragraph*{Completeness of S5$^\top$.}  
The completeness of an extension of the axiomatization {\bf S5}$^\top$ is shown in \cite{rojo:2023}. It involves the construction of a canonical simplicial model for the three-valued semantics.

\paragraph*{Bisimulation.}
Bisimulation between pure complexes is presented in \cite{ledent:2019,hvdetal.simpl:2022,goubaultetal_postdali:2021}. (Bisimulation between certain non-chromatic complexes is also presented in \cite{loretietal:2023}.) A generalization of this notion of bisimulation to one between impure complexes (and analogously between local epistemic models) is proposed in \cite{Ditmarsch21}. Unfortunately it is faulty and it has therefore been removed from this extended version. It may be of interest that we sketch some issues there. Consider the two simplicial models below.

\begin{center}
\begin{tikzpicture}[round/.style={circle,fill=white,inner sep=1}]
\fill[fill=gray!25!white] (2,0) -- (4,0) -- (3,1.71) -- cycle;
\node[round] (b1) at (0,0) {$1_b$};
\node[round] (b0) at (4,0) {$1_b$};
\node[round] (c1) at (3,1.71) {$1_c$};
\node[round] (a0) at (2,0) {$1_a$};
\node (f1) at (3,.65) {$Y$};
\draw[-] (b1) -- node[above] {$X$} (a0);
\node (cc) at (-1,0) {$\C:$};%
\draw[-] (a0) -- (b0);
\draw[-] (b0) -- (c1);
\draw[-] (a0) -- (c1);
\end{tikzpicture}
\qquad\qquad\qquad
\begin{tikzpicture}[round/.style={circle,fill=white,inner sep=1}]
\fill[fill=gray!25!white] (2,0) -- (4,0) -- (3,1.71) -- cycle;
\node[round] (b0) at (4,0) {$1_b$};
\node[round] (c1) at (3,1.71) {$1_c$};
\node[round] (a0) at (2,0) {$1_a$};
\node (f1) at (3,.65) {$Y'$};
\node (cc) at (1,0) {$\C':$};%
\draw[-] (a0) -- (b0);
\draw[-] (b0) -- (c1);
\draw[-] (a0) -- (c1);
\end{tikzpicture}
\end{center}
We claim that the pointed models $(\C,Y)$ and $(\C',Y')$ have the same information content with respect to our language and semantics. One can show that $(\C,Y)$ and $(\C',Y')$ are {\em modally equivalent} in the sense that for all $\phi \in \lang_K$: $\C,Y \bowtie \phi$ iff $\C',Y' \bowtie \phi$, $\C,Y \models \phi$ iff $\C',Y' \models \phi$, and $\C,Y \models \neg \phi$ iff  $\C',Y' \models\neg \phi$. This is the obvious sense in our three-valued semantics: whatever of the three values, they should correspond. 
Here, it may be surprising that the $X$ edge in $\C$ does not play havoc: whatever agent $a$ knows about $c$ is witnessed by the facet $Y$, we do not `need' $X$ for that. Therefore it can be missed in $\C'$ without pain. It is harder to come up with a notion of bisimulation such that $(\C,Y)$ is bisimilar to $(\C',Y')$. %
The issue is what to do with the $X$ facet in $\C$ when doing a {\bf forth} step for agent $a$ in $Y$. We note that $\M_b 1_c$ is undefined in $(\C,X)$ and that there is therefore no edge in $\C'$ that is modally equivalent to $X$, as this requires being equidefinable. It therefore seems that there should also be no edge in $\C'$ that is bisimilar to $X$. One can resort to ad-hoc solutions for this particular example. Why not say that $\C$ is a simulation of $\C'$, instead? But it is easy to come up with more complex structures that both contain different impurities such that neither is a simulation of of the other.
Clearly we want the Hennessy--Milner property that bisimilarity corresponds to modal equivalence.

\paragraph*{Death in the language.}
We cannot formalize `$a$ knows that $b$ is dead' or `$a$ knows that $b$ is alive' or even `$a$ considers it possible that $b$ is dead/alive' in the current logical language. This was on purpose: we targeted the simplest epistemic logic. But it is quite possible. To the set of local variables $P = \Union_{a \in A} P_a$ we add a set $A_\downarrow := \{a_\downarrow \mid a \in A\}$ of {\em global variables} $a_\downarrow$ denoting ``$a$ is alive''. We further define by abbreviation $a_\uparrow := \neg a_\downarrow$, for ``$a$ is dead''. The upward pointing arrow is to suggest that dead agents go to heaven. Let now a simplicial model $\C = (C,\chi,\ell)$ be given. First, we require for $v \in \VV(C)$ that $a_\downarrow \in \ell(v)$ iff $\chi(v) = a$. This still seems obvious, as agent $a$ colouring $v$ is alive. This also entails that $a_\downarrow \in \ell(Y)$ for any $Y \in C$ with $v \in Y$. Next, for any \emph{facet} $X \in \FF(C)$ we define $\C,X \bowtie a_\downarrow$ (that is, always). Otherwise, $a_\downarrow$ is undefined. And we then stipulate for such $X$ that 
$\C,X \models a_\downarrow$ iff $\C,X \bowtie a_\downarrow$ and $a_\downarrow \in \ell(X)$. Under these circumstances, a validity is $K_a a_\downarrow$, formalizing that an agent knows that it is alive. 

The precise impact on the logic is unclear to us. Do we still have upwards and downwards monotony? Is it sufficient to add axioms $K_a a_\downarrow$ to the axiomatization {\bf S5}$^\top$? It has a big impact on expressivity. The two models $(\C,Y)$ and $(\C',Y')$ in the prior discussion on bisimulation, that were indistinguishable in our semantics, can now be distinguished by $\M_a c_\uparrow$: on the left, agent $a$ considers it possible that agent $c$ is dead, but on the right, she knows that $c$ is alive.

\paragraph*{Impure simplicial action models.}
It is straightforward to model updates on simplicial complexes as \emph{action model} execution \cite{baltagetal:1998}, generalizing the results for pure complexes of  \cite{goubaultetal:2018,goubaultetal_postdali:2021,ledent:2019,diego:2019,diego:2021}.  We envisage (possibly impure) \emph{simplicial action models} representing incompletely specified tasks and algorithms, for example Byzantine agreement \cite{DworkM90} on dynamically evolving  (with agents dying) impure complexes. It should be noted however that our framework does not enjoy the property that positive formulas (intuitively, those without negations before $K_a$ operators; corresponding to the universal fragment in first-order logic) are preserved after update. It might therefore be challenging to generalize results for pure complexes employing such `positive knowledge gain' after update \cite{goubaultetal:2018,goubaultetal_postdali:2021} to truly failure-prone distributed systems modelled with impure complexes.

\paragraph*{Distributed knowledge and common knowledge.}
Various notions of group knowledge can enrich the logical language, such as mutual knowledge, distributed knowledge, and common knowledge (see \cite{handbookintro:2015} for an overview) but also more (currently) exotic notions such as `somebody knows' \cite{AgotnesW21} and common distributed knowledge  \cite{vanwijk:2015,Baltag20,hvdetal.simpl:2022}. Semantics of common knowledge and (common) distributed knowledge on pure simplicial models have been presented in \cite{ledent:2019,hvdetal.simpl:2022}. An axiomatization involving distributed knowledge on a further generalization of impure simplicial models corresponding to so-called simplicial sets has been proposed in \cite{goubaultetal:2023}. Combining group epistemics with dynamics such as action models may lead to further challenges for axiomatizations, and seems promising to address connectivity problems and results in distributed computing.


\section*{Acknowledgements} This work is the extended version of \cite{Ditmarsch21}. We acknowledge interactions and comments from: Armando Casta\~{n}eda, J\'er\'emy Ledent, \'Eric Goubault, Yoram Moses, Sergio Rajsbaum, Rojo Randrianomentsoa, David Rosenblueth, Ulrich Schmid, and Diego Vel\'azquez. Diego's exploration of the semantics of knowledge on impure complexes \cite{diego:2019} inspired this investigation. Hans warmly recalls his stay at UNAM on Sergio's invitation where his adventures in combinatorial topology for distributed computing got started, and his stay at TU Wien on Ulrich's invitation where these adventures were continued. We thank the reviewers for their comments and their pointers to the literature. Roman Kuznets is supported by the Austrian Science Fund (FWF) project ByzDEL (P 33600).

\bibliographystyle{plain}
\bibliography{biblio2023}

\end{document}